\newcommand{\abs}[1]{\left| #1 \right|} 
\newcommand{\irm}{i}
\newcommand{\1}{\mathds{1}}
\newcommand{\id}{\1}
\newcommand{\RR}{\mathbb{R}}
\newcommand{\CC}{\mathbb{C}}
\newtheorem{theorem}{Theorem}
\newtheorem{lemma}[theorem]{Lemma}
\newtheorem{proposition}[theorem]{Proposition}
\newtheorem{corollary}[theorem]{Corollary}
\newtheorem{definition}[theorem]{Definition}
\DeclareMathOperator{\tr}{tr}
\newcommand{\ket}[1]{\left.\left|{#1}\right.\right\rangle}
\newcommand{\bra}[1]{\left.\left\langle{#1}\right.\right|}
\def\L{L}
\def\N{L}
\newcommand\f{\hat f}
\newcommand\fd{\hat f^\dagger}
\newcommand\bc{\hat b}
\newcommand\bd{\hat b^\dagger}
\begin{document}
\begin{center}{\Large \textbf{
{Equilibration towards generalized Gibbs ensembles\\ in non-interacting theories}
}}\end{center}
\begin{center}
M.\ Gluza\textsuperscript{1},
J.\ Eisert\textsuperscript{1,2,3},
T.\ Farrelly\textsuperscript{4*}
\end{center}
\begin{center}
{\bf 1} {\small Dahlem Center for Complex Quantum Systems, 
    Freie Universit\"{a}t Berlin, 
    14195 Berlin, 
    Germany}
    \\
 {\bf 2}    {\small Department of Mathematics and Computer Science, Freie Universit{\"a}t Berlin, 14195 Berlin, Germany}
 \\
 {\bf 3}        {\small Helmholtz-Zentrum Berlin f{\"u}r Materialien und Energie, 14109 Berlin, Germany}
\\
{\bf 4} {\small Institut f\"{u}r Theoretische Physik, Leibniz Universit{\"a}t Hannover, 
  30167 Hannover, Germany}
  \\
* farreltc@tcd.ie
\end{center}

\begin{center}
\today
\end{center}

\section*{Abstract}
{\bf
Even after almost a century, the foundations of quantum statistical mechanics are still not completely understood. In this work, we provide a precise account on these foundations for a class of systems of paradigmatic importance that appear frequently as mean-field models in condensed matter physics, namely non-interacting lattice models of fermions (with straightforward extension to bosons). We demonstrate that already the translation invariance of the Hamiltonian governing the dynamics and a finite correlation length of the possibly non-Gaussian initial state provide sufficient structure to make mathematically precise statements about the equilibration of the system towards a generalized Gibbs ensemble, even for highly non-translation invariant initial states far from ground states of non-interacting models. Whenever these are given, the system will equilibrate rapidly according to a power-law in time as long as there are no long-wavelength dislocations in the initial second moments that would render the system resilient to relaxation. Our proof technique is rooted in the machinery of Kusmin-Landau bounds. Subsequently, we numerically illustrate our analytical findings 
by discussing quench scenarios with an initial state corresponding to an Anderson insulator observing power-law equilibration. We discuss the implications
of the results for the understanding of current quantum simulators, both in how one can understand the behaviour of equilibration in time, as well as
concerning perspectives for realizing distinct instances of generalized Gibbs ensembles in optical lattice-based architectures.
}

\vspace{10pt}
\noindent\rule{\textwidth}{1pt}
\tableofcontents\thispagestyle{fancy}
\noindent\rule{\textwidth}{1pt}
\vspace{10pt}
\section{Introduction}
Over more than a century, it has become clear that the methods of statistical mechanics work incredibly well in a vast range of physical situations.  But, to this day, a complete understanding of \textit{why} this is the case remains elusive.  Based on both experimental and theoretical work, a good deal of progress has already been made \cite{EFG14,PSSV11,GHRRS15,GE15,MIKU17,KTLRSPG16}.  Nevertheless, the key objective, finding a set of physical assumptions from which we can demonstrate that quantum systems reach thermal equilibrium, has yet to be achieved.  And there are exceptional cases where this simply does not occur, which typically involve the existence of locally conserved quantities.

\begin{figure}[ht!]
\centering
 \resizebox{8.5cm}{!}{\includegraphics{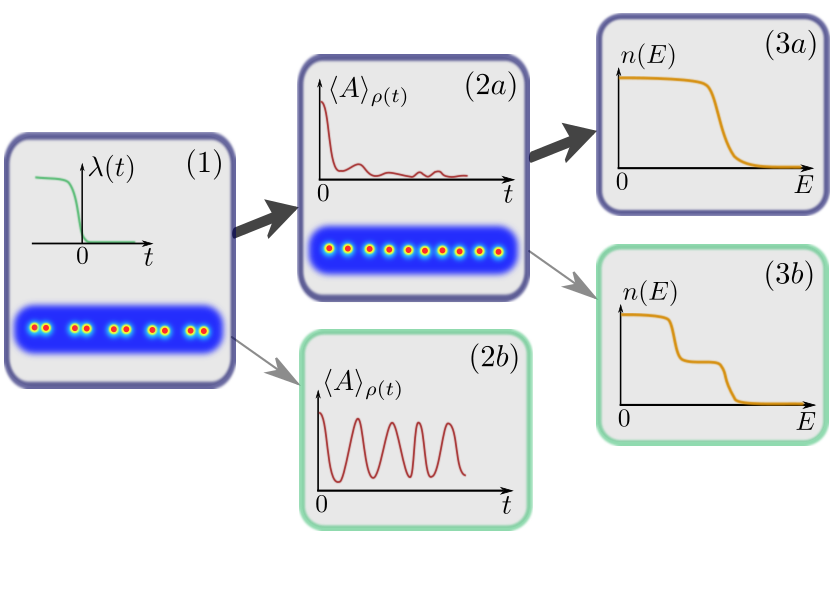}}
    \footnotesize{\caption[Thermalization]{Thermalization and equilibration are often studied in a dynamical quench scenario, where a parameter in the Hamiltonian is suddenly quenched to zero, which knocks the system out of equilibrium (1).  The subsequent process of (generalized) thermalization has two components.  First, the system must relax to a steady state (2a) with respect to meaningful quantities.  Exceptions to this are typically characterized by oscillations, as in (2b).   Second, if equilibration occurs, the equilibrium state must be thermal (exemplified here by the Fermi-Dirac distribution in (3a)), or correspond to a generalized Gibbs ensemble (3b) in case further constants of
    motion are relevant.\label{fig:thermalization}}}
\end{figure}

The process whereby a system locally relaxes to a thermal state or a generalized Gibbs ensemble (which we call generalized thermalization) can be broken down into two components (see Fig.~\ref{fig:thermalization}).  
The first is simply that it equilibrates, meaning the system spends most of the time locally close to some time-independent steady state.  
This should be true at least for large classes of important observables, e.g., local observables. 
A crucial aspect (sometimes overlooked) is that the equilibration time for this must be realistic:\ in experiments, we can observe physical systems relaxing 
over reasonable times only, which is something that needs to be appreciated.  The second component in the case of thermalization is that the equilibrium steady state has no detailed memory of the initial state (beyond, e.g., temperature or chemical potential), namely it is a thermal state.

It has become clear, however, that some specific classes of physical systems do not equilibrate \cite{KWW06,BSKL17,TMASP18,LZMSZZXZCS18}, at least over the times one can assess in the laboratory.  Furthermore, some systems equilibrate but not to a thermal state, instead retaining 
some memory of the initial state \cite{SHBLFVASB15,LEGRSKRMGS15,AABS18}.  A major distinction arises in this context between non-integrable systems, which indeed are expected to equilibrate to a 
thermal state, and integrable systems, which are expected not to fully thermalize, but to equilibrate to generalized Gibbs 
ensembles \cite{CDEO08,Rigol_etal08,RigolFirst,CauxEssler,VR16,calabrese2012quantum,Wouters2014a}.
Many-body localized systems \cite{SHBLFVASB15,huse2014phenomenology}, in which
disorder and interactions interplay in a subtle manner, can be seen as being reminiscent of the latter systems, as instances of quantum systems
which also do not thermalize.
 In both cases, local (or quasi-local) conserved quantities play a major role.  Whenever initial states with inequivalent values of these conserved quantities are experimentally accessible, the resulting steady states will retain a memory of these differences that can be measured.  A rigorous \emph{dynamical derivation} of generalized thermalization must therefore overcome several difficulties arising from these observations:\ we must identify what properties most physical systems have that lead them to thermalize or relax to a generalized Gibbs ensemble.

There are several different theoretical approaches to this challenging long-standing problem.  One is to focus on what can be proven for abstract quantum systems with as few assumptions as possible \cite{GE15,Tasaki98,Reimann08,LPSW09}.  In this case, powerful results have been found, though often without reproducing the relevant 
equilibration times \cite{short2012quantum,RK12,GHT13,malabarba2014quantum}.  Another approach is to use randomness to attack the problem \cite{MRA13,BCHHKM12,Cramer12,VZ12,UWE12,Rei16}.  Suggestions for the mechanism underlying the relatively fast process of equilibration in the general setting have been offered \cite{WGKE17,dOCJLR17}, but a consensus together with more concrete estimates for equilibration times have yet to emerge.
 
A second approach is to build the analysis on specific physical settings (e.g., the Bose-Hubbard model in the free superfluid regime).  But even here there is a dearth of results justifying why the observed times are so short in comparison to the general bounds.  Some exceptions in specific cases are, amongst others, presented in Refs.~\cite{CDEO08,GLMSW15,farrelly2016equilibration}.  In particular, studying quenches has been particularly rewarding \cite{EF16}.  In this context, numerical studies often provide useful insights \cite{EFG14,PSSV11,GHRRS15,GE15,Rigol2007,PhysRevA.80.053607,cramer2008exploring,PhysRevE.88.032913,PhysRevB.81.115131,PhysRevLett.112.065301,THKS15,Santos17,Khodja13}.

In this work, we first analyse quenches of lattice fermions (and -- less explicitly -- bosons) to non-interacing 
Hamiltonians.  Our first main result is that they locally equilibrate quickly.
Two tools we employ are the Kusmin-Landau bound \cite{Mordell58} and fermionic Gaussification from Ref.~\cite{GKFGE16}.  The latter showed that non-interacting
 fermions on a lattice locally Gaussify, meaning the state becomes locally indistinguishable from a Gaussian state for relatively long times.  However, this Gaussian state may be time dependent.  Not only do we show that one of the assumptions of Ref.~\cite{GKFGE16} is unnecessary for Gaussification, but we also show that the Gaussian state that the system approaches will be time independent.  This is a proof of equilibration over realistic times for these models, and it also proves that the equilibrium state can be described by a \emph{generalized Gibbs ensemble (GGE)}. 
 
 In fact, our work can be seen as a comprehensive rigorous proof of a convergence to
 generalized Gibbs ensembles, bringing the program initiated in Refs.\ \cite{CDEO08,CE10,calabrese2011quantum,calabrese2012quantum} 
 to an end, by widely generalizing the previous results, while keeping the discussion fully rigorous.
We then turn to discussing the question of whether one does indeed need the extra degrees of freedom of a GGE (as opposed to simply a thermal state).  We show numerically that initial states corresponding to thermal states of an Anderson insulator equilibrate after quenching the on-site disorder to a thermal state (or grand canonical state), except in cases with highly correlated noise.  In this latter case, the equilibrium state must be described by a GGE. 
It is easy to see that if one has strongly inhomogeneous initial conditions,
the equilibration times can be of the order of the system size, see, e.g., Ref.\  \cite{farrelly2016equilibration}.
Finally, we consider some possibilities for realizing distinct instances of generalized Gibbs ensembles in optical lattices and systematically studying their stability in the presence of interactions.

\section{Sufficient conditions for 
local equilibration to a generalized Gibbs ensemble}

\subsection{Notions of equilibration}
A quantum system locally equilibrates if, for all times $t$ between some relaxation time $t_0$ and some recurrence time $t_\text{R}$, the state at time $t$ is practically indistinguishable from the time-averaged state $\hat \varrho^{(\rm eq)}$ with respect to local observables
\cite{GE15}.  In other words, the extent of non-equilibrium fluctuations is bounded by some small $\epsilon>0$ such that for every local observable $\hat A$ we have
\begin{align}
  |\langle \hat A \rangle_{ \hat \varrho(t)} - \langle  \hat A \rangle_{\hat \varrho^{(\rm eq)}}| \le \epsilon
  \label{eq:eps}
\end{align}
for all $t\in[t_0,t_\text{R}]$, where $\langle \hat A \rangle_{ \hat \varrho}=\tr[\hat \varrho\hat A]$. 
Clearly, whenever a system equilibrates, the equilibrium state must be the infinite time average
\begin{align}
  \hat \varrho^{(\rm eq)} = \lim_{T\rightarrow \infty} \frac{1}{T}\int_0^T \!\!\rm{d}t\, \hat \varrho(t)\ .
\end{align}
While it is highly plausible that systems equilibrate, it is significantly more challenging to identify the equilibration time $t_0>0$. 
When equilibration does indeed occur, 
a most natural question is how to precisely characterize this equilibrium state.
Statistical physics is built upon the assumption that systems equilibrate to a thermal state.  The thermal (or Gibbs) state of a quantum system with Hamiltonian $\hat H$ is defined to be
\begin{align}
  \hat \varrho^{(\beta,\mu)} = \frac{e^{-\beta( \hat H-\mu \hat N) } }{\tr[ e^{-\beta( \hat H-\mu\hat N) } ]}
  \label{eq:Gibbs}
\end{align}
where $\beta>0$ is the inverse temperature, which fixes the value of the expected energy, $\mu$ is the chemical potential, which determines 
the expected particle number, and $\hat N$ is the particle number operator.
We say that a system with initial state $\hat \varrho$ \emph{thermalizes} locally if during the evolution generated by $\hat H$ it equilibrates in the sense defined above and if $\hat \varrho^{(\rm eq)}$ is locally indistinguishable from the thermal state of $\hat H$ (for some value of $\beta$ and $\mu$).
For the case of non-interacting, quasi-free models, thermal states of quadratic Hamiltonians are called Gaussian or quasi-free and are the target equilibrium ensemble upon quenches to quasi-free dynamics.

\subsection{Statement of the main result}
Our main result is the following.
Take a system of non-interacting fermions on a line described by a translation invariant (with periodic boundary conditions) short-ranged  Hamiltonian.
Assume that the couplings are generic such that there are no points with coinciding roots of the derivatives  $E^{\prime\prime}(p)=E^{\prime\prime\prime}(p)=0$ of the dispersion relation $E$.
Initialize the system in a state with finite correlation length and non-resilient second moments (defined presently).
Then local equilibration occurs according to the following statement.

\begin{theorem}[Emergence of statistical mechanics]\label{th:main}
There exist a constant relaxation time $t_0$ and a recurrence time $t_R$ proportional to the system size such that for all times $t\in[t_0,t_R]$ the system locally equilibrates to a Gaussian generalized Gibbs ensemble, with
\begin{align}
|\langle \hat A \rangle_{ \hat \varrho(t)} - \langle  \hat A \rangle_{\hat \varrho^{(\rm eq)}}|\le C t^{-\gamma} 
\end{align}
for some $C, \gamma >0 $ independent of the system size.  That is, we can set $\epsilon = C t_0^{-\gamma}$ in Eq.~\eqref{eq:eps} for $t_0 \le t\le t_R$.
\end{theorem}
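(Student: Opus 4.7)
The plan is to combine the local Gaussification machinery of Ref.~\cite{GKFGE16} with oscillatory-sum estimates of Kusmin-Landau and van der Corput type. First I would invoke Gaussification to reduce the problem for a general bounded local observable $\hat A$ to a statement about the covariance matrix of the state: it suffices to show that every entry of the time-evolved covariance matrix $C_{x,y}(t)=\langle \hat f_x^\dagger \hat f_y\rangle_{\hat \varrho(t)}$ converges at a uniform power-law rate to its infinite-time average $C_{x,y}^{(\mathrm{eq})}$, with constants independent of the system size $L$. Once this is established, $\hat \varrho^{(\mathrm{eq})}$ can be identified as the Gaussian state with covariance $C^{(\mathrm{eq})}$, which by construction is diagonal in the momentum basis of $\hat H$ and is therefore a generalized Gibbs ensemble with respect to the conserved mode occupations $\hat n_p$.

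Next I would exploit the translation invariance and quadratic nature of $\hat H$ to diagonalize the single-particle dynamics. Each momentum mode evolves with a phase $\e^{-\irm E(p)t}$, so that
\begin{align}
C_{x,y}(t) = \frac{1}{L}\sum_{p,q} \widetilde C_{p,q}(0)\, \e^{\irm(E(p)-E(q))t}\, \e^{-\irm(px - qy)},
\end{align}
where $\widetilde C_{p,q}(0)$ is the momentum-space initial covariance. The diagonal sector $p=q$ is time-independent and produces exactly $C_{x,y}^{(\mathrm{eq})}$, while the off-diagonal sector carries the entire non-equilibrium content. The finite correlation length of the initial state controls the regularity of $\widetilde C_{p,q}(0)$ as a function of the momentum variables and, crucially, bounds a suitable total-variation norm uniformly in $L$; the non-resilience hypothesis is what rules out concentrations of amplitude near the critical locus $p\approx q$ of the phase that would sabotage oscillatory cancellation.

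The technical heart of the argument is then to bound the off-diagonal contribution by applying Kusmin-Landau and its higher-order van der Corput refinements to the phase $(p,q)\mapsto E(p)-E(q)$. Away from stationary points of this phase, a Kusmin-Landau bound applied iteratively in the two momenta already yields strong decay; near isolated points where the first or second derivatives of $E$ vanish, the standard stationary-phase weights $t^{-1/2}$ at non-degenerate critical points and $t^{-1/3}$ at cubic inflection points take over. The genericity hypothesis that $E''$ and $E'''$ have no common zero is precisely what guarantees that no worse exponent is ever needed, so summing the contributions and using Abel summation against the bounded-variation amplitude $\widetilde C_{p,q}(0)$ gives a uniform bound $C t^{-\gamma}$ with some $\gamma>0$. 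The recurrence time $t_R\propto L$ arises from the discrete momentum grid, whose spacing $2\pi/L$ has to remain small compared with the oscillation scale $1/t$ for the continuous Kusmin-Landau analysis to survive.

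The main obstacle I expect is organizing the two-variable oscillatory sum so that these bounds can be run uniformly in $L$ while properly accounting for the critical \emph{curve} (rather than isolated critical points) of the phase $E(p)-E(q)$ sitting on the diagonal $p=q$. Peeling off a thin diagonal neighbourhood and showing that the non-resilience hypothesis furnishes exactly the amplitude suppression needed there, while treating the bulk via the standard stationary-phase toolbox, is, I expect, the delicate step. Transferring the resulting covariance-level decay to arbitrary local observables through the Gaussification of Ref.~\cite{GKFGE16} is a second but essentially routine step, and relies on the strengthened version of that result announced in the introduction.
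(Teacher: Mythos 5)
Your high-level plan — reduce to the covariance matrix via Gaussification, split off the diagonal (conserved) part in momentum, and control the off-diagonal part by oscillatory-sum estimates that the non-resilience hypothesis makes uniform in $L$ — is the same skeleton as the paper's. The difference, and it is a substantive one, is in how you organize the two-momentum sum. You set it up as a genuinely two-variable oscillatory sum with phase $E(p)-E(q)$ and a critical \emph{curve} on the diagonal $p=q$, and you correctly flag ``peeling off a thin diagonal neighbourhood'' as the delicate step — but you then leave it open. The paper avoids this problem by a change of variables to (summation index, momentum transfer): it decomposes the covariance matrix into real-space bands $\Gamma^{(d)}_{x,y}=\Gamma_{x,y}\delta_{x,y+d}$, Fourier-expands each band in the shift variable $z\mapsto\mathcal{X}^{(d)}_n$, and shows that the time evolution of each Fourier component is governed by a \emph{one-variable} exponential sum $f_n(t)=\tfrac1L\sum_k e^{i(\omega_{k+n}-\omega_k)t+\ldots}$ to which the scalar Kusmin-Landau lemma applies directly, with a constant $C_\#(\pi n/L)$ that degenerates only as $n\to 0$. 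The critical diagonal you worry about collapses to the regime $\alpha=\pi n/L\approx 0$, and the non-resilience hypothesis is formulated precisely as an $\ell^1$ condition on the $\mathcal{X}^{(d)}_n$ over resilient vs.\ non-resilient $n$, not as the total-variation bound on $\widetilde{C}_{p,q}(0)$ you posit. Your proposal would need to be completed by exactly this reparametrization to get a uniform-in-$L$ bound.

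Two smaller points. First, you treat the Gaussification step as ``essentially routine'', importing a strengthened version of Ref.~\cite{GKFGE16}; in fact one of the paper's own contributions is precisely to prove that strengthening, by showing that the propagator itself decays as $|G_{x,y}(t)|\le Ct^{-1/3}$ (Theorem~\ref{th:ergodicity}, again via Kusmin-Landau) and thereby removing the delocalizing-transport hypothesis from \cite{GKFGE16} — this is not a black box you can cite away. Second, you do not say anything about the exponential decay of initial correlations controlling the range $d$ of the relevant bands; the paper needs a separate lemma $|\Gamma^{(d)}_{x,y}(t)|\le C_{\rm Clust}e^{-d/\xi}$ to truncate the sum over $d$, and without it the union bound over bands is not manifestly $L$-independent.
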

The equilibrium ensemble $\hat \varrho^{(\rm eq)}$ is a generalized Gibbs ensemble.
Moreover, it is parametrized by an \emph{intensive} number of generalized temperatures that scales with the correlation length $\xi$ of the initial state and the thermodynamical potentials involved are exclusively local.
These are two defining features of statistical mechanics and indeed are present in our equilibrium ensemble.
We argue this by invoking the Jaynes' principle of looking for the maximum entropy state given expectation values of quantities of interest.
In our case, these are the tunnelling currents $\hat I_z$ (defined in detail below) which are quadratic operators, e.g., $\hat I_0$ is the mean on-site particle density and $\hat I_1$ corresponds to the nearest-neighbour tunnelling.
Since the equilibrium ensemble is Gaussian we can also use the property that characterizes these states, namely that they are the maximum entropy states given fixed second moments \cite{GKFGE16}.
Hence fixing the values $\langle \hat I_z\rangle$ is a way of specifying a Gaussian state.
Say that $\epsilon =C t_0^{-1/6}$ is our desired experimental resolution, and deviations from equilibrium should not be larger than this number.
Then within that precision we neglect all the currents with range significantly above the correlation length $z> z_\xi \approx \xi \ln( \epsilon^{-1})$ and aim at
 reproducing in the equilibrium ensemble $\hat \varrho^{(\rm eq)}$ the values of the relevant  conserved quantities obtained from the initial state
\begin{align}
  I_z=\langle \hat I_z\rangle_{\hat \varrho(0)} = \langle \hat I_z\rangle_{\hat \varrho^{(\rm eq)}}\ 
\end{align}
for $z\le z_\xi$.
This condition is met by setting the state to be parametrized as 
\begin{align}
  \hat \varrho^{(\rm eq)}=Z^{-1} e^{-\sum_{z=0}^{z_\xi}\lambda_z \hat I_z}\ ,
  \label{eq:GGE}
\end{align}
where $Z>0$ ensures normalization and $\lambda_z$ are Lagrange multipliers.
Note that for fixed $\epsilon>0$, e.g., determined by the experimental resolution of the apparatus, only an \emph{intensive} number of generalized temperatures 
$\lambda_z$ significantly contributes to the parametrization of this ensemble.
It remains to argue that for $z\ge z_\xi$ all correlation functions are smaller than the desired resolution $\epsilon$.
By the result in Ref.~\cite{araki1969gibbs}, any one-dimensional thermal state of the type \eqref{eq:GGE} has exponentially decaying correlations with a correlation length bounded by some $\xi_A$.
Hence indeed we recover asymptotically $\langle \hat I_z\rangle_{\hat \varrho_G^{(\rm eq)}} \sim C_\text{\rm Clust}e^{-z/\xi_A}\ll \epsilon$.
Here we can identify the chemical potential as $\mu=\lambda_0$ and oftentimes $\beta =\lambda_1$, e.g., in the case of the nearest-neighbour hopping quench Hamiltonian.
If we find that $\sum_{z=0}^{z_\xi}\lambda_z \hat I_z = \beta \hat H +\mu \hat N$ where $\hat H$ is exactly the quench Hamiltonian and $\hat N$ the particle number of operator then we would say that the equilibrium ensemble is \emph{thermal}.
Whenever this is not the case then one concludes that relaxation towards a \emph{generalized Gibbs ensemble} (GGE) has taken place.

\subsection{Discussion of the main result}

The novel feature beyond known non-interacting results \cite{CDEO08,cramer2008exploring,flesch2008probing,CE10,calabrese2011quantum,calabrese2012quantum,fagotti2013reduced,sotiriadis2016memory,sotiriadis2014validity,GKFGE16} is that for the first time we show equilibration over a reasonable time in a closed quantum system to occur \emph{generically} within a \emph{class} of models and initial states.
Such ubiquitous validity is one of the 
defining features of statistical mechanics.
Roughly speaking, in our case it occurs as a result of translation-invariance of the dynamics, even if the initial state is non-Gaussian and is not translation invariant, as long as it does not have unnatural initial correlations.
Note that our argument does without the knowledge of the actual values of the couplings or specific initial configurations of the particles as long as these satisfy our general assumptions.
This generality is a crucial feature of statistical mechanics and is to a large degree responsible for its success.

Throughout the work, it will be our goal to give intuition that grounds the proof of this result.
Let us begin by explaining how equilibration can fail or is physically implausible if any of the ingredients of Theorem~\ref{th:main} is relaxed and therefore other assumptions become necessary.
By our result equilibration occurs via  dynamics generated by non-interacting Hamiltonians:\ while strong results are possible even in the general interacting case \cite{LPSW09,short2011equilibration,short2012quantum,malabarba2014quantum,farrelly2016equilibration,farrelly2017thermalization,ilievski2015complete}, deriving a rigorous bound on the equilibration time of the type $\epsilon = O( t^{-\gamma})$ has been elusive so far.
In fact, it may be impossible on grounds of quantum computational complexity \cite{nielsen2002quantum,osborne2012hamiltonian,gharibian2015quantum,dowling2008geometry} because equilibrated time-evolution is concomitant to converging results of a quantum algorithm and often the runtime should be longer than polynomial \cite{terhal2000problem}.

In the main text, we will present the results for non-interacting fermions even though same statements hold for non-interacting bosons with a little technical fine-print due to the local Hilbert space being unbounded, and one needs additional assumptions on the correlations in Ref.~\cite{CE10}.
Concerning geometry, we consider a ring configuration, mostly for the clarity of the argument while of course thermodynamics should not change by the choice of boundary conditions.
However, in higher dimensions additional complications could occur as the group velocity, i.e., the derivative of the dispersion relation could vanish along curves instead of separated points \cite{witten2015three}, but certainly our techniques should generalize when supplemented with additional assumptions that exclude such technical issues.
One of the core physical assumptions enabling sufficient scrambling of the initial conditions is translation invariance of the Hamiltonian.
Relaxing it, one can find that particles do not propagate and without mixing ergodicity breaks down and with it relaxation.
As a prime example, the Anderson insulator model  \cite{PhysRev.109.1492} is a non-translation invariant Hamiltonian where equilibration is obstructed due to localization.

Long-ranged non-interacting 
models can actually violate causality \cite{richerme2014non,1309.2308}. That is to say, 
if equilibration occurs, then one would need to develop an entirely new intuition for its mechanisms.
Here, we assume a short-ranged  local Hamiltonian which is already enough to ensure effective causality by means of the Lieb-Robinson bound \cite{lieb1972finite,nachtergaele2006lieb,hastings2006spectral,cramer2008locality,kliesch2014lieb}.
By additional technical calculation, it should be possible to extend the results to couplings that asymptotically decay exponentially.
Note that we consider a  closed system  described by a static Hamiltonian.
If we relax the condition on exponentially decaying correlations then one can consider as the initial state a state evolved backwards to extensively long times which suddenly would acquire  ``out of nowhere'' non-equilibrium dynamics while the system should be expected to be equilibrated.

Finally, it has turned out to be necessary to demand that second-moments of the fermionic state be non-resilient.
The simplest example of a state without this property occurs when particles occupy half of the system and the other half is empty. 
Then for any short-ranged Hamiltonian by the Lieb-Robinson bound it will take extensive times for the particles to even explore the system and equilibration to occur.
This property will be precisely stated below in the form of a definition after the necessary notation has been introduced.
Summarizing this discussion, trying to establish equilibration one can encounter numerous obstructions, some of them are fundamental difficulties and some are rather technical.
In this work we identify precise conditions, mostly concerning locality of couplings and correlations, which are physically very natural and general, and at the same time are sufficient to establish local equilibration with time-scales for a closed quantum system.

\section{Class of physical systems considered}
\subsection{Non-interacting fermionic models}

We denote fermionic annihilation operators by $\f_x$ and will discuss bosons in the appendix.
The annihilation operators obey the canonical anti-commutation relations $\{\f_x,\fd_y\}=\f_x\fd_y+\fd_y\f_x =\delta_{x,y}$.  Note that any fermionic initial state satisfies the parity super-selection rule \cite{WignerWightmanWick,Earman}, meaning physical states can never involve a superposition of even and odd numbers of fermions.  More precisely, we assume that the density operator $\hat \varrho$ commutes with $(-1)^{\hat N}$, where $\hat N=\sum_{x=1}^L \hat N_x$ is the total number operator with $\hat N_x = \fd_x \f_x$.

A non-interacting fermionic model conserving particle number is characterized by a quadratic Hamiltonian of the form
\def\a{\hat a}
\begin{align}
\label{eq:H}
  \hat H(h)=\sum_{x,y=1}^L h_{x,y} \f^\dagger_x \f_y
\end{align}
where $h=h^\dagger\in \mathbb C^{\L\times \L}$ is the coupling matrix for a finite system size $\L$. 
By a linear transformation of the fermionic operators preserving the anti-commutation relations, any such Hamiltonian can be brought into diagonal form.  
Whenever the system is translation invariant then $h$ is \emph{circulant}, and so $h$ can be diagonalized by a discrete Fourier transform.  
Throughout, we make the assumption that $h\in \RR^{\L\times \L}$ is real, translation invariant and has range $R$, that is $J_z := h_{1,1+z}$ vanishes for $z> R$
and hence we consider the hopping models of the form
\begin{align}\label{eq:H_hopping}
  \hat H(h)=J_0 + \sum_{z=1}^R J_z \sum_{x=1}^L \f^\dagger_x \f_{x+z} + \text{h.c.}\ .
\end{align}
By this, we can define the dispersion relation $E:\RR\rightarrow\RR$ as
\begin{align}
E(p)= J_0 +2 \sum_{z=1}^{R} J_z \cos(p z)
\label{eq:E}
\end{align}
and evaluating at $p_k=2\pi k/L$ we can write the eigenvalues of $h$ as $\omega_k = E(p_k)$ for any finite system size $L>2R$.
Here $E(p)$ is analytic and its derivative can be used to express the dispersion gaps, e.g., $\omega_{k+1} -\omega_k = E'(\tilde p_k) {2\pi}/{L}$ for some $\tilde p_k \in [p_k,p_{k+1}]$ by the mean value theorem.
It will be useful to define $J_\text{max}=\max_{z=1,\ldots,R}|J_z|$.
The Heisenberg evolution of mode operators reads
\begin{equation}
\label{eq:lin_opt}
\f_x(t)=e^{i t \hat H(h)}\f_x e^{-i t \hat H(h)}=\sum_{y=1}^L G^*_{x,y}(t) \f_y
\end{equation}
where $G^*(t)=e^{-i t h}$ is the \emph{propagator} given by 
\begin{equation}
 G_{x,y}(t)=\frac{1}{L}\sum_{k=1}^{L}e^{i \omega_kt+2\pi i k(x-y)/L }
\end{equation}
in the translation invariant case, see Appendix~\ref{app:propagators}.
The \emph{covariance matrix} is defined as the collection of second moments of a state $\hat \varrho$, given by
\begin{align}
\Gamma_{x,y}= \langle\f^\dagger_x\f_y \rangle_{\hat \varrho}\ .
\end{align}
Observe that physically only the operator $\hat \Gamma_{x,y}= \f^\dagger_x\f_y$ is not Hermitian and hence not an observable. 
However, its real and imaginary parts defined as $2\text{Re}[\hat \Gamma_{x,y}] = \f^\dagger_x\f_y + \f^\dagger_y\f_x$ and
$2\text{Im}[\hat \Gamma_{x,y}] = -i(\f^\dagger_x\f_y - \f^\dagger_y\f_x)$ are physical observables.
Hence, their expectation values can be measured individually in a physical system and then one obtains
\begin{align}
\Gamma_{x,y}= \frac12 \langle\f^\dagger_x\f_y +\f^\dagger_y\f_x\rangle_{\hat \varrho}+\frac i2\langle-i(\f^\dagger_x\f_y -\f^\dagger_y\f_x)\rangle_{\hat \varrho}\ .
\end{align}
Note that we consider states with no pairing correlations:\ $\langle \f^\dagger_x\f^\dagger_y +\text{h.c}\rangle =0$. 
Our methods can be generalized to that case as well \cite{bach1994generalized,kraus2010generalized}, but this complicates the presentation. 
Using \eqref{eq:lin_opt} we see that the covariance matrix at time $t$ is 
\begin{align}
\Gamma(t)=G(t) \Gamma G(t)^\dagger\ .
\end{align}
Of particular relevance for us will be fermionic Gaussian states, which are completely specified by their second moments and Wick's theorem for higher-order correlation functions \cite{bach1994generalized}.

To prove many of our results later, we will require that the initial state has exponential decay of correlations, meaning there exist positive constants $C_\text{\rm Clust},\xi>0$ 
such that correlations decay like
\begin{equation}
|\langle \hat A \hat B\rangle_{\hat \varrho}-\langle \hat A\rangle_{\hat \varrho} \langle \hat B\rangle_{\hat\varrho}| \le s(\hat A)s(\hat B) C_\text{\rm Clust} e^{-d/\xi}\ , 
\label{eq:clust}
\end{equation}
where $\hat A$ and $\hat B$ are observables acting non-trivially only on lattice regions separated by a distance $d$ with sizes $s(\hat A)$ and $s(\hat B)$ respectively.  For simplicity, we have chosen $\|\hat A\|=\|\hat B\|=1$, where $\|\cdot\|$ is the operator norm.

\subsection{Constants of motion}
What are the relevant constants of motion for translation invariant dynamics? 
The most obvious candidate consists of momentum occupation numbers
\begin{align}
  \hat n_k =\frac 1L \sum_{x,y=1}^L e^{2\pi ik(y-x)/L} \f^\dagger_x\f_y\ .
\end{align}
Another set of conserved quantities are the \emph{current} operators
\begin{align}
\hat I_z(\eta) = \frac 1L\sum_{x=1}^L e^{i\eta}\f_x^\dagger \f_{x+z} + \text{h.c.}\ ,
\end{align}
where $\eta$ can in sometimes be interpreted as coming from a magnetic field via Peierls substitution.
These are indeed conserved quantities, which follows because they are linear combinations of the momentum occupation numbers.
The following two extreme cases are important
$\hat I_z(\eta=0)= (2/L)\sum_{k=1}^L \cos(2\pi kz / L) \hat n_k$, cf. e.g. \cite{barmettler2013impact}
and for $\hat I_z(\eta=\pi/2)= -(2/L)\sum_{k=1}^L \sin(2\pi kz / L) \hat n_k$.
For the latter type of currents to be \emph{present} it is necessary that the covariance matrix as defined above is not real.

The current operators allow us to judge how many conserved quantities are really necessary to describe the steady state with finite experimental resolution $\epsilon$.  Due to the exponential decay of correlations Eq.~\eqref{eq:clust}, we have $|\langle \hat I_z\rangle| \le C_\text{\rm Clust} e^{-z/\xi}$, and so $|\langle\hat  I_z\rangle |\le \epsilon $ for $z\ge \xi \ln( C_\text{\rm Clust}/\epsilon)$.
So there are only $z\sim \xi$ non-negligible values of $\langle\hat I_z\rangle$ which constitute the only relevant local conserved quantities.   
Thus, whenever equilibration occurs, then the equilibrium ensembles of any set of non-local momentum occupation numbers $\{\langle\hat n_k\rangle\}$ with the same current content will agree.

For initial states $\hat \varrho(0)$ with short range correlations we prove in the appendix, assuming minimal degeneracy of the dispersion relation $\omega_k$, that the steady-state obtained from the infinite-time average $\Gamma^{(\infty)}_{x,y}$ is translation invariant up to a small parameter
\begin{align}
  \abs{\Gamma^{(\infty)}_{x,y} - \Gamma^{(\rm eq)}_{x,y} }\le C_{I} L^{-1}
\end{align}
$C_I$ is independent of the system size.
We can define the equilibrium values by a real-space \emph{average}
\begin{align}
\Gamma^{(\rm eq)}_{x,y} = \frac 1L \sum_{z=1}^L \Gamma_{x+z,y+z} \ . 
\label{eq:eq}
\end{align}
We then can find the Peierls angle by setting $\eta_z = \arg[\Gamma^{(\rm eq)}_{1,z}]$.
By this, we find that our target equilibrium ensemble has matrix elements which agree with the \emph{initial} expectation value of the conserved operator 
\begin{align}
I_{|x-y|}=\Gamma^{(\rm eq)}_{x,y} =\cos(\eta_{\abs{x-y}})\langle \hat I_{\abs{x-y}}(0)\rangle_{\hat \varrho(0)}+i\sin(\eta_{\abs{x-y}})\langle \hat I_{\abs{x-y}}(\pi/2)\rangle_{\hat \varrho(0)}\ .
\end{align}
Here and throughout whenever $x,y$ are positions on the chain then $\abs{x-y}$ is meant in the sense of the distance on the ring geometry.
Note that due to the average the equilibrium covariance matrix and hence also $\hat \varrho^{(\rm eq)}_G$ will be translation invariant which implies that the current operators can be evaluated by a strictly local measurement.
For example if the initial covariance matrix is real then $\eta=0$ and we have
\begin{align}
I_{|x-y|}=\langle \hat I_{\abs{x-y}}\rangle_{\hat \varrho^{(\rm eq)}_G}=\langle \fd_x\f_y\rangle_{\hat \varrho^{(\rm eq)}_G}+\text{h.c.}\ ,
\end{align}
where $x,y$ can be chosen arbitrarily as long as their separation is $d=\abs{x-y}$.
\def\cur{\mathcal{Y}}
\def\curk{\cur^{(d)}}
\def\curP{\mathcal{X}^{(d)}}
\def\Gk{\Gamma^{(d)}}

\section{Power-law equilibration}

\subsection{Strategy of the argument}
Our goal in this section  is to bound how quickly time-evolved second moments $t\mapsto \Gamma_{x,y}(t)$ relax towards the time-averaged value.  The culmination of this is a bound of the form
\begin{align}
  \abs{\Gamma_{x,y}(t)-{\Gamma}_{x,y}^{(\rm eq)}}\le C_\Gamma t^{-\gamma}
\end{align}
where $C_\Gamma,\gamma>0$ are constants independent of the system size.
Let us begin by defining the decomposition of the covariance matrix $\Gamma$ into its currents $\Gk$ 
with entries
\begin{align}
  \Gk_{x,y}= 
  \Gamma_{x,y} \delta_{x,y+d},
\end{align}
where we use the convention {$\delta_{a,b+\L}=\delta_{a,b}$}.  
Intuitively, one can find $\Gk$ by picking out bands from $\Gamma$ parallel to the diagonal and we will show that each band equilibrates individually to the conserved current value $ I_d$ using that the evolution is linear in the bands $\Gamma(t)=\sum_{d=-\lfloor (L+1)/2\rfloor+1}^{\lfloor L/2\rfloor} \Gk(t)$.
Now we expand $\Gk$ via the discrete Fourier transform
\begin{align}
  \Gamma_{z+d,z}=\sum_{n=1}^{\N} \curP_n e^{ 2\pi \irm n z/ \N}\ .
\end{align}
Here $\curP_n$ are defined implicitly by the inverse discrete Fourier transform and the most important one is
\begin{align}
  \curP_{n=L}= \frac 1L \sum_{x=1}^L \Gamma_{x,x+d}  = \Gamma^{(\rm eq)}_{x,y} 
\end{align}
which is the equilibrium value.
After a technical calculation we obtain
\begin{align}
\Gk_{x,y}(t)  &= \sum_{ n=1}^{\N} \curP_n e^{2\pi\irm n(x-d)/\N} f_n(t)\ 
\label{eq:equi_main}
\end{align}
  with
\begin{equation}
 f_n(t)=\frac 1\N \sum_{k=1}^\N e^{\irm( \omega_{(k+n)} - \omega_k) t +2\pi\irm s(x-y-d)/\N}.
 \label{eq:f_n}
\end{equation}
This step is of crucial importance.  We have separated out a dynamical function $f_n$ which, when it decays, does so independent of the initial state -- or colloquially
speaking, 
it scrambles the initial state.
To prove our result, we show in Appendix \ref{app:Phi} that $f_n$ dephases in time
\begin{align}
 \abs{ f_n(t) }\le C_\#\left(\frac{n\pi}L\right) t^{-\gamma}
 \label{eq:f_n_bound}
\end{align}
with some constant $\gamma>{1}/({6R+6})$.
Here, one should note that $C_\#({n\pi}/L)$ will be constant in time but could depend on the system size.
Indeed for $n\approx 1$ we will have $C_\#({n\pi}/L)\sim L^{2}$. 
However, we will see that this is not an artefact of the technique that we use to obtain the bound \eqref{eq:f_n_bound} -- points $n$ with constant larger than some threshold $C_\#({n\pi}/L)> C_{\rm th}$ are \emph{resilient points}  where $f_n$ dephases slowly and will be discussed in detail below. 
In the end $C_\#$ has a simple form and it does not scale in the system size for very many natural initial configurations.

In order to derive the bound from Eq.~(\ref{eq:f_n_bound}), 
we will study the \emph{phase function} $\Phi_{t,\alpha}:[0,2\pi)\rightarrow \RR$ defined as
\begin{align}
  \Phi_{t,\alpha}(p) = Dp -4 t \sum_{z=1}^R J_z\sin(z\alpha )\sin(zp+z\alpha)\ .
\end{align}
Choosing $D= {2\pi(x-y+d)}/L$ and $\alpha={\pi n}/L$ we have
\begin{align}
 f_n(t)=\frac 1\N\sum_{k=1}^\N e^{\irm \Phi_{t,\alpha}( {2\pi k }/L)}\ .
 \label{eq:exponential_sum}
\end{align}
This relation \eqref{eq:exponential_sum} is called an exponential sum and its dephasing is instrumental for the state to dephase itself.
In order to bound it, we make use of the \emph{Kusmin-Landau technique}
\cite{Mordell58}. This powerful machinery
 allows to arrive at quantitative bounds as opposed to intuitive estimates obtained from stationary phase approximations \cite{PhysRevA.93.053620,cevolani2016spreading}.
The crux of this method is, however, similar -- dephasing is determined by the \emph{gaps} of $\omega$ or specifically by the first derivative of $\Phi$.
By analyzing the dispersion relation $E$, we find a lower bound to the gaps by appropriate Taylor expansions.
The bound is then determined by the values of the derivatives of $\Phi$ at points that one could view as stationary points. 
We define
\begin{align}
\mathcal S_\alpha^{(1)} =\{ p\in [0,2\pi] \text{ s.t. } \Phi'_{t,\alpha}(p)=0 \}
\end{align}
and correspondingly
\begin{align}
\mathcal S_\alpha^{(2)} =\{ p\in [0,2\pi] \text{ s.t. } \Phi_{t,\alpha}''(p)=0 \}
\end{align}
for the second derivative.
Due to the finite range $R$ of the Hamiltonian, there are at most $2R+2$ stationary points, which we prove in the appendix.
While in the appendix we prove a more general statement, here we discuss the generic case only where we assume that there are no points such that $\Phi''_{t,\alpha}(p)=\Phi'''_{t,\alpha}(p)=0$.
Hence, for any first order root $r\in \mathcal S^{(1)}$ we either have  $\Phi''_{t,\alpha}(r)\neq0$ or $\Phi'''_{t,\alpha}(r)\neq0$.
For the Taylor expansion we want to take the value of the minimal derivative that does not vanish at $r$ so we take $\kappa_r=1$ if  $\Phi''_{t,\alpha}(r)\neq0$ and otherwise we set $\kappa_r=2$.
In Appendix \ref{app:Phi} we show a more general statement, but in the generic case we simply have
\begin{align}
  \gamma = 1/3
\end{align}
and
\begin{align}
    C_\#(\alpha)=6(2R+1)\max\{1,8 R^4 J_\text{max} / M_\alpha ^2\}
\end{align}
where we define the minimal derivative value used for lower bounding dephasing through a Taylor expansion
\begin{align}
M_\alpha = \frac1t\min\left\{ \min\limits_{r\in\mathcal S^{(1)}} \abs{\Phi_{t,\alpha}^{(\kappa_r+1)}(r)}, \min\limits_{r\in\mathcal S^{(2)}} \abs{\Phi_{t,\alpha}'''(r)}^2 \right \}\ .
\end{align}
Note that this constant is time independent hence the time scaling is governed by the smallest next order derivative  which does not vanish at a stationary point.

Hence, as proved in Appendix~\ref{app:Phi}, we obtain a bound on the dephasing of the form \eqref{eq:f_n_bound}, which is a huge simplification as the bound is now encoded in the minimal value of derivatives at stationary points which is a sparse set.
As an example, let us study $M_\alpha$ of $\hat H(h)$ with only one non-trivial coupling value $J_1\neq0$.
Then we have the simplification
\begin{align}
  \Phi'_{t,\alpha}(p) = D -2 t  J_1 \sin(\alpha )\cos(p+\alpha)\ .
\end{align}
Then we find that $\mathcal S^{(1)}$ has at most $2$ roots and we should evaluate the value of the second derivative at these points
\begin{align}
  \Phi''_{t,\alpha}(p) = 2 t  J_1 \sin(\alpha )\sin(p+\alpha)\ .
\end{align}
Now, we notice that for $n\approx 0$ we have $\alpha ={n\pi}/L\approx 0$ which means that $M_\alpha \sim \alpha\sim L^{-1}$ and hence $C_\# \sim L^2$ becomes size dependent.
In this case $C_\#$ can be independent of the system size only if  $n$ is a significant fraction of $L$.
However, inspecting \eqref{eq:f_n} for  $\alpha ={n\pi}/L\approx 0$ we find that it will in fact not dephase for the same reason that our bound yields a large $C_\#(\alpha)$ constant as we have
\begin{align}
 f_n(t)\approx \frac 1\N\sum_{k=1}^\N e^{  {2\pi i k (x-y+d)}/L}\ 
\end{align}
for times $t\ll L$.
Therefore we would need times $t$ scaling in the system size for dephasing to even set in -- this is an effect that we call \emph{resilience}.

\subsection{Definition of non-resilient second moments}
Choosing the initial state such that $\Gamma$ has substantial $\mathcal X^{(d)}_n$ around a resilient point will render the covariance matrix resilient against equilibration.
This should be expected and has been discussed in the literature \cite{PSSV11} with the simplest example being a system with a linear dispersion relation.
By Eq.\ \eqref{eq:E} we see that generically we will find regions in momentum space where the dispersion relation is indeed approximately linear and 
populating the initial state with quasiparticles from these regions will obstruct dephasing.
More generally, resilience to equilibration can be characterized within the framework of resource theories \cite{GWEG17}.
Here, we have enough structure to be able to phrase a sufficient condition for correlations to be non-resilient using the above intuition.
\begin{definition}[Non-resilient second moments]
\label{def:non-resilient}
For a threshold constant $C_{\rm th}>0$ independent of the system size $L$ we call points in
\begin{align}
\mathcal R = \{\alpha\in(0,\pi) \text{ s.t. } C_\#(\alpha)\ge C_{\rm th}\}\ 
\end{align}
resilient.
If for all $d$ there exist constants $C_\text{RS},C_\text{NRS}>0$ independent of the system size such that the distribution $\mathcal X$ has little weight at resilient 
points
\begin{align}
  \sum_{\substack{ \tfrac{n\pi}L\in \mathcal R}}\abs{ \mathcal X^{(d)}_n} \le C_\text{RS} L^{-1}
\end{align}
and is bounded outside
\begin{align}
  \sum_{\substack{ \tfrac{n\pi}{L}\notin \mathcal R}}\abs{ \mathcal X^{(d)}_n} \le C_\text{NRS}
\end{align}
then we say that the correlations $\Gamma$ are non-resilient second moments at the level $C_{\rm th}$.
\end{definition}

The crucial mathematical feature of this definition that is needed to ensure equilibration is 
the system size independence of the constants such that constants derived in further bounds are also system size independent.
Notice that in the definition of $\mathcal R$ we exclude $\alpha = \pi$ which corresponds to ${\Gamma}_{x,x+d}^{(\rm eq)}=I_d=\curP_{n=L}$ which is a constant of motion.
In the following we will bound the deviation from equilibrium $|{\Gamma_{x,y}(t)-{\Gamma}_{x,y}^{(\rm eq)}}|$ 
and hence this definition can be thought of as defining initial conditions that are non-resilient to equilibration towards \emph{translation invariant} steady states.

\subsection{Equilibration of non-resilient second moments}
We can easily see that with this definition, we can give a bound as to how fast individual currents  \eqref{eq:equi_main}
relax as long using the bound \eqref{eq:f_n_bound} where now we have the promise that $C_\#\le C_\text{th}$.
Indeed, at the resilient points we can use a trivial upper bound $\abs{f_n(t)}\le 1$, to obtain
\begin{align}
 \abs{\Gk_{x,y}(t) - I_d \delta_{x,y+d}} &\le  C_\text{RS} L^{-1} +C_\text{NRS} C_{\rm th} t^{-\gamma}\\
 &\le C_\Gamma^{(d)}t^{-\gamma}\ ,
\end{align}
where in the second line we used $t\le t_R =\Theta(L)$.
By the decay of correlations only currents with range of the order of the correlation length  $d_\xi(t) =\xi\ln(t^{\gamma})$ will be relevant.
In the appendix we show using the unitarity of the propagator that 
\begin{align}
 \sum_{d=d_\xi(t)}^{\lfloor L/2\rfloor}\abs{\Gk_{x,y}(t)}\le\frac{C_{\rm Clust}}{1+e^{-1/\xi}}t^{-\gamma}\ .
\end{align}
and hence one easily arrives at a bound for fluctuations of the covariance matrix entries away from equilibrium
\begin{align}
  \abs{\Gamma_{x,y}(t)-{\Gamma}_{x,y}^{(\rm eq)}}&\le
 \sum_{d=-\lfloor (L+1)/2\rfloor+1}^{\lfloor L/2\rfloor} \abs{\Gk_{x,y}(t) - I_d \delta_{x,y+d}} \\
& \le C_\Gamma t^{-\tilde \gamma}\nonumber
\end{align}
where $C_\Gamma$ is obtained by appropriately collecting the system size independent constants and $\tilde \gamma \approx \gamma$ is chosen such that $\ln(t^\gamma)t^{-\gamma}\le t^{-\tilde\gamma}$ for all times of interest $t_0\le t\le t_\text R$.
The following proposition encapsulating these ideas is proven in full detail in Appendix \ref{app:equi_cov}.

\begin{proposition}[Equilibration of second moments]\label{th:3}
Consider a fermionic system with initially exponentially decaying correlations and non-resilient second moments $\Gamma$. Then there exist a constant relaxation time $t_0$ and a recurrence time $t_R = \Theta(L)$ such that, for all $t\in[t_0,t_R]$,
\begin{align}
  \abs{\Gamma_{x,y}(t)-{\Gamma}_{x,y}^{(\rm eq)}}\le C_\Gamma t^{-\gamma}
\end{align}
where $C_\Gamma,\gamma>0$ are constants.
\end{proposition}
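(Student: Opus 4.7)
The plan is to reduce the covariance-matrix bound to a band-by-band estimate and then dephase each band using the Kusmin-Landau machinery already established above. Specifically, I would write $\Gamma(t) = \sum_d \Gk(t)$, where $\Gk_{x,y} = \Gamma_{x,y}\delta_{x,y+d}$ extracts the $d$-th diagonal; because the Hamiltonian is translation invariant, the Heisenberg evolution preserves this band structure and each band can be treated independently. Performing the discrete Fourier transform $\Gamma_{z+d,z} = \sum_n \curP_n e^{2\pi i n z/L}$ along each band yields the decomposition \eqref{eq:equi_main}, which cleanly factorizes the initial-state content (the coefficients $\curP_n$) from the dynamical content $f_n(t)$ of \eqref{eq:f_n}. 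The $n=L$ mode already reproduces the translation-invariant equilibrium value $I_d$ exactly and at all times, so only the $n\neq L$ modes can generate non-equilibrium fluctuations.

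Next I would apply the dephasing bound \eqref{eq:f_n_bound}: at non-resilient points $C_\#(n\pi/L)\le C_{\rm th}$, while at resilient points I retain only the trivial $|f_n|\le 1$. Invoking Definition~\ref{def:non-resilient} to split the $n$-sum according to $\mathcal{R}$ yields
\begin{equation*}
|\Gk_{x,y}(t) - I_d \delta_{x,y+d}| \le C_\text{RS} L^{-1} + C_\text{NRS}\, C_{\rm th}\, t^{-\gamma}.
\end{equation*}
Since the claim is only asserted on $t\le t_R = \Theta(L)$, the first term can be absorbed into the second at the cost of a constant, giving the band-wise estimate $|\Gk_{x,y}(t) - I_d \delta_{x,y+d}| \le C_\Gamma^{(d)} t^{-\gamma}$ with $C_\Gamma^{(d)}$ independent of $L$.

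To sum over $d$, I would split at the correlation scale $d_\xi(t) = \xi \ln(t^\gamma)$. The $O(\ln t)$ short-range bands each satisfy the above estimate, contributing altogether $O(\ln(t)\,t^{-\gamma})$, which is majorized by $t^{-\tilde\gamma}$ for any $\tilde\gamma$ slightly less than $\gamma$. For $|d|\ge d_\xi(t)$ I would use the exponential clustering of the initial state together with $\Gamma(t) = G(t)\Gamma G(t)^\dagger$ and the unitarity of the single-particle propagator to transport the exponential tail of $\Gamma$ into a geometric sum, giving $\sum_{|d|\ge d_\xi(t)} |\Gk_{x,y}(t)| \le C_{\rm Clust}(1+e^{-1/\xi})^{-1}\,t^{-\gamma}$. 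Collecting constants yields the advertised bound $|\Gamma_{x,y}(t) - \Gamma^{(\rm eq)}_{x,y}| \le C_\Gamma t^{-\tilde\gamma}$, and the relaxation time $t_0$ is fixed by requiring the right-hand side to drop below the desired precision.

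The main obstacle is the tail-bound step. A naive Lieb-Robinson argument would let the initial correlation length spread ballistically to $\xi + vt$, which is far too weak after summing over an extensive number of $d$. The crucial point to exploit is that the objects being bounded are quadratic expectation values, so unitarity of $G(t)$ transports the band content of $\Gamma$ in a Frobenius-like sense rather than in operator norm, and the exponential tail of $\Gamma$ survives the transport with constants independent of both $t$ and $L$. Carefully tracking how $C_{\rm Clust}$ and $\xi$ propagate through $\Gamma(t) = G(t)\Gamma G(t)^\dagger$, and matching constants so that nothing depends on $L$, is where the delicate bookkeeping lives; the remainder of the argument is a direct assembly of the Kusmin-Landau dephasing of $f_n$ with the Fourier decomposition of the currents.
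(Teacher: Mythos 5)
Your proposal reproduces the paper's proof of Proposition~\ref{th:3} essentially step for step: the current decomposition $\Gamma(t)=\sum_d\Gk(t)$, the Fourier expansion into $\curP_n$ and $f_n(t)$, the split of the $n$-sum according to $\mathcal R$ using Definition~\ref{def:non-resilient}, the absorption of the $L^{-1}$ term via $t\le t_R=\Theta(L)$, and the tail bound over $|d|\ge d_\xi(t)$ via the Cauchy--Schwarz/unitarity argument of Lemma~\ref{app:relevant_currents}. One small imprecision: translation invariance of $\hat H$ does not literally ``preserve the band structure'' under $\Gk(t)=G(t)\Gk G(t)^\dagger$ (the evolved $\Gk(t)$ is not band-diagonal); what you actually use, and what carries the argument, is just linearity of conjugation giving $\Gamma(t)=\sum_d\Gk(t)$, so this phrasing is harmless.
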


As we will see, this general bound must have $\gamma\leq 1/2$ by giving a specific example with a tight relaxation scaling via the Bessel function asymptotics.
On the other hand, we have that the exponent is lower bounded due to $\gamma \ge 1/(6R)-\varepsilon$ for any $\varepsilon >0$, as explained in Appendix \ref{app:equi_cov}.

\subsection{Examples of non-resilient second moments}

As the simplest example of non-resilient second moments, consider the covariance matrix $\Gamma^{(0,1)}$ of the charge-density wave corresponding to the Fock state vector $\ket{0,1,0,1,\ldots}$ which will equilibrate under the nearest-neighbour model.
More generally, if there is no shift symmetry of the dispersion relation any $P$-periodic configuration of currents will be non-resilient for intensive $P$ not scaling in the system size, see Appendix~\ref{app:non-resilient}.
This continues to hold true even in the presence of sparse defect sites at random points.
This is the most important case and captures the intuition about what physically one should expect to be necessary for equilibration, namely that the mass distribution (and concomitantly currents) are already distributed over the system, albeit with possibly intensive random configurations at microscopic scales.

On the other hand, a $P$-periodic state with extensive $P$ will be resilient and not relax towards a translation invariant steady state according to a power-law. 
Specifically, second moments of the form $ \Gamma_{x,x}= 1$ for $x\le L/2$ and all other entries vanishing  are resilient.
Intuitively  this is a block of particles over an extensive part of the system and
is resilient because by the Lieb-Robinson bound one would need to wait to extensively long times for the current to become evenly distributed.
Such a covariance matrix would violate our definition of non-resilient second moments already on the level of $\curP_n$, see Appendix~\ref{app:non-resilient}.
Let us finally remark that the definition of non-resilient second moments has a linear structure and mixtures of different  $P$-periodic covariance matrices $\Gamma^{(P)}$ are again non-resilient, as long as the weights decay fast enough, i.e.,
\begin{align}
  \Gamma^{(\rm Mixt)}= \sum_P a_P \Gamma^{(P)}
\end{align}
can be non-resilient for various weights $a_P$.

If we would like to quantify the resilience in the generic case, we may neglect physical constraints on the covariance matrix and choose $\Gamma_{x,y}^{(\rm rnd)}\in [a,b]$ uniform at random.
In this case, we will indeed find non-resilience on average $\mathbb E[\mathcal X^{(d)}_n] = (a+b)\delta_{n,L}/2$.
However, the fluctuations are rather large as we find $\text{Var}[\mathcal X^{(d)}_n]={(a-b)^2}/({12L})$, so drawing a random selection from the uniform distribution will often yield a significant number of the $L$-many harmonics to be of the order $\mathcal (X^{(d)}_n)^2\sim {\text{Var}[\mathcal X^{({\rm rnd},d)}_n]}\sim L^{-1}$
which is too large and could lead to resilience.
Yet, constructing a mixture of such matrices can smoothen the distribution and so for $\Gamma=\sum_{k=1}^K \Gamma^{({\rm rnd}:k)}/K$, 
we should find $\mathcal X^{(d)}_n\approx \mathbb E[\mathcal X^{(d)}_n]$ up to fluctuations decaying $K^{-1/2}$, i.e., one can get closer to the average behaviour which is non-resilient.
Observe, that by Eq.~\eqref{eq:f_n} dephasing could also occur if the Fourier weights $\curP_n$ are larger than what we allow for in Definition~\ref{def:non-resilient}  if they fluctuate uniformly on the scale where $f_n(t)$ does not change strongly.
Later, in order to discuss equilibration of a random selection of second moments, which are physically admissible and have a finite correlation length, we will discuss thermal states of the Anderson insulator -- numerically we indeed find equilibration in that case too.

Finally, note that our definition of non-resilient second moments characterizes initial states that equilibrate to translation invariant steady states.
However, it is important to note that non-translation invariant steady states can also occur -- due to possible shift symmetries of the dispersion relation such that we have $\omega_k=\omega_{k+n}$ for all $k=1,\ldots,L$.
The simplest example is to notice that $\omega_k=\omega_{k+L/2}$ for the next-nearest-neighbour model so then $f_{L/2}(t)=\texttt{const}$.
In this case, our definition of non-resilient second moments excludes any $\Gamma$ which has significant $\curP_{n}$ for $n\approx L/2$ via the condition on the $C_\#(n\pi/L)\le C_\text{th}$ constant.
These are very special cases, see Fig.~\ref{fig:CDW} and we have chosen to study equilibration exclusively towards translation invariant steady states.
Notably, the nearest-neighbour model has no shift symmetry hence only states with long-range dislocations, or a population of long-wavelength quasiparticles are being excluded by the definition of non-resilience.
\begin{figure}[h]
\centering
\includegraphics[trim = 2.5cm 0 2cm 0, clip, width=0.47\linewidth]{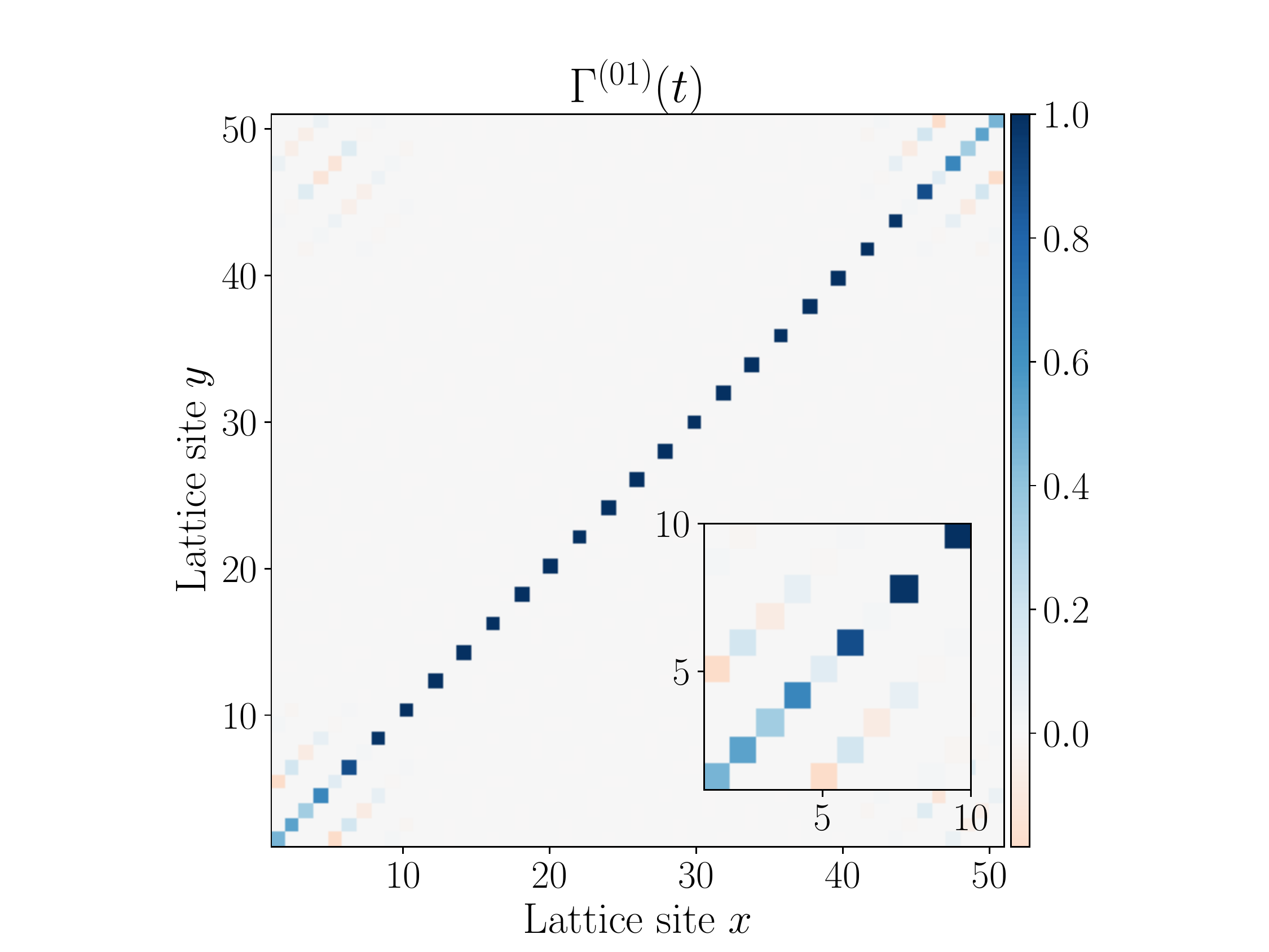}
\includegraphics[trim = 2.5cm 0 2cm 0, clip, width=0.47\linewidth]{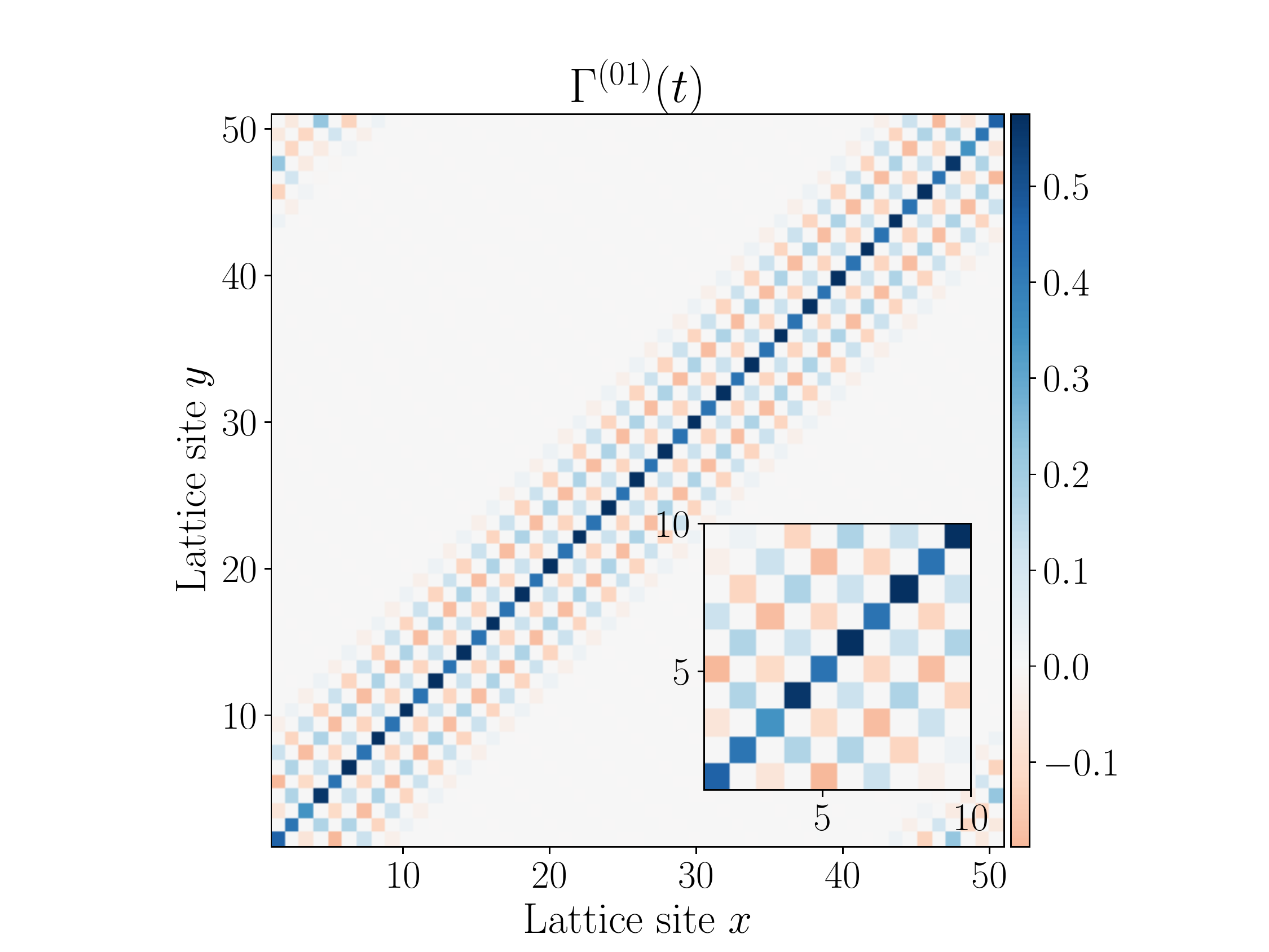}
\caption{Covariance matrix of a charge-density wave $\Gamma^{(0,1)}$ which corresponds to the Fock state vector $\ket{0,1,0,1,\ldots}$ has varying equilibration behaviour depending on the locality of the Hamiltonian and the system size parity. 
For the next-nearest-neighbour model the system in this special initial state splits up into two independent sub-lattices and is in an exact steady state whenever the system size $L$ is even.
However, for odd $L$ the symmetry of the density distribution is incommensurate with the system size and there is necessarily a defect of the type 
$\ket{\ldots ,0,1,0,0,1,\ldots}$ or $\ket{\ldots ,1,0,1,1,0,1,\ldots}$ around which the charge-density wave  pattern starts becoming homogeneous.
Note, that away from the defect point the charge-density wave looks locally like a steady state of the Hamiltonian and one can prove by the LR bounds that the middle region will remain unaffected for extensively long times.
The left plot shows $\Gamma^{(0,1)}(t=1.5)$ after a quench to the next-nearest-neighbour model (the inset throughout shows the sub-block of the first $10$ sites).
On the other hand, if we quench to the nearest-neighbour model then there is no transient symmetry present and the charge-density wave is completely non-resilient and homogeneously tends towards equilibrium as seen in the right plot for the same initial state.
}
\label{fig:CDW}
\end{figure}

\subsection{$P$-periodic initial density distributions and nearest neighbour hopping}

A specifically instructive case is to study the situation in which the initial state is such that the covariance matrix is diagonal with a $P$-periodic structure, and the system is quenched to evolve via the nearest neighbour hopping model.
This means that the density distribution repeats every $P$ sites in that $\Gamma_{x,x}=\Gamma_{x+P,x+P}$ for all $x$.
It is one of the strengths of our result that we need not care about the structure within the block because any such distribution for an intensive $P$ is non-resilient.
The steady state will be translation invariant and diagonal with the second moments given by $\Gamma^{(\rm eq)}_{x,y} = \delta_{x,y}/F$ where $1/F$ is the filling ratio.
For example, if the initial covariance matrix was $\Gamma^{(1,0)}=\text{diag}(1,0,1,0,\ldots)$, then we have half-filling $1/F=1/2$ and for $\Gamma^{(1,0,0)}=\text{diag}(1,0,0,1,0,0\ldots)$ we get $1/F=1/3$. When considering the evolution under a nearest-neighbour fermionic hopping Hamiltonian, 
and such initial conditions, we find that the propagator follows a law 
$O(t^{-1/2})$ in time,
as laid out in Appendix \ref{app:Bessel},
dictated by the asymptotics of Bessel functions of the 
first kind. This decay is inherited by the actual correlation decay, in that for any $P$-periodic initial condition, one finds that
\begin{equation}
	|\Gamma_{x,y}(t)-\Gamma^{(\rm eq)}_{x,y}| = O(t^{-1/2})
\end{equation}
for all $x,y$.
It is also interesting to note that the resulting steady states can be seen as an 
\emph{infinite temperature Gibbs state} at a specific chemical potential which imposes the value of the total particle number. Specifically, 
one finds for the equilibrium covariance matrices
\begin{align}
\Gamma^{(\rm eq)}_{x,y}= \frac{1}{1+e^{-\mu}}\delta_{x,y},
\end{align}
from which one can obtain the value of the chemical potential in explicit form $\mu=-\ln(F-1)$ for any $x,y$ due to translation invariance.

\section{Quasi-free ergodicity}

\subsection{Notions of ergodicity}
One of the key questions of statistical mechanics is what precise properties of the Hamiltonian governing the dynamics 
can be held responsible for the emergence of aspects of quantum statistical mechanics.
In classical mechanics it results from sufficient transport properties which is evident already in Boltzmann's \texttt{H}-Theorem.
In the quantum regime for free systems, a notion with similar operational meaning can also be identified, namely that propagators decay quickly, which holds with surprising generality and can be interpreted as a lower bound to particle transport.
\begin{theorem}[Free fermionic ergodicity]\label{th:ergodicity}
Let $t\mapsto G(t)$ be the propagator for a non-interacting translation invariant fermionic Hamiltonian $\hat H(h)$ which is off-diagonal on the one-dimensional real-space lattice. 
Then for all times $t$ between a relaxation time $t_0 = O(1)$ up until a recurrence time $t_R = \Theta(L)$ the propagator obeys
\begin{align}
  |G_{x,y}(t)|\le C t^{-\gamma},
  \label{eq:ergodicity}
\end{align}
where $C,\gamma>0$ are constants.  We can take $\gamma=1/3$, provided there are no points $p$ such that $E^{\prime\prime}(p)= E^{\prime\prime\prime}(p)=0$  which is true for generic models.
\end{theorem}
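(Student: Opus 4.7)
The plan is to observe that $G_{x,y}(t)$ is itself an exponential sum of precisely the form analyzed for $f_n(t)$ in Eqs.~\eqref{eq:f_n}--\eqref{eq:exponential_sum}, so the Kusmin-Landau / van der Corput machinery invoked there can be re-used with minor modifications. Setting $p_k = 2\pi k / L$ and
\begin{equation}
\Phi_t(p) = E(p)\,t + (x-y)\,p,
\end{equation}
we have $G_{x,y}(t) = L^{-1}\sum_{k=1}^{L} e^{i\Phi_t(p_k)}$. The derivatives read $\Phi_t'(p) = E'(p)\,t + (x-y)$, $\Phi_t''(p) = E''(p)\,t$, and $\Phi_t'''(p) = E'''(p)\,t$. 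Because $E$ is a trigonometric polynomial of degree $R$, the stationary-point equation $E'(p) = -(x-y)/t$ has at most $2R$ solutions in $[0,2\pi)$, and crucially the vanishing pattern of the higher derivatives at those stationary points is governed purely by $E''$ and $E'''$, independent of $t$ and of $x-y$.

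I would then partition $[0,2\pi)$ into small arcs around each stationary point $r \in \mathcal{S}^{(1)}$ and complementary arcs on which $|\Phi_t'(p)|$ is bounded below by a positive constant times $t$. On the complementary arcs, the ordinary Kusmin-Landau bound yields an $O(t^{-1})$ contribution. On an arc around a stationary point $r$, I would Taylor-expand $\Phi_t$ to the lowest non-vanishing order: if $E''(r)\neq 0$, the second-derivative van der Corput estimate delivers an $O((t|E''(r)|)^{-1/2})$ contribution; if $E''(r)=0$ but $E'''(r)\neq 0$, the third-derivative estimate delivers an $O((t|E'''(r)|)^{-1/3})$ contribution. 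The generic hypothesis $E''(p)^2 + E'''(p)^2 > 0$ together with compactness of $[0,2\pi]$ furnishes a uniform lower bound $M_E > 0$ on $\max\{|E''(r)|, |E'''(r)|\}$ over all possible stationary points (as $(x-y)/t$ varies), so summing the at most $2R$ arc contributions with the complementary bound yields $|G_{x,y}(t)| \le C t^{-1/3}$ with $C$ depending only on $R$, $J_{\max}$ and $M_E$, but not on $L$ nor on $x-y$.

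Finally I would verify the two temporal endpoints. The Kusmin-Landau discrete-versus-continuous passage is effective as long as the phase increment between consecutive grid points, $\approx \Phi_t'(p)\cdot 2\pi/L$, does not itself close up modulo $2\pi$ uniformly --- this is controlled provided $t\cdot \max_p|E'(p)| \lesssim L$, which yields the recurrence time $t_R = \Theta(L)$. On the other end, for $t \le t_0$ the asserted bound is trivial from $|G_{x,y}(t)| \le 1$, so $t_0 = O(1)$ is fixed to make the asymptotic estimate dominant.

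The main obstacle will be the careful bookkeeping needed to keep every constant system-size independent and uniform in $x-y$: in particular, one must check that the number of stationary points, the radii of the Taylor arcs around each of them, and the lower bound $M_E$ can all be chosen uniformly as $(x-y)/t$ ranges over the image of $-E'$. This reduces to a compactness argument on $[0,2\pi]$ for the continuous function $|E''|^2 + |E'''|^2$, which is strictly positive by the generic assumption; the discrete sum is then controlled by an Abel summation / Kusmin-Landau style telescoping whose error terms are all $O(1)$ in $L$, precisely as in the appendix argument for the $f_n$ bound. Degenerate couplings, where $E''$ and $E'''$ share a common zero, would force a higher-order van der Corput bound and a weaker exponent, which is exactly the caveat recorded in the statement.
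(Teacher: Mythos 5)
Your reduction of $G_{x,y}(t)$ to the exponential sum $L^{-1}\sum_k e^{i\Phi_t(p_k)}$ with $\Phi_t(p)=E(p)t+(x-y)p$ is exactly the paper's starting point, and the plan of removing arcs around stationary points and applying Kusmin-Landau on the complement, with the uniformity in $(x-y)/t$ handled by compactness of $\{|E''|^2+|E'''|^2>0\}$ on $[0,2\pi]$, matches the structure of the paper's Appendix \ref{app:Phi}. The paper also packages exactly this into a general dephasing lemma (Theorem~\ref{thm:phase_general}) and then the proof of Theorem~\ref{th:ergodicity} is a one-line application; you are essentially re-deriving that lemma inline, which is fine.

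The one place I would push back is the claim that ``the third-derivative estimate delivers an $O((t|E'''(r)|)^{-1/3})$ contribution.'' That rate is the \emph{continuous} stationary-phase rate; the standard discrete van der Corput $k$-th derivative test has the weaker exponent $1/(2^k-2)$, i.e.\ $t^{-1/6}$ for $k=3$, not $t^{-1/3}$. Likewise the discrete second-derivative test is a genuinely separate lemma, not just ``Kusmin-Landau with one more derivative.'' The paper avoids this entirely: it never invokes a higher-order discrete test. Instead, around a degenerate stationary point (where $\Phi_t'(r)=\Phi_t''(r)=0$ but $\Phi_t'''(r)\neq 0$) it removes an arc of radius $w_t\sim t^{-1/3}$, Taylor-expands $\Phi_t'$ to order two about $r$ to obtain $|\Phi_t'(r\pm w_t)|\gtrsim |\Phi'''(r)|\, t\, w_t^2 \sim t^{1/3}$, and then applies only the \emph{first-derivative} Kusmin-Landau lemma to the remaining monotone pieces; the removed arc and the Kusmin contribution both scale as $t^{-1/3}$, which is how the sharp exponent is obtained. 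So either reinterpret your ``higher-derivative estimate'' as this remove-and-Taylor argument (in which case you are exactly reproducing the paper), or replace the invocation of the discrete $k$-th derivative test with the explicit arc-removal step --- as literally written, that single step would only give you $t^{-1/6}$, which is weaker than the $\gamma=1/3$ claimed in the statement. The minor imprecision about the complementary arcs giving $O(t^{-1})$ is harmless provided the arc radii are fixed and the $t\lesssim L$ constraint, which you correctly identify as the source of $t_R=\Theta(L)$, is enforced.
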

We can interpret Theorem \ref{th:ergodicity} as proving free-particle ergodicity for these models.  This notion of ergodicity is motivated by the classical notion of ergodicity, which states that an ergodic system essentially explores the whole available phase space, and it does so homogeneously.
In free systems, we have to respect the linear constraint in the relation \eqref{eq:lin_opt} at all times and given that, the suppression \eqref{eq:ergodicity} allows to show that the particles must spread over the lattice.
Indeed unitarity of the propagator $\sum_{y=1}^L |G_{x,y}(t)|^2=1$ implies that a particle initially at site $x$ must occupy at least $O(t^{2\gamma})$ sites.
If for most sites the bound is not tight, then the particle must have spread to an even larger region.
Indeed, whenever the spatial separation $d=x-y$ is far away from a ballistic wavefront, typically found in free translation invariant systems, then our proof can be used to obtain $\gamma =1/2$ which would imply that the particle spreads homogeneously over a region of size $O(t)$.
Note that our bound is independent of $d$ and hence $\gamma=1/3$ is necessary and reflects the scaling at the wavefront \cite{GKFGE16}.
Conversely, in localized systems such as Anderson insulators, particles cannot spread freely and typically $|G_{x,y}(t)|\le C e^{-|x-y|/ \ell_0}$ which together with the unitarity of the propagator can be used to show that $\sum_{x=y-\ell_0}^{y+\ell_0} |G_{x,y}(t)| > O(1)$ for all times, i.e., particles cannot spread by more than the localization length $\ell_0$.
The proof of  Theorem \ref{th:ergodicity} is given in Appendix \ref{app:ergodicity} and is again based on Kusmin-Landau inequality \cite{Mordell58}.
It could be also of general interest as a method for deriving error-bars for stationary phase arguments in field theory.
Note that one can explicitly calculate the relaxation time and the recurrence time, $t_0$ and $t_R$, using only the dispersion relation.

\subsection{Gaussification is generic}

Combining the above results with insights from Ref.~\cite{GKFGE16} lead to a remarkably strong result. Ref.\ \cite{GKFGE16}
presented results on how non-interacting fermionic quantum systems that show delocalizing transport would ``Gaussify'', that is,
turn to a quantum state that is  Gaussian to an arbitrarily good approximation in time. However, Theorem \ref{th:ergodicity} shows
precisely this: Non-interacting one-dimensional models generically exhibit delocalizing transport. We hence arrive at a statement of 
a rigorous convergence to a generalized Gibbs ensemble with enormous generality. When stating this Gaussification theorem, we 
define the state $\hat\varrho_{G}(t)$ to be a Gaussian state with the same covariance matrix as $\hat\varrho(t)$.

\begin{theorem}[Fermionic generic Gaussification]\label{th:Gauss}
Consider the initial fermionic state $\hat\varrho(0)$ with exponential decay of correlations and a non-interacting translation-invariant post-quench Hamiltonian with dispersion relation $E(p)$ such that there are no points with $E^{\prime\prime}(p)= E^{\prime\prime\prime}(p)=0$ for any $p$.   Then there exist a constant relaxation time $t_0$ and a recurrence time $t_R = \Theta(L)$ such that, for all $t\in[t_0,t_R]$,
\begin{align}
|\langle \hat A \rangle_{ \hat\varrho(t)} - \langle  \hat A \rangle_{\hat \varrho_{G}(t)}|\le C t^{-1/6}\ 
\end{align}
where $C>0$.
\end{theorem}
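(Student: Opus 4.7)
The plan is to invoke the Gaussification machinery of Ref.~\cite{GKFGE16} and to show that the pointwise propagator decay $|G_{x,y}(t)|\le Ct^{-1/3}$ just established in Theorem~\ref{th:ergodicity} supplies exactly what was left as a hypothesis there, so that, combined with exponential decay of correlations of $\hat\varrho(0)$, the Gaussification estimate becomes unconditional under the hypotheses of Theorem~\ref{th:Gauss}. By linearity and the parity super-selection rule, it suffices to treat a local normal-ordered monomial $\hat A=\fd_{x_1}\cdots\fd_{x_k}\f_{y_\ell}\cdots\f_{y_1}$ with $k+\ell$ even and fixed. Heisenberg-evolving each operator via~\eqref{eq:lin_opt} rewrites $\langle\hat A\rangle_{\hat\varrho(t)}$ as a contraction of $k+\ell$ propagators $G(t),G^*(t)$ against the corresponding $(k+\ell)$-point function of $\hat\varrho(0)$. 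Expanding that correlator into connected cumulants, the pieces built only from one- and two-point cumulants coincide term by term with $\langle\hat A\rangle_{\hat\varrho_G(t)}$, since $\hat\varrho_G(t)$ has by construction the same covariance matrix as $\hat\varrho(t)$ and its higher correlators obey Wick's theorem. Consequently the difference $\langle\hat A\rangle_{\hat\varrho(t)}-\langle\hat A\rangle_{\hat\varrho_G(t)}$ is a finite sum of diagrams, each containing at least one connected cumulant of order $\ge 4$ of $\hat\varrho(0)$ sandwiched between products of propagators.

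Next, the exponential decay of correlations~\eqref{eq:clust} bounds any such connected cumulant by $C_\text{Clust}e^{-D/\xi}$ in the diameter $D$ of its support, which confines the internal position sums it enters to a region of extent $O(\xi\log t)$. For the remaining unconstrained propagator sums one interpolates between the pointwise bound of Theorem~\ref{th:ergodicity} and the isometry identity $\sum_y|G_{x,y}(t)|^2=1$ by means of Cauchy--Schwarz: half of the propagator factors in each diagram pick up the pointwise $t^{-1/3}$ decay while the other half are absorbed into $L^2$ norms of unit mass. This produces a net temporal suppression of order $t^{-1/6}$, the exponent arising simply as $\gamma/2=1/6$. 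Absorbing the polynomial-in-$\log t$ losses coming from the truncation radius $O(\xi\log t)$ into a slightly smaller effective exponent $\tilde\gamma\approx 1/6$, exactly as in Proposition~\ref{th:3}, yields the claimed bound for all $t_0\le t\le t_R$.

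The main obstacle is the combinatorial bookkeeping needed to make every constant independent of the system size $L$ and of the specific monomial structure of $\hat A$. This requires organizing the cluster expansion and the Cauchy--Schwarz splitting in a compatible way: the free propagator sums that receive $L^2$ normalization must be the ones whose summation index is not already pinned by a connected cumulant factor, while the pinned propagator arguments receive the pointwise decay from Theorem~\ref{th:ergodicity}. The $L$-independence then follows from the $L$-independence of $C$ in Theorem~\ref{th:ergodicity}, of $C_\text{Clust}$ in~\eqref{eq:clust}, and of the connected cumulants of any fixed order, all of which are guaranteed by translation invariance of the dynamics together with the finite correlation length of the initial state.
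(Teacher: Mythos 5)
Your proposal is correct and takes essentially the same approach as the paper: invoke the Gaussification theorem of Ref.~\cite{GKFGE16}, with its ``delocalizing transport'' hypothesis now supplied by the propagator bound of Theorem~\ref{th:ergodicity}, exactly as the paper does (the paper gives no further independent proof of Theorem~\ref{th:Gauss} beyond this combination). Your extra sketch of the cumulant expansion and Cauchy--Schwarz interpolation is a faithful summary of the mechanism inside Ref.~\cite{GKFGE16}; the only detail to double-check against that reference is the exact exponent bookkeeping, since the paper's remark that Gaussification may fail for $\gamma<1/4$ suggests the rate is $t^{-(2\gamma-1/2)}$ rather than $t^{-\gamma/2}$ (the two coincide, giving $t^{-1/6}$, precisely at the generic value $\gamma=1/3$).
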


\section{Proving Theorem 1}
In this section we collect all our findings that lead to the statement of Theorem \ref{th:main}.
Within the setting described above the two crucial ingredients are an initial state featuring exponentially decaying correlations and the quench Hamiltonian being translation invariant.
By Theorem \ref{th:Gauss}, we have that at a sufficiently large time any local correlation function can be approximated by the value obtained from the Gaussified state.
That is, it suffices to take the second moments $\Gamma$ of the initial state $\hat\varrho(0)$, evolve them according to the quench Hamiltonian and evaluate $\langle \hat A \rangle_{ \hat\varrho(t)}$ by appropriately employing Wick's theorem for $\Gamma(t)$.
We hence find equilibration $\langle \hat A \rangle_{ \hat\varrho(t)}\approx \texttt{const}$ if $\Gamma(t)\approx \texttt{const}$ is time independent.
This is already the case if we perform the quench starting from a translation-invariant non-Gaussian state because then the covariance matrix $\Gamma$ is also translation invariant and so $\Gamma(t)=\Gamma(0)$ because $\partial_t\Gamma(t)=i[h,\Gamma(t)]=0$.
For such cases Gaussification is sufficient for equilibration \cite{GKFGE16}.
However, thanks to Proposition \ref{th:3} we obtain a much more general statement.
Namely, any non-resilient covariance matrix will equilibrate.
This result applies to very natural initial conditions that can dramatically deviate from a homogenous configuration.
The relaxation takes the form of a power law $O(t^{-1/6})$ determined by the Gaussification times.
This, however, is an artifact of our rigorous uniform bounds -- one should expect the calculation for a special configuration from Refs.~\cite{cramer2008exploring,flesch2008probing} to be generic $O(t^{-1/2})$.
A proof of such a behaviour being the standard time-scale  may be possible but would involve a significantly more detailed treatment of the wavefront which is responsible for our scalings not being tight as compared to the behaviour in the bulk of the Lieb-Robinson cone \cite{GKFGE16}.

\section{Numerical results}
\subsection{Quenches of the Anderson insulator to an ergodic translation invariant Hamiltonian}

As a numerical illustration, in this section, we discuss the situation arising from starting in the thermal state of a disordered Anderson
insulator, initially not translation invariant, followed by a quench to a perfectly translation invariant ergodic hopping Hamiltonian.
Needless to say, the equilibrium states emerging are once again generalized Gibbs ensembles and Gaussian states: It is interesting to
note, however, that they resemble fully thermal states with high probability to a rather good approximation.
%

\begin{figure}[h]
\centering
\includegraphics[trim = .2cm 0 .2cm 0, clip, width=0.7\linewidth]{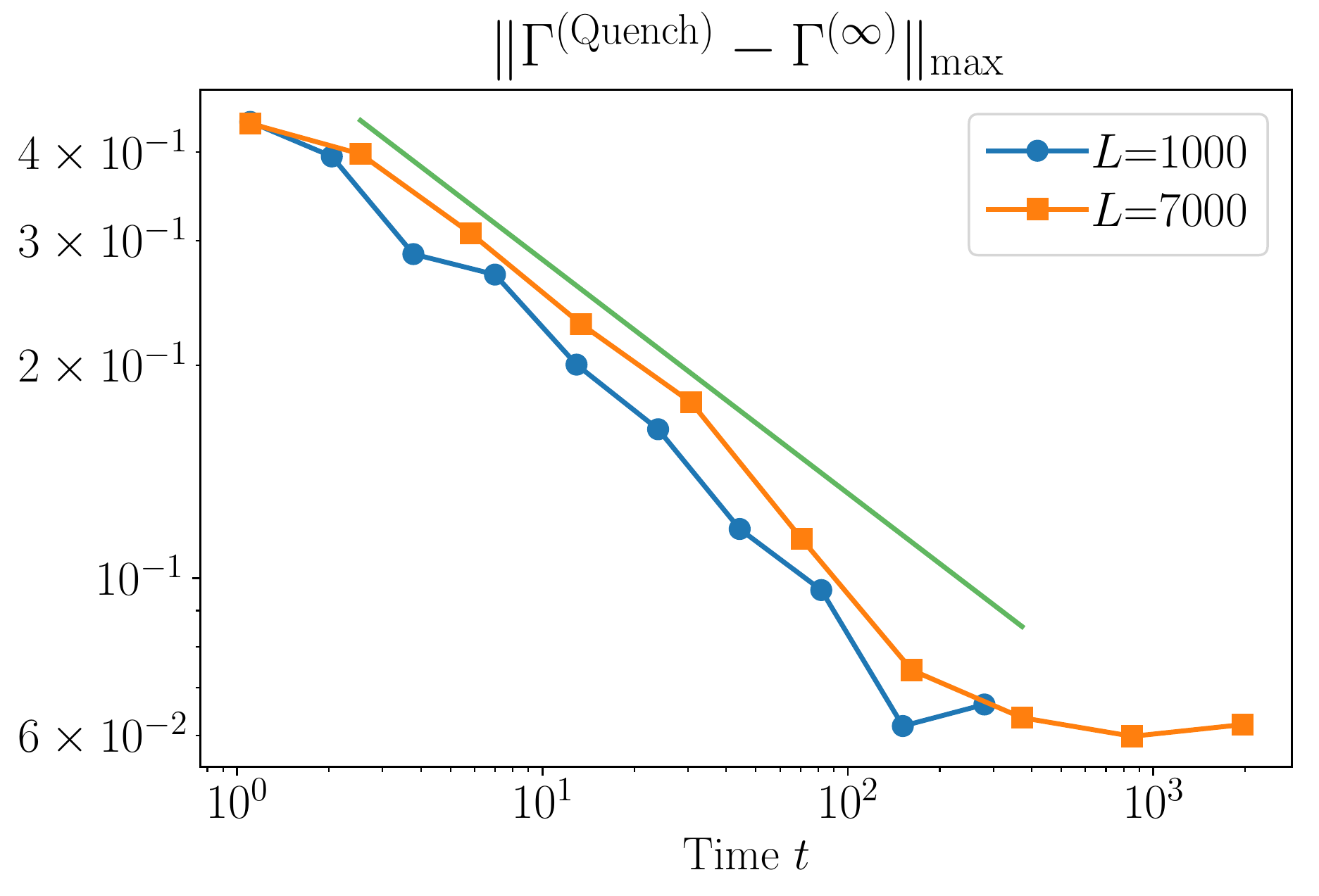}
\caption{We have sampled a thermal state of the Anderson insulator $\Gamma^{\rm (Quench)}$ for system sizes $L=1000,7000$ at $\beta =1$ and $w=5$ as an example of a strongly disordered initial condition. 
We find that after switching off the on-site disorder the ensuing non-equilibrium evolution under the nearest-neighbour hopping model leads to relaxation towards the infinite-time average $\Gamma^{\text(\infty)}$ which is indeed quantified in the functional form by a power-law in time $\|\Gamma^{\rm (Quench)}-\Gamma^{\text(\infty)}\|_\text{max}\sim t^{-\alpha}$.
The green line is a guide to the eye scaling as $\sim t^{-1/3}$.
At some point, the power-law relaxation must level off either due to finite system size, with the ultimate small parameter being $\epsilon \sim t_\text{R}^{-\alpha}\sim L^{-\alpha}$,  or due to the specific quasiparticle content $\curP$.
}
\label{fig:equlibration_large}
\end{figure}

To be specific, as a starting point we choose an initial covariance matrix which is not translation invariant and has a finite correlation length.
A natural way of assigning such initial conditions is to consider a Gibbs state of the Anderson insulator with Hamiltonian
\begin{align}
  \hat H_\xi=\sum_{x=1}^L \left(\fd_{x+1}\f_x+\fd_x\f_{x+1}+\xi_x \fd_x\f_x\right),
\end{align}
where the noise is uniformly distributed in the interval $\xi_x\in [-w,w]$ for $w>0$. 
We study the quench consisting in switching off the disorder, i.e., setting $\xi_x =0$ for all $x$.
Following a numerical calculation, the quenched state $\Gamma^{(\text{Quench})}(t)=\Gamma^{(\beta, \text{Anderson})}(t)$ can be seen to become 
largely homogeneous for sufficiently long duration of the evolution, see Fig~\ref{fig:equlibration_large}.
As a measure of equilibration, we make use of the max norm distance
\begin{align}
	\| \Gamma^{(1)}  -\Gamma^{(2)} \|_{\rm max} = \max_{x,y}
	\abs{ \Gamma^{(1)}_{x,y} -\Gamma^{(2)}_{x,y}}
  \end{align}
between two covariance matrices $\Gamma^{(1)}, \Gamma^{(2)}$.
Whenever $\| \Gamma^{(1)}  -\Gamma^{(2)} \|_{\rm max}$ is small, 
a large fidelity between the two states is implied \cite{bravyi2004lagrangian,banchi2015quantum}.
Fig.~\ref{fig:equlibration_large} provides further substance to the above established rigorous insights, 
in that a significant part of the equilibration is indeed governed by a power-law by comparing $\Gamma^{(\text{Quench})}(t)$ to the infinite time average $\Gamma^{({\infty})}$.
To further elaborate on this setting, we discuss the features of the equilibrium state,  see Fig.~\ref{fig:steady_state}.
We begin by investigating the difference  between the quenched state $\Gamma^{(\text{Quench})}(t)$ and a 
fit to a thermal covariance matrix $\Gamma^{(\beta,\mu,\text{fit})}$ of the quench Hamiltonian obtained from fitting over the temperature $\beta$ and chemical potential $\mu$.
We find that the discrepancy is already diminished for $L=100$ and $|\Gamma_{x,y}^{(\text{Quench})}(t)-\Gamma_{x,y}^{({\infty})}|$ is homogeneously distributed. 
%
%
  For the infinite time average we have $\|  \Gamma^{(\infty)} - \Gamma^{(\beta,\mu,\text{fit})} \|_{\rm max}\approx
 10^{-3}$ as the distance to the Gibbs state.
 The upshot of the findings
is that due to a concentration of measure effect, the resulting generalized Gibbs ensembles are with high probability close to an actual 
Gibbs state, with stray fluctuations being detected.  We discuss further details of this argument in Appendix \ref{app:thermal}.
\begin{figure}[h]
\includegraphics[trim = 2.5cm 0 2cm 0, clip, width=0.49\linewidth]{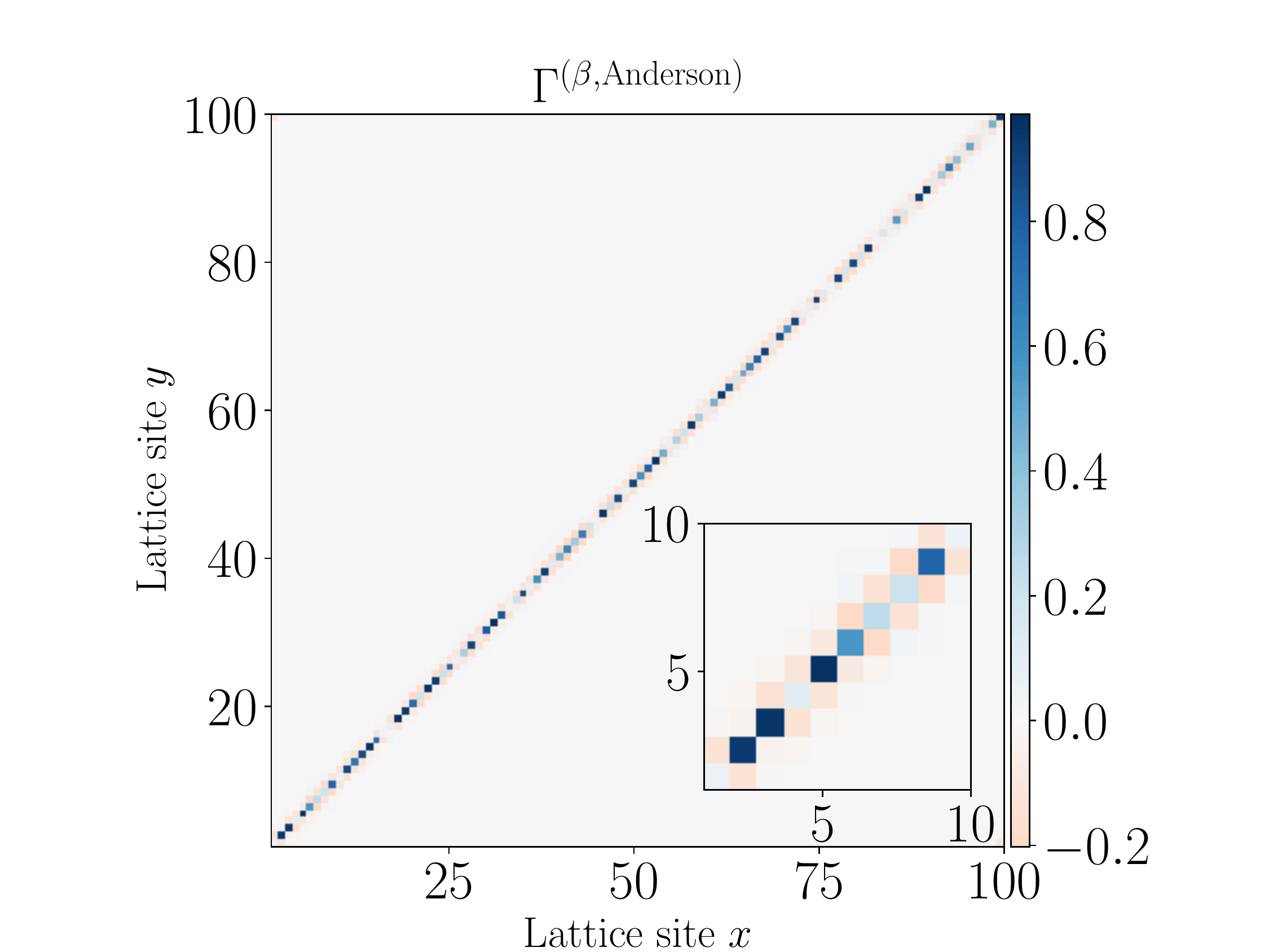}
\includegraphics[trim = 2.5cm 0 2cm 0, clip, width=0.49\linewidth]{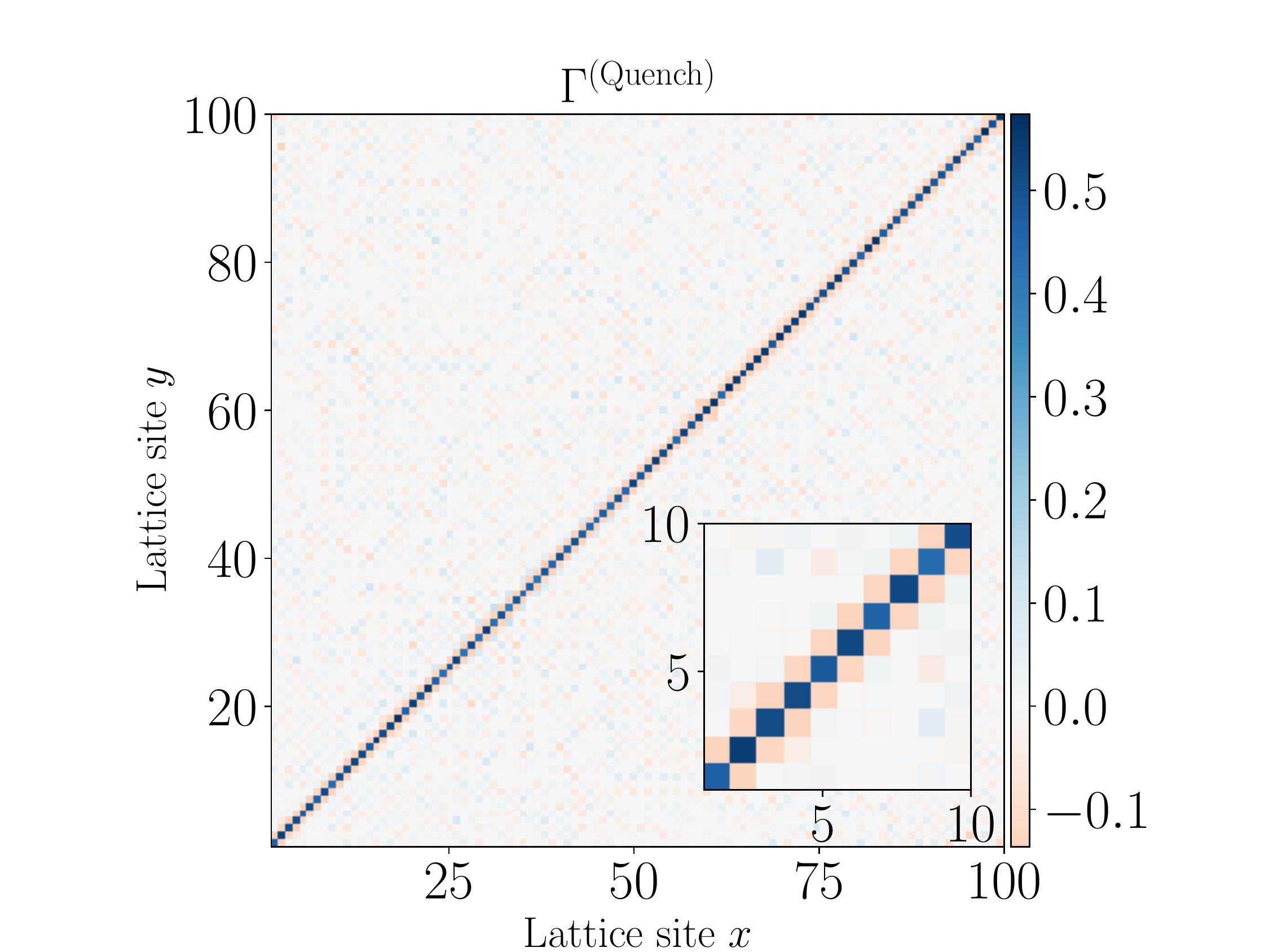}
\includegraphics[trim = 2.5cm 0 2cm 0, clip, width=0.49\linewidth]{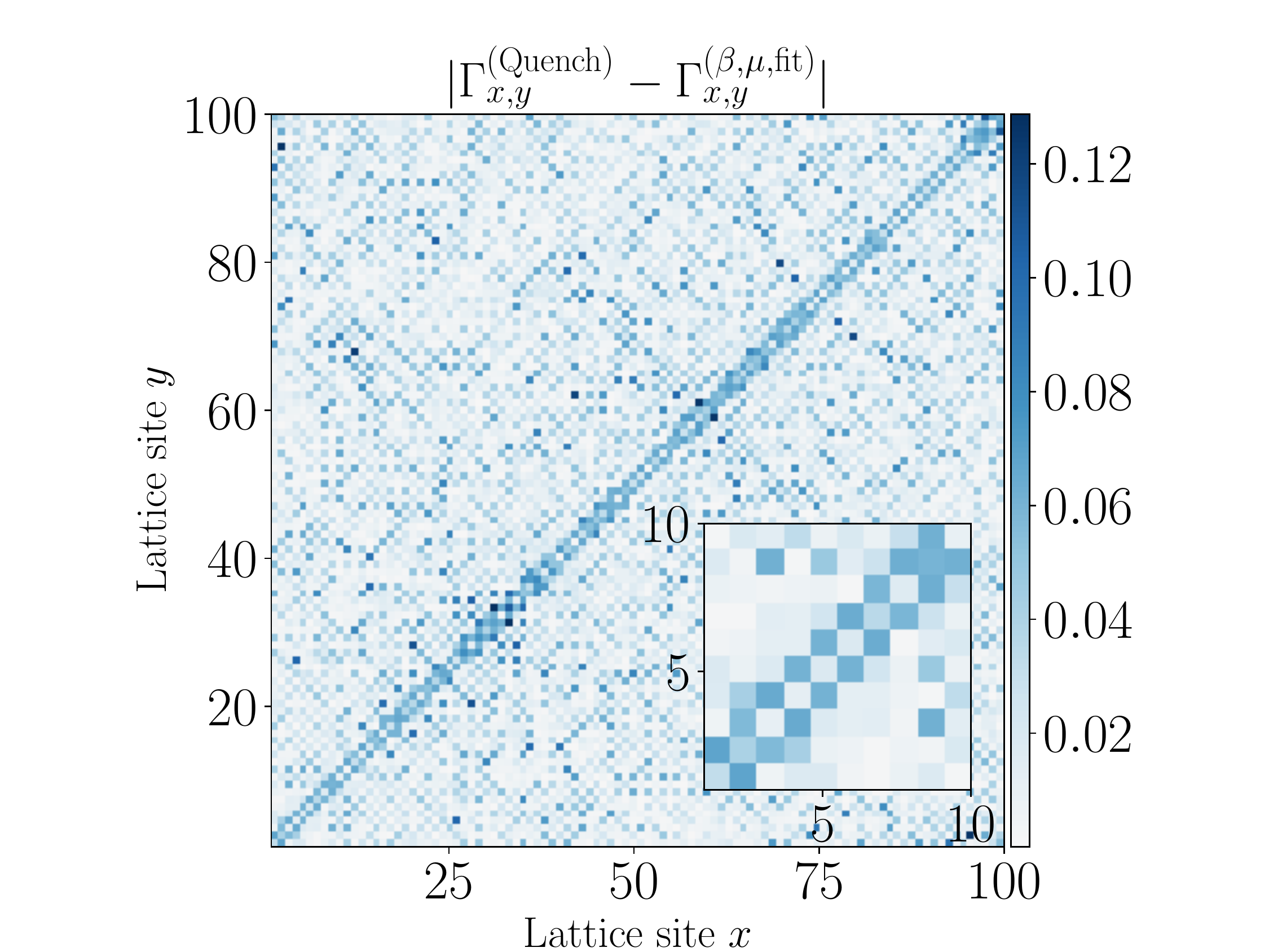}
\includegraphics[trim = 2.5cm 0 1.6cm 0, clip, width=0.49\linewidth]{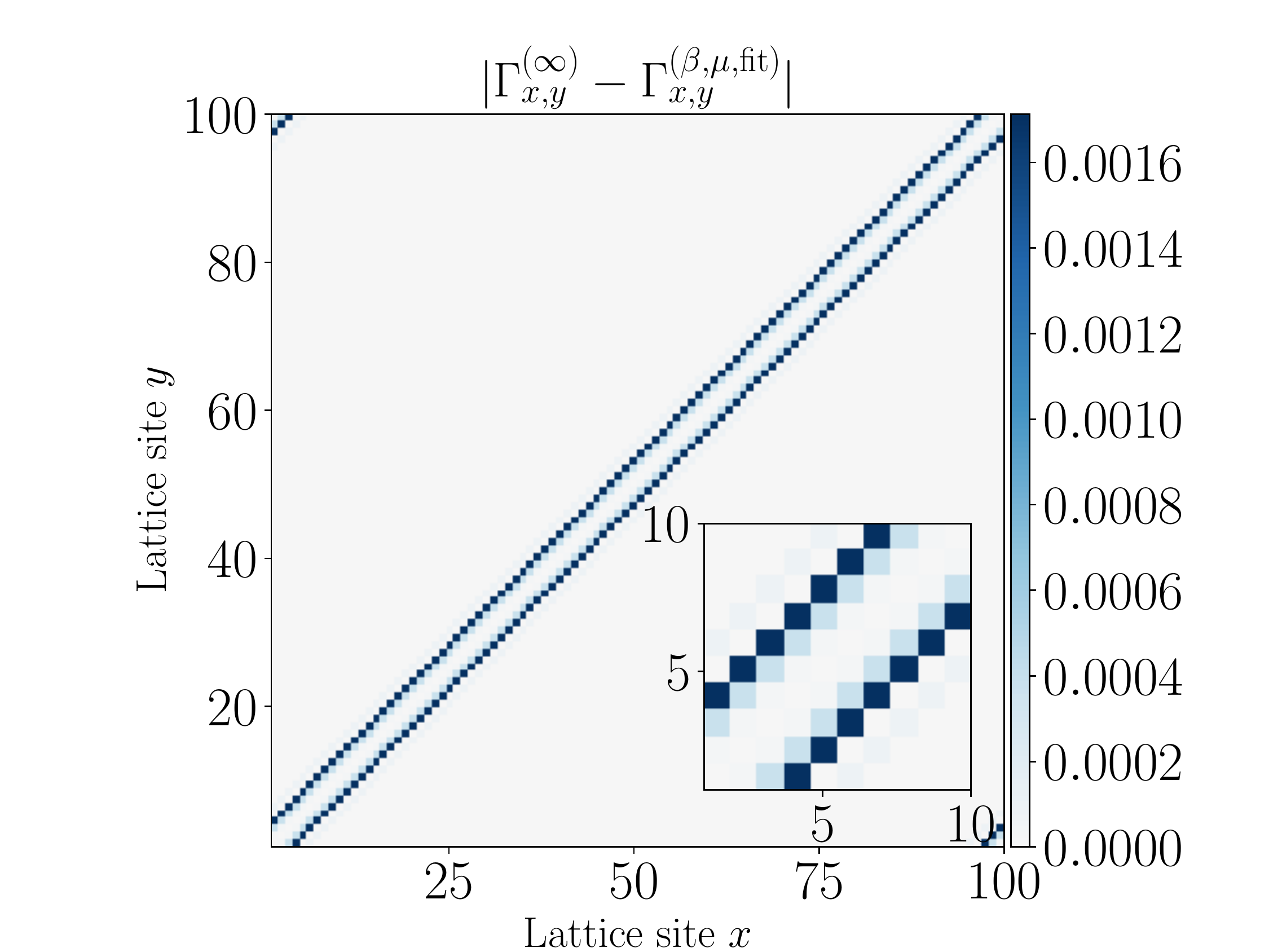}
\caption{A system initially in a thermal state of an Anderson insulator $\Gamma^{(\beta, \text{Anderson})}$ with $\beta=1$ and $w=5$ (top-left) can be quenched to translation invariant evolution by switching off the on-site disorder which results in approximate translation invariance $\Gamma^{(\text{Quench})}=\Gamma^{(\beta, \text{Anderson})}(t=L/4)$ (top-right). 
This can be quantified with a comparison to the thermal state of nearest-neighbour hopping $\Gamma^{(\beta,\mu,\text{fit})}$ obtained from fitting over the temperature $\beta$ and chemical potential $\mu$ (bottom-left). 
While deviations are seen the inverse system size $L^{-1}=10^{-2}$ is not a stringent small parameter. 
However if equilibration occurs for larger systems, then it will be towards the infinite time average $\Gamma^{(\infty)}$ which looks thermal at already small system sizes (bottom-right) and this property is retained when going towards the thermodynamical limit.
}
\label{fig:steady_state}
\end{figure}

\subsection{Realizing generalized Gibbs ensembles in optical lattices}

{Ultra-cold atoms} trapped in {optical lattices} \cite{bloch2008many} have proven to be an excellent platform for studying relaxation 
phenomena \cite{TCFMSEB12,schneider2012fermionic} in instances of {quantum simulators}
\cite{CiracZollerSimulation,BlochSimulation,Roadmap}, because the system is well 
isolated from the environment during the evolution and one can prepare with high-level of control states 
that have very visible non-equilibrium dynamics after the quench. 
Here we hint that with the present techniques that have been used in various settings one can prepare two different initial states that will equilibrate to 
two different steady states which are easily distinguishable -- despite the  Hamiltonian governing the dynamics being the same in both cases.
This is expected to be possible at least for intermediate times in instances of \emph{prethermalization},
before interaction effects will lead to a genuine full thermalization.
The first steady state would be one obtained from simply letting the gas equilibrate on the lattice.
For the second type of the steady state, we would prepare the initial state in the same way and perform the quench by suddenly doubling the lattice 
by adding in-between sites, exploiting optical \emph{super-lattices}, similar to the situation described in Ref.\ \cite{TCFMSEB12}.
In that situation the initial covariance matrix will feature a checker-board pattern with only the odd sites being occupied and currents being non-zero only between odd sites, see Fig.~\ref{fig:optical_lattice}.
Note that the specific details of how the doubling is performed are not important as long as the initial state preparation will feature a charge-density wave pattern -- however it is absolutely crucial for our example that the charge-density wave is also present in the current structure. 
The quench then consists in allowing for tunneling between all sites.
By our analytical result, the density pattern which is a $P=2$-periodic block structure will equilibrate to a uniform distribution at each site.
The same, again, will occur for each current individually.
Usually, the nearest-site tunnelling current will be the strongest so if we had $I_1=\langle \hat I_1\rangle$ before the doubling then the current will equilibrate to $I_1/2$ after the doubling of the lattice. However, the surprise value lies in the fact that this will be the next-nearest-neighbour current $I_2'$ in the new lattice, and in the steady state the final nearest-neighbour current should not be present $I_1'\approx 0$.
That is, after the quench, one will observe that there are only currents in the system in multiples of two sites, cf. Fig.\ \ref{fig:optical_lattice}.
This is a non-trivial  observation, because the sites that have been un-occupied immediately after the doubling will become occupied and there will be currents flowing out of them to the next-nearest sites, i.e., the neighbouring initially un-occupied sites.
In contrast, in the steady state there will be no tunnelling between the nearest-neighbour sites which is unintuitive as the Hamiltonian is nearest-neighbour showcasing peculiar memory effects that can be obtained with quenches to quasi-free evolution.
Realizing such a setup in gases where interactions can be controlled by a Feshbach resonance would also allow to study how nearest-neighbour currents can be generated by many-body scattering, an effect not present in a non-interacting Hamiltonian \cite{bertini2015prethermalization,moeckel2008interaction,moeckel2009real,moeckel2010crossover}.
\begin{figure}[h]
\includegraphics[trim = 2.5cm 0 2.3cm 0, clip,width=0.49\linewidth]{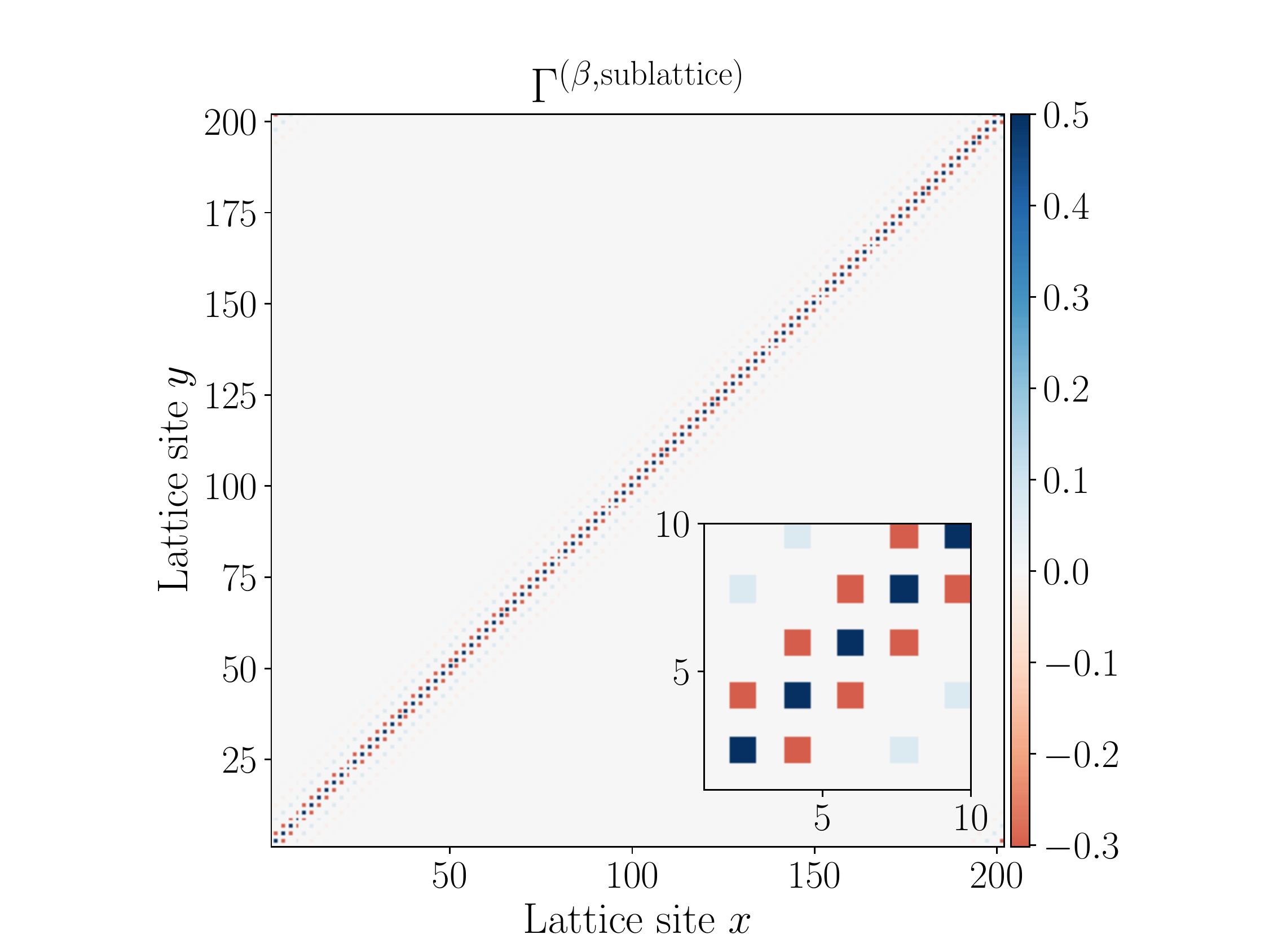}
\includegraphics[trim = 2.5cm 0 2.3cm 0, clip,width=0.49\linewidth]{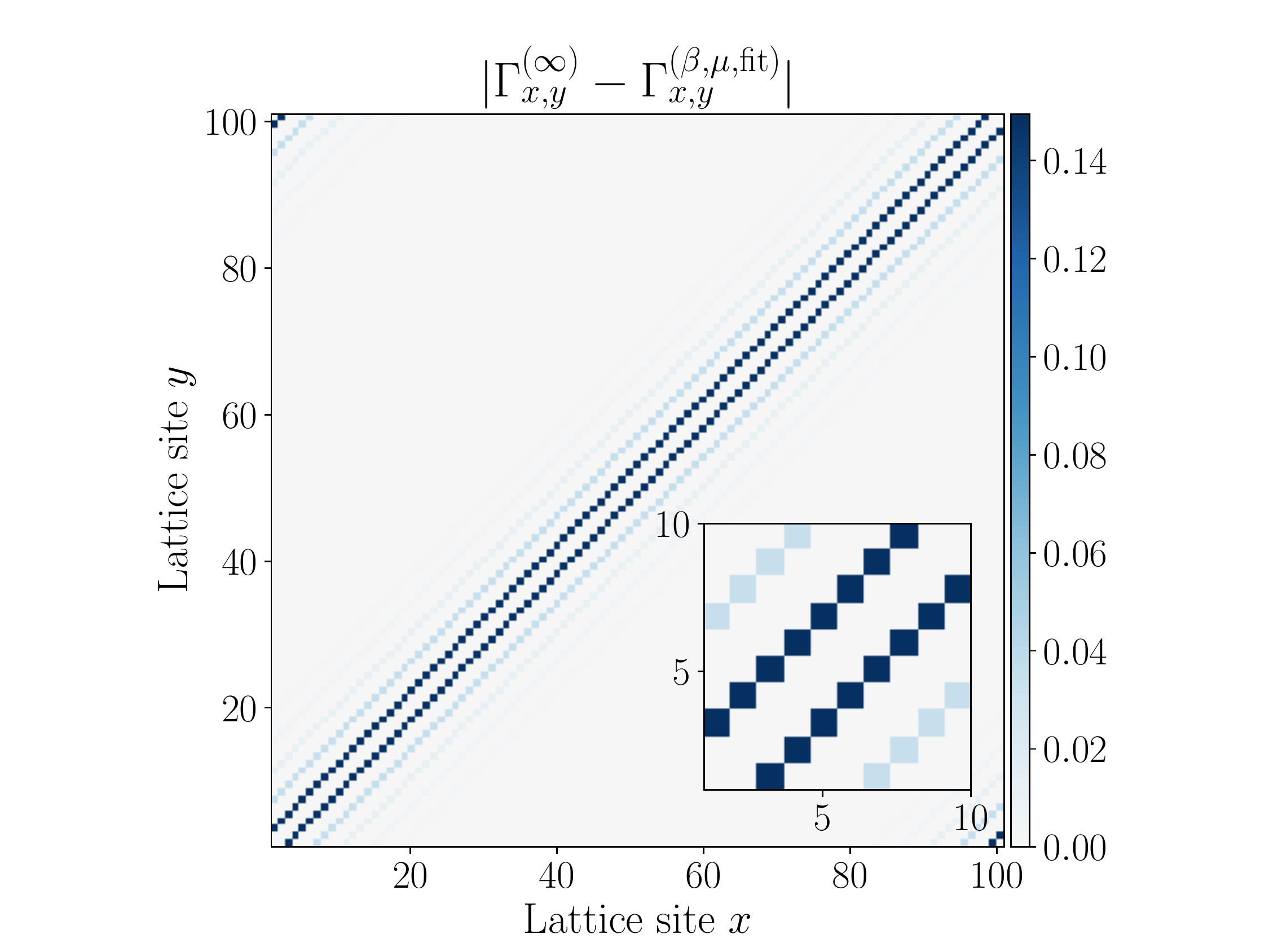}
\caption{A system initially in a thermal state of the nearest-neighbour hopping Hamiltonian (left) on a sub-lattice can be quenched to translation invariant evolution which results in approximate translation invariance as one relaxes towards the steady state in finite time. 
The special initial condition results in the absence of nearest-neighbour currents on the whole lattice in the infinite time average (right).
The best fit to a thermal state is given by an infinite temperature state with the corresponding filling ratio and strongly deviates from the steady state $\Gamma^{(\infty)}$.
This is despite the density distribution becoming homogenous because the state deviates from the thermal ensemble by the absence of the nearest-neighbour current $I_1'$ and the presence of the next-nearest neighbour tunnelling $I_2'$.
Note that there are particles and currents present on the initially unoccupied sub-lattice too.
This showcases a general approach to creating initial conditions that demand a description in terms of a GGE by exploiting the existence of memory in terms of conserved local currents.
}
\label{fig:optical_lattice}
\end{figure}

\section{Discussion and outlook}
In this work, we have established a widely applicable and very general situation in which the convergence to generalized Gibbs ensembes can be proven.
Specifically, we have shown in large generality that for large classes of natural initial conditions, local expectation values of systems
 relaxing under unitary dynamics generated by non-interacting Hamiltonians take the values of translation invariant genaralized Gibbs ensembles.
%
The emerging steady state is parametrized by thermodynamical potentials whose number is intensive, namely of the order of the initial correlation length in units of the lattice spacing.
Our assumption is that the quadratic Hamiltonian is translation invariant which leads to homogeneous spreading of particles on the lattice, a generic effect which we describe as a possible notion of ergodicity for quasi-free quantum systems.
We have given numerical examples illustrating our rigorous statements and explain how to observe non-trivial generalized Gibbs ensembles in, e.g., an optical lattice experiment. 

Specifically, we saw that locally the memory of initial, possibly non-Gaussian, correlations is lost via the process of Gaussification, which relies only on finite correlation length in the initial state.  Hence, even if the initial state preparation involves intricate interactions, a quench to quasi-free evolution will lead to a loss of memory of these initial strong correlations, and the state will obey Wick's theorem up to an error decaying algebraically in time.  Such states (i.e., Gaussian) are determined only by their covariance matrix which we show to equilibrate.
A necessary condition for this was that the initial current and density distributions did not have large-scale structure (which may still equilibrate, but only after a time of the order of the system size \cite{farrelly2016equilibration}).  Thus, we derived a rapid polynomial time-scale for equilibration (which is independent of the system size).  More precisely, the deviation from equilibrium of any normalized local correlation function is bounded by $\epsilon = O(t^{-\gamma})$, and the scaling is functionally tight, which we showed numerically.

The goal of our work was to show that it is possible to make rigorous statements concerning the dynamical emergence of statistical mechanics in mean-field models.
For this reason we had to leave several aspects of the subject unanswered.
Within our setting we have not discussed in detail the possibility of the infinite-time dephasing leading to steady states which are non-translation invariance due to degeneracy of the dispersion relation, and the proof of Lemma~\ref{app:steady_states} in the appendix hints at that.
It would also be interesting to understand in more detail if Gaussification is possible for Green's functions which have only very weak quasi-free ergodicity, i.e., $|G_{x,y}(t)| = O(  t^{-\gamma})$ for $\gamma<1/4$ for a significant number of entries $x,y$.
If the argument in Ref.~\cite{GKFGE16} is optimal then one should observe for $\gamma<1/4$ a temporal persistence of deviations from Wick's theorem for quenches of non-Gaussian states.

Concerning the question of adding small interactions one would expect the GGE examples that we have given to eventually thermalize.
Understanding the dynamical stability of the GGE description is important for applications, e.g., work extraction protocols
 \cite{perarnau2016work} but also is instrumental for our conceptual understanding of the emergence of thermalization.
Above, we have hinted at an open problem of characterizing the structure of dephased states as being thermal in light of computational complexity and that an interesting approach would be to first make progress concerning high-temperature quenches.

\section*{Note added}
Upon completion of this manuscript, a preprint presenting 
closely related results appeared \cite{Murthy18}.  Our work puts significantly more emphasis on
 including rigorous error bounds, whereas Ref.\ \cite{Murthy18} stresses more the physical intuition underlying the phenomena observed.
 The methods are also somewhat different (though related in spirit), as Ref.\ 
 \cite{Murthy18} uses stationary phase approximations, while we employ the machinery of Kusmin-Landau bounds.

\section*{Acknowledgements}
We gratefully acknowledge insightful discussions with M.\ Friesdorf, C.\ Krumnow, C.\ Gogolin, M.\ Goihl and T.\ J.\ Osborne.
\paragraph{Funding information}
We thank the ERC (TAQ), the DFG (CRC 183, EI 519/14-1, EI 519/7-1, FOR 2724), and the Templeton Foundation
for support. This work has also received funding from the European Union's Horizon 2020 research and innovation 
programme under grant agreement No 817482 (PASQUANS).

\bibliographystyle{SciPost_bibstyle}

\begin{thebibliography}{10}
\providecommand{\url}[1]{\texttt{#1}}
\providecommand{\urlprefix}{URL }
\expandafter\ifx\csname urlstyle\endcsname\relax
  \providecommand{\doi}[1]{doi:\discretionary{}{}{}#1}\else
  \providecommand{\doi}{doi:\discretionary{}{}{}\begingroup
  \urlstyle{rm}\Url}\fi
\providecommand{\eprint}[2][]{\url{#2}}

\bibitem{EFG14}
J.~Eisert, M.~Friesdorf and C.~Gogolin,
\newblock \emph{{Quantum many-body systems out of equilibrium}},
\newblock Nature Physics \textbf{11}, 124 (2015).

\bibitem{PSSV11}
A.~Polkovnikov, K.~Sengupta, A.~Silva and M.~Vengalattore,
\newblock \emph{Colloquium:\ nonequilibrium dynamics of closed interacting
  quantum systems},
\newblock Rev. Mod. Phys. \textbf{83}, 863 (2011).

\bibitem{GHRRS15}
J.~Goold, M.~Huber, A.~Riera, L.~del Rio and P.~Skrzypczyk,
\newblock \emph{The role of quantum information in thermodynamics—a topical
  review},
\newblock J. Phys. A \textbf{49}, 143001 (2016).

\bibitem{GE15}
C.~Gogolin and J.~Eisert,
\newblock \emph{Equilibration, thermalisation, and the emergence of statistical
  mechanics in closed quantum systems},
\newblock Rep. Prog. Phys. \textbf{79}, 056001 (2016).

\bibitem{MIKU17}
T.~Mori, T.~N. Ikeda, E.~Kaminishi and M.~Ueda,
\newblock \emph{{T}hermalization and prethermalization in isolated quantum
  systems:\ a theoretical overview}  (2017),
\newblock \eprint{1712.08790}.

\bibitem{KTLRSPG16}
A.~M. Kaufman, M.~E. Tai, A.~Lukin, M.~Rispoli, R.~Schittko, P.~M. Preiss and
  M.~Greiner,
\newblock \emph{Quantum thermalization through entanglement in an isolated
  many-body system},
\newblock Science \textbf{353}, 794 (2016).

\bibitem{KWW06}
T.~Kinoshita, T.~Wenger and D.~S. Weiss,
\newblock \emph{A quantum newton's cradle},
\newblock Nature \textbf{440}, 900 (2006).

\bibitem{BSKL17}
H.~Bernien, S.~Schwartz, A.~Keesling, H.~Levine, A.~Omran, H.~Pichler, S.~Choi,
  A.~S. Zibrov, M.~Endres, M.~Greiner, V.~Vuletić and M.~D. Lukin,
\newblock \emph{Probing many-body dynamics on a 51-atom quantum simulator},
\newblock Nature \textbf{551}, 579 (2017).

\bibitem{TMASP18}
C.~J. Turner, A.~A. Michailidis, D.~A. Abanin, M.~Serbyn and Z.~Papic,
\newblock \emph{{Q}uantum many-body scars},
\newblock Nature Phys. \textbf{14}, 745 (2018).

\bibitem{LZMSZZXZCS18}
C.~Li, T.~Zhou, I.~Mazets, H.~Stimming, Z.~Zhu, Y.~Zhai, W.~Xiong, X.~Zhou,
  X.~Chen and J.~Schmiedmayer,
\newblock \emph{{D}ephasing and relaxation of bosons in 1{D}: {N}ewton's
  {C}radle revisited}  (2018),
\newblock \eprint{1804.01969v1}.

\bibitem{SHBLFVASB15}
M.~Schreiber, S.~S. Hodgman, P.~Bordia, H.~P. L{\"u}schen, M.~H. Fischer,
  R.~Vosk, E.~Altman, U.~Schneider and I.~Bloch,
\newblock \emph{Observation of many-body localization of interacting fermions
  in a quasi-random optical lattice},
\newblock Science \textbf{349}, 842 (2015).

\bibitem{LEGRSKRMGS15}
T.~Langen, S.~Erne, R.~Geiger, B.~Rauer, T.~Schweigler, M.~Kuhnert,
  W.~Rohringer, I.~E. Mazets, T.~Gasenzer and J.~Schmiedmayer,
\newblock \emph{Experimental observation of a generalized gibbs ensemble},
\newblock Science \textbf{348}, 207 (2015).

\bibitem{AABS18}
D.~A. Abanin, E.~Altman, I.~Bloch and M.~Serbyn,
\newblock \emph{Many-body localization, thermalization, and entanglement},
\newblock Rev. Mod. Phys. \textbf{91}, 021001 (2019).

\bibitem{CDEO08}
M.~Cramer, C.~M. Dawson, J.~Eisert and T.~J. Osborne,
\newblock \emph{Exact relaxation in a class of nonequilibrium quantum lattice
  systems},
\newblock Phys. Rev. Lett. \textbf{100}, 030602 (2008).

\bibitem{Rigol_etal08}
M.~Rigol, V.~Dunjko and M.~Olshanii,
\newblock \emph{Thermalization and its mechanism for generic isolated quantum
  systems},
\newblock Nature \textbf{452}, 854 (2008).

\bibitem{RigolFirst}
M.~Rigol, V.~Dunjko, V.~Yurovsky and M.~Olshanii,
\newblock \emph{Relaxation in a completely integrable many-body quantum system:
  An ab initio study of the dynamics of the highly excited states of {1D}
  lattice hard-core bosons},
\newblock Phys. Rev. Lett. \textbf{98}, 050405 (2007).

\bibitem{CauxEssler}
J.-S. Caux and F.~H.~L. Essler,
\newblock \emph{Time evolution of local observables after quenching to an
  integrable model},
\newblock Phys. Rev. Lett. \textbf{110}, 257203 (2013).

\bibitem{VR16}
L.~Vidmar and M.~Rigol,
\newblock \emph{Generalized gibbs ensemble in integrable lattice models},
\newblock J. Stat. Mech. \textbf{2016}, 064007 (2016).

\bibitem{calabrese2012quantum}
P.~Calabrese, F.~H. Essler and M.~Fagotti,
\newblock \emph{Quantum quench in the transverse field ising chain: I. time
  evolution of order parameter correlators},
\newblock J. Stat. Mech. \textbf{2012}, P07016 (2012).

\bibitem{Wouters2014a}
B.~Wouters, M.~Brockmann, J.~{De Nardis}, D.~Fioretto, M.~Rigol and J.-S. Caux,
\newblock \emph{{Quenching the Anisotropic Heisenberg Chain: Exact Solution and
  Generalized Gibbs Ensemble}},
\newblock Phys. Rev. Lett. \textbf{113}, 117202 (2014).

\bibitem{huse2014phenomenology}
D.~A. Huse, R.~Nandkishore and V.~Oganesyan,
\newblock \emph{Phenomenology of fully many-body-localized systems},
\newblock Phys. Rev. B \textbf{90}, 174202 (2014).

\bibitem{Tasaki98}
H.~Tasaki,
\newblock \emph{From quantum dynamics to the canonical distribution:\ general
  picture and a rigorous example},
\newblock Phys. Rev. Lett. \textbf{80}, 1373 (1998).

\bibitem{Reimann08}
P.~Reimann,
\newblock \emph{Foundation of statistical mechanics under experimentally
  realistic conditions},
\newblock Phys. Rev. Lett. \textbf{101}, 190403 (2008).

\bibitem{LPSW09}
N.~Linden, S.~Popescu, A.~J. Short and A.~Winter,
\newblock \emph{Quantum mechanical evolution towards thermal equilibrium},
\newblock Phys. Rev. E \textbf{79}, 061103 (2009).

\bibitem{short2012quantum}
A.~J. Short and T.~C. Farrelly,
\newblock \emph{Quantum equilibration in finite time},
\newblock New J. Phys. \textbf{14}, 013063 (2012).

\bibitem{RK12}
P.~Reimann and M.~Kastner,
\newblock \emph{Equilibration of isolated macroscopic quantum systems},
\newblock New J. Phys. \textbf{14}, 043020 (2012).

\bibitem{GHT13}
S.~Goldstein, T.~Hara and H.~Tasaki,
\newblock \emph{Time scales in the approach to equilibrium of macroscopic
  quantum systems},
\newblock Phys. Rev. Lett. \textbf{111}, 140401 (2013).

\bibitem{malabarba2014quantum}
A.~S. Malabarba, L.~P. Garc{\'\i}a-Pintos, N.~Linden, T.~C. Farrelly and A.~J.
  Short,
\newblock \emph{Quantum systems equilibrate rapidly for most observables},
\newblock Phys. Rev. E \textbf{90}, 012121 (2014).

\bibitem{MRA13}
L.~Masanes, A.~Roncaglia and A.~Ac\'{i}n,
\newblock \emph{Complexity of energy eigenstates as a mechanism for
  equilibration},
\newblock Phys. Rev. E \textbf{87}, 032137 (2013).

\bibitem{BCHHKM12}
F.~G. S.~L. Brand\~ao, P.~\ifmmode \acute{C}\else
  \'{C}\fi{}wikli\ifmmode~\acute{n}\else \'{n}\fi{}ski, M.~Horodecki,
  P.~Horodecki, J.~K. Korbicz and M.~Mozrzymas,
\newblock \emph{Convergence to equilibrium under a random {H}amiltonian},
\newblock Phys. Rev. E \textbf{86}, 031101 (2012).

\bibitem{Cramer12}
M.~Cramer,
\newblock \emph{Thermalization under randomized local {H}amiltonians},
\newblock New J. Phys. \textbf{14}, 053051 (2012).

\bibitem{VZ12}
Vinayak and M.~Žnidarič,
\newblock \emph{Subsystem dynamics under random {H}amiltonian evolution},
\newblock J. Phys. A \textbf{45}, 125204 (2012).

\bibitem{UWE12}
C.~Ududec, N.~Wiebe and J.~Emerson,
\newblock \emph{Information-theoretic equilibration:\ the appearance of
  irreversibility under complex quantum dynamics},
\newblock Phys. Rev. Lett. \textbf{111}, 080403 (2013).

\bibitem{Rei16}
P.~Reimann,
\newblock \emph{Typical fast thermalization processes in closed many-body
  systems},
\newblock Nat. Commun. \textbf{7}, 10821 (2016).

\bibitem{WGKE17}
H.~Wilming, M.~Goihl, C.~Krumnow and J.~Eisert,
\newblock \emph{Towards local equilibration in closed interacting quantum
  many-body systems}  (2017),
\newblock \eprint{1704.06291}.

\bibitem{dOCJLR17}
T.~R. de~Oliveira, C.~Charalambous, D.~Jonathan, M.~Lewenstein and A.~Riera,
\newblock \emph{Equilibration time scales in closed many-body quantum systems}
  (2017),
\newblock \eprint{1704.06646}.

\bibitem{GLMSW15}
L.~P. Garcia-Pintos, N.~Linden, A.~S.~L. Malabarba, A.~J. Short and A.~Winter,
\newblock \emph{{Equilibration time scales of physically relevant observables}}
   (2015),
\newblock \eprint{1509.05732}.

\bibitem{farrelly2016equilibration}
T.~Farrelly,
\newblock \emph{Equilibration of quantum gases},
\newblock New J. Phys. \textbf{18}, 073014 (2016).

\bibitem{EF16}
F.~H.~L. Essler and M.~Fagotti,
\newblock \emph{Quench dynamics and relaxation in isolated integrable quantum
  spin chains},
\newblock J. Stat. Mech. \textbf{2016}, 064002 (2016).

\bibitem{Rigol2007}
M.~Rigol, V.~Dunjko, V.~Yurovsky and M.~Olshanii,
\newblock \emph{Relaxation in a completely integrable many-body quantum system:
  An ab initio study of the dynamics of the highly excited states of {1D}
  lattice hard-core bosons},
\newblock Phys. Rev. Lett. \textbf{98}, 050405 (2007).

\bibitem{PhysRevA.80.053607}
M.~Rigol,
\newblock \emph{Quantum quenches and thermalization in one-dimensional
  fermionic systems},
\newblock Phys. Rev. A \textbf{80}, 053607 (2009).

\bibitem{cramer2008exploring}
M.~Cramer, A.~Flesch, I.~McCulloch, U.~Schollw{\"o}ck and J.~Eisert,
\newblock \emph{Exploring local quantum many-body relaxation by atoms in
  optical superlattices},
\newblock Phys. Rev. Lett. \textbf{101}, 063001 (2008).

\bibitem{PhysRevE.88.032913}
P.~R. Zangara, A.~D. Dente, E.~J. Torres-Herrera, H.~M. Pastawski, A.~Iucci and
  L.~F. Santos,
\newblock \emph{Time fluctuations in isolated quantum systems of interacting
  particles},
\newblock Phys. Rev. E \textbf{88}, 032913 (2013).

\bibitem{PhysRevB.81.115131}
M.~Eckstein, M.~Kollar and P.~Werner,
\newblock \emph{Interaction quench in the hubbard model: Relaxation of the
  spectral function and the optical conductivity},
\newblock Phys. Rev. B \textbf{81}, 115131 (2010).

\bibitem{PhysRevLett.112.065301}
J.-S. Bernier, R.~Citro, C.~Kollath and E.~Orignac,
\newblock \emph{Correlation dynamics during a slow interaction quench in a
  one-dimensional bose gas},
\newblock Phys. Rev. Lett. \textbf{112}, 065301 (2014).

\bibitem{THKS15}
E.~J. Torres-Herrera, D.~Kollmar and L.~F. Santos,
\newblock \emph{Relaxation and thermalization of isolated many-body quantum
  systems},
\newblock Physica Scripta \textbf{2015}, 014018 (2015).

\bibitem{Santos17}
L.~F. Santos and E.~J. Torres-Herrera,
\newblock \emph{{N}onequilibrium quantum dynamics of many-body systems}
  (2017),
\newblock \eprint{1706.02031}.

\bibitem{Khodja13}
A.~Khodja and J.~Gemmer,
\newblock \emph{Effect of short-range order on transport in one-particle
  tight-binding models},
\newblock Phys. Rev. E \textbf{88}, 042103 (2013).

\bibitem{Mordell58}
L.~Mordell,
\newblock \emph{On the {K}usmin-{L}andau inequality for exponential sums},
\newblock Acta Arithmetica \textbf{4}, 3 (1958).

\bibitem{GKFGE16}
M.~Gluza, C.~Krumnow, M.~Friesdorf, C.~Gogolin and J.~Eisert,
\newblock \emph{Equilibration via gaussification in fermionic lattice systems},
\newblock Phys. Rev. Lett. \textbf{117}, 190602 (2016).

\bibitem{CE10}
M.~Cramer and J.~Eisert,
\newblock \emph{A quantum central limit theorem for non-equilibrium systems:\
  exact local relaxation of correlated states},
\newblock New J. Phys. \textbf{12}, 055020 (2010).

\bibitem{calabrese2011quantum}
P.~Calabrese, F.~H. Essler and M.~Fagotti,
\newblock \emph{Quantum quench in the transverse-field ising chain},
\newblock Phys. Rev. Lett. \textbf{106}, 227203 (2011).

\bibitem{araki1969gibbs}
H.~Araki,
\newblock \emph{Gibbs states of a one dimensional quantum lattice},
\newblock Commun. Math. Phys. \textbf{14}, 120 (1969).

\bibitem{flesch2008probing}
A.~Flesch, M.~Cramer, I.~McCulloch, U.~Schollw{\"o}ck and J.~Eisert,
\newblock \emph{Probing local relaxation of cold atoms in optical
  superlattices},
\newblock Phys. Rev. A \textbf{78}, 033608 (2008).

\bibitem{fagotti2013reduced}
M.~Fagotti and F.~H. Essler,
\newblock \emph{Reduced density matrix after a quantum quench},
\newblock Phys. Rev. B \textbf{87}, 245107 (2013).

\bibitem{sotiriadis2016memory}
S.~Sotiriadis,
\newblock \emph{Memory-preserving equilibration after a quantum quench in a
  one-dimensional critical model},
\newblock Phys. Rev. A \textbf{94}, 031605 (2016).

\bibitem{sotiriadis2014validity}
S.~Sotiriadis and P.~Calabrese,
\newblock \emph{Validity of the gge for quantum quenches from interacting to
  noninteracting models},
\newblock J. Stat. Mech. \textbf{2014}, P07024 (2014).

\bibitem{short2011equilibration}
A.~J. Short,
\newblock \emph{Equilibration of quantum systems and subsystems},
\newblock New J. Phys. \textbf{13}, 053009 (2011).

\bibitem{farrelly2017thermalization}
T.~Farrelly, F.~G. Brand{\~a}o and M.~Cramer,
\newblock \emph{Thermalization and return to equilibrium on finite quantum
  lattice systems},
\newblock Phys. Rev. Lett. \textbf{118}, 140601 (2017).

\bibitem{ilievski2015complete}
E.~Ilievski, J.~De~Nardis, B.~Wouters, J.-S. Caux, F.~H. Essler and T.~Prosen,
\newblock \emph{Complete generalized gibbs ensembles in an interacting theory},
\newblock Phys. Rev. Lett. \textbf{115}, 157201 (2015).

\bibitem{nielsen2002quantum}
M.~A. Nielsen and I.~L. Chuang,
\newblock \emph{Quantum computation and quantum information},
\newblock Cambridge Series on Information and the Natural Sciences. Cambridge
  University Press,
\newblock ISBN 9780521635035 (2000).

\bibitem{osborne2012hamiltonian}
T.~J. Osborne,
\newblock \emph{Hamiltonian complexity},
\newblock Rep. Prof. Phys. \textbf{75}, 022001 (2012).

\bibitem{gharibian2015quantum}
S.~Gharibian, Y.~Huang, Z.~Landau, S.~W. Shin \emph{et~al.},
\newblock \emph{Quantum hamiltonian complexity},
\newblock Found. Tr. Th. Comp. Sc. \textbf{10}, 159 (2015).

\bibitem{dowling2008geometry}
M.~R. Dowling and M.~A. Nielsen,
\newblock \emph{The geometry of quantum computation},
\newblock Quant. Inf. Comp. \textbf{8}, 861 (2008).

\bibitem{terhal2000problem}
B.~M. Terhal and D.~P. DiVincenzo,
\newblock \emph{Problem of equilibration and the computation of correlation
  functions on a quantum computer},
\newblock Phys. Rev. A \textbf{61}, 022301 (2000).

\bibitem{witten2015three}
E.~Witten,
\newblock \emph{Three lectures on topological phases of matter}  (2015),
\newblock \eprint{1510.07698}.

\bibitem{PhysRev.109.1492}
P.~W. Anderson,
\newblock \emph{Absence of diffusion in certain random lattices},
\newblock Phys. Rev. \textbf{109}, 1492 (1958).

\bibitem{richerme2014non}
P.~Richerme, Z.-X. Gong, A.~Lee, C.~Senko, J.~Smith, M.~Foss-Feig,
  S.~Michalakis, A.~V. Gorshkov and C.~Monroe,
\newblock \emph{Non-local propagation of correlations in quantum systems with
  long-range interactions},
\newblock Nature \textbf{511}, 198 (2014).

\bibitem{1309.2308}
J.~Eisert, M.~van~den Worm, S.~R. Manmana and M.~Kastner,
\newblock \emph{Breakdown of quasilocality in long-range quantum lattice
  models},
\newblock Phys. Rev. Lett. \textbf{111}, 260401 (2013).

\bibitem{lieb1972finite}
E.~H. Lieb and D.~W. Robinson,
\newblock \emph{The finite group velocity of quantum spin systems},
\newblock In \emph{Statistical Mechanics}, pp. 425--431. Springer (1972).

\bibitem{nachtergaele2006lieb}
B.~Nachtergaele and R.~Sims,
\newblock \emph{Lieb-robinson bounds and the exponential clustering theorem},
\newblock Communications in mathematical physics \textbf{265}, 119 (2006).

\bibitem{hastings2006spectral}
M.~B. Hastings and T.~Koma,
\newblock \emph{Spectral gap and exponential decay of correlations},
\newblock Communications in mathematical physics \textbf{265}, 781 (2006).

\bibitem{cramer2008locality}
M.~Cramer, A.~Serafini and J.~Eisert,
\newblock \emph{Locality of dynamics in general harmonic quantum systems}
  (2008),
\newblock \eprint{0803.0890}.

\bibitem{kliesch2014lieb}
M.~Kliesch, C.~Gogolin and J.~Eisert,
\newblock \emph{Lieb-robinson bounds and the simulation of time-evolution of
  local observables in lattice systems},
\newblock In \emph{Many-Electron Approaches in Physics, Chemistry and
  Mathematics}, pp. 301--318. Springer (2014).

\bibitem{WignerWightmanWick}
G.~C. Wick, A.~S. Wightman and E.~P. Wigner,
\newblock \emph{The intrinsic parity of elementary particles},
\newblock Phys. Rev. \textbf{88}, 101 (1952).

\bibitem{Earman}
J.~Earman,
\newblock \emph{Superselection rules for philosophers},
\newblock Erkenntnis \textbf{69}, 377 (2008).

\bibitem{bach1994generalized}
V.~Bach, E.~H. Lieb and J.~P. Solovej,
\newblock \emph{Generalized hartree-fock theory and the hubbard model},
\newblock J. Stat. Phys. \textbf{76}, 3 (1994).

\bibitem{kraus2010generalized}
C.~V. Kraus and J.~I. Cirac,
\newblock \emph{{Generalized Hartree--Fock theory for interacting fermions in
  lattices: numerical methods}},
\newblock New J. Phys. \textbf{12}, 113004 (2010).

\bibitem{barmettler2013impact}
P.~Barmettler and C.~Kollath,
\newblock \emph{Impact of local integrals of motion to metastable
  non-equilibrium states},
\newblock arXiv preprint arXiv:1312.5757  (2013).

\bibitem{PhysRevA.93.053620}
A.~S. Buyskikh, M.~Fagotti, J.~Schachenmayer, F.~Essler and A.~J. Daley,
\newblock \emph{Entanglement growth and correlation spreading with
  variable-range interactions in spin and fermionic tunneling models},
\newblock Phys. Rev. A \textbf{93}, 053620 (2016).

\bibitem{cevolani2016spreading}
L.~Cevolani, G.~Carleo and L.~Sanchez-Palencia,
\newblock \emph{Spreading of correlations in exactly solvable quantum models
  with long-range interactions in arbitrary dimensions},
\newblock New J. Phys. \textbf{18}, 093002 (2016).

\bibitem{GWEG17}
R.~Gallego, H.~Wilming, J.~Eisert and C.~Gogolin,
\newblock \emph{What it takes to shun equilibration},
\newblock Phys. Rev. A \textbf{98}, 022135 (2018).

\bibitem{bravyi2004lagrangian}
S.~Bravyi,
\newblock \emph{Lagrangian representation for fermionic linear optics},
\newblock arXiv preprint quant-ph/0404180  (2004).

\bibitem{banchi2015quantum}
L.~Banchi, S.~L. Braunstein and S.~Pirandola,
\newblock \emph{Quantum fidelity for arbitrary gaussian states},
\newblock Phys. Rev. Lett. \textbf{115}, 260501 (2015).

\bibitem{bloch2008many}
I.~Bloch, J.~Dalibard and W.~Zwerger,
\newblock \emph{Many-body physics with ultracold gases},
\newblock Rev. Mod. Phys. \textbf{80}, 885 (2008).

\bibitem{TCFMSEB12}
S.~Trotzky, Y.-A. Chen, A.~Flesch, I.~P. McCulloch, U.~Schollwock, J.~Eisert
  and I.~Bloch,
\newblock \emph{{Probing the relaxation towards equilibrium in an isolated
  strongly correlated one-dimensional Bose gas}},
\newblock Nature Physics \textbf{8}, 325 (2012).

\bibitem{schneider2012fermionic}
U.~Schneider, L.~Hackerm{\"u}ller, J.~P. Ronzheimer, S.~Will, S.~Braun,
  T.~Best, I.~Bloch, E.~Demler, S.~Mandt, D.~Rasch \emph{et~al.},
\newblock \emph{Fermionic transport and out-of-equilibrium dynamics in a
  homogeneous hubbard model with ultracold atoms},
\newblock Nature Physics \textbf{8}, 213 (2012).

\bibitem{CiracZollerSimulation}
J.~I. Cirac and P.~Zoller,
\newblock \emph{Goals and opportunities in quantum simulation},
\newblock Nature Phys. \textbf{8}, 264 (2012).

\bibitem{BlochSimulation}
I.~Bloch, J.~Dalibard and S.~Nascimbene,
\newblock \emph{Quantum simulations with ultracold quantum gases},
\newblock Nature Phys. \textbf{8}, 267 (2012).

\bibitem{Roadmap}
A.~Acin, I.~Bloch, H.~Buhrman, T.~Calarco, C.~Eichler, J.~Eisert, D.~Esteve,
  N.~Gisin, S.~J. Glaser, F.~Jelezko, S.~Kuhr, M.~Lewenstein \emph{et~al.},
\newblock \emph{{The European quantum technologies roadmap}},
\newblock New J. Phys. \textbf{20}, 080201 (2018).

\bibitem{bertini2015prethermalization}
B.~Bertini, F.~H. Essler, S.~Groha and N.~J. Robinson,
\newblock \emph{Prethermalization and thermalization in models with weak
  integrability breaking},
\newblock Phys. Rev. Lett. \textbf{115}, 180601 (2015).

\bibitem{moeckel2008interaction}
M.~Moeckel and S.~Kehrein,
\newblock \emph{Interaction quench in the hubbard model},
\newblock Phys. Rev. Lett. \textbf{100}, 175702 (2008).

\bibitem{moeckel2009real}
M.~Moeckel and S.~Kehrein,
\newblock \emph{Real-time evolution for weak interaction quenches in quantum
  systems},
\newblock Ann. Phys. \textbf{324}, 2146 (2009).

\bibitem{moeckel2010crossover}
M.~Moeckel and S.~Kehrein,
\newblock \emph{Crossover from adiabatic to sudden interaction quenches in the
  hubbard model: prethermalization and non-equilibrium dynamics},
\newblock New J. Phys. \textbf{12}, 055016 (2010).

\bibitem{perarnau2016work}
M.~Perarnau-Llobet, A.~Riera, R.~Gallego, H.~Wilming and J.~Eisert,
\newblock \emph{Work and entropy production in generalised gibbs ensembles},
\newblock New J. Phys. \textbf{18}, 123035 (2016).

\bibitem{Murthy18}
C.~Murthy and M.~Srednicki,
\newblock \emph{{O}n relaxation to gaussian and generalized {G}ibbs states in
  systems of particles with quadratic {H}amiltonians}  (2018),
\newblock \eprint{1809.03681v1}.

\bibitem{abramowitz1965handbook}
M.~Abramowitz and I.~A. Stegun,
\newblock \emph{Handbook of mathematical functions: with formulas, graphs, and
  mathematical tables}, vol.~55,
\newblock Courier Corporation (1965).
\end{thebibliography}

%

\appendix
\section{Quasi-free propagators generated by non-interacting Hamiltonians}
\label{app:propagators}

\subsection{Bosonic and fermionic lattice models}
In this section we will derive the propagator representation from the main text.
All statements concern quasi-free Hamiltonians conserving the total particle number.
\paragraph{Fermions.}
A fermionic annihilation operator acting on mode $x$ is denoted by $\f_x$.
These operators obey the canonical anti-commutation relations $\{\f_x,\fd_y\}=\f_x\fd_y+\fd_y\f_x =\delta_{x,y}$ and $\{ \f_x,\f_y\}=\{ \fd_x,\fd_y\}=0\}$.
Quasifree fermionic Hamiltonians conserving the particle number are  of the form
\begin{align}
  \hat H(h)=\sum_{x,y}^L h_{x,y} \fd_x \f_y
\end{align}
where $h=h^\dagger\in \mathbb C^{\L\times \L}$ is the coupling matrix for a finite system size $\L$.

\begin{lemma}[Fermionic propagator]
We have
\begin{equation}
\f_x(t)=e^{i t \hat H(h)}\f_x e^{-i t \hat H(h)}=\sum_{y=1}^L G^*_{x,y}(t) \f_y
\end{equation}
where propagator is given by $G^*(t)=e^{-i t h}$.
\end{lemma}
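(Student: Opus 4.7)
The plan is to derive the claim from the Heisenberg equation of motion, exploiting the fact that a quadratic particle-number-conserving fermionic Hamiltonian maps the linear space of annihilation operators into itself. Concretely, I would make the ansatz $\f_x(t) = \sum_{y=1}^L M_{x,y}(t)\f_y$ for some matrix-valued function $t \mapsto M(t)$ with $M(0)=\mathds{1}$, substitute into $\partial_t \f_x(t) = \irm [\hat H(h),\f_x(t)]$, and show that this reduces to a linear first-order matrix ODE whose unique solution is $M(t)=e^{-\irm h t}$.

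The core computation is the single-particle commutator. Using the CAR, one has $\f_x \fd_u = \delta_{u,x} - \fd_u \f_x$, so for each term $\fd_u \f_v$ in $\hat H(h)$,
\begin{align}
[\fd_u\f_v,\f_x] &= \fd_u \f_v \f_x - \f_x \fd_u \f_v
= \fd_u\f_v\f_x - (\delta_{u,x}-\fd_u\f_x)\f_v \nonumber\\
&= \fd_u(\f_v\f_x+\f_x\f_v) - \delta_{u,x}\f_v = -\delta_{u,x}\f_v,
\end{align}
so that $[\hat H(h),\f_x] = -\sum_v h_{x,v}\f_v$. Since the Heisenberg flow is a $*$-automorphism, the identity extends at time $t$ to $[\hat H(h),\f_x(t)] = -\sum_v h_{x,v}\f_v(t)$, which together with the ansatz yields $\partial_t M(t) = -\irm\, h\, M(t)$. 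The unique solution with $M(0)=\mathds{1}$ is $M(t)=e^{-\irm h t}$, matching the claimed $G^*(t)$.

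To close the argument I would verify that the ansatz is consistent, i.e., that the linear span of $\{\f_y\}_y$ is invariant under the Heisenberg evolution generated by $\hat H(h)$. This follows either by noting that $e^{\irm t \hat H(h)}\f_x e^{-\irm t \hat H(h)}$ is the unique solution of the operator-valued ODE with initial datum $\f_x$ (so it must coincide with $\sum_y (e^{-\irm h t})_{x,y}\f_y$ by the ODE uniqueness argument above), or by checking invariance directly via the commutator identity and Duhamel/series expansion of $e^{\irm t \hat H(h)}\f_x e^{-\irm t \hat H(h)}$ in powers of $t$.

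The only subtlety, and hence the only place where one can easily trip up, is the sign bookkeeping in the commutator: one must use anticommutation rather than commutation when pulling $\f_x$ past $\fd_u$ and past $\f_v$, and it is precisely the cancellation $\f_v\f_x + \f_x\f_v = 0$ that yields the clean single-particle structure. Once this is carried out carefully, the passage to the ODE and the identification $G^*(t)=e^{-\irm h t}$ are essentially immediate, and Hermiticity of $h$ guarantees that $G(t)=(e^{-\irm h t})^*$ is well-defined and unitary, consistent with preservation of the CAR.
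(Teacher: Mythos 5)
Your proof is correct and follows essentially the same route as the paper: compute the single-particle commutator $[\fd_u\f_v,\f_x]=-\delta_{u,x}\f_v$, use the automorphism/conjugation property to promote it to time $t$, reduce to the linear matrix ODE $\dot M=-\irm hM$, and solve. The paper's writeup compresses your ``ansatz plus consistency check'' into a single step by directly writing $\partial_t \f_x(t)=\irm\, e^{\irm t\hat H}[\hat H,\f_x]e^{-\irm t\hat H}$, but the content is identical.
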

\begin{proof}
We begin by noticing that $\f_x(t)$ is differentiable and take a time-derivative obtaining
\begin{align}
  \partial_t \f_x(t) 
  &= i\hat H(h)\f_x(t) - i \f_x(t)\hat H(h)\\
  &= i[\hat H(h),\f_x(t) ]
\end{align}
which is the Heisenberg equation of motion.
We further notice that 
\begin{align}
  \partial_t \f_x(t) &=  i\ e^{i t\hat H(h)}\ [\hat H(h),\ \f_x\ ]\ e^{-i t\hat H(h)}
\end{align}
which means that we need to evaluate the commutator at $t=0$.
Next, we calculate  the commutator
\begin{align}
  [\ \f^\dagger_{y}\f_z,\ \f_x\ ]&=\f^\dagger_{y}[\ \f_z,\ \f_x\ ]+[\ \f^\dagger_{y},\ \f_x\ ]\f_z
  \\&=2\f^\dagger_{y} \f_z \f_x+2 \f^\dagger_{y} \f_x \f_z - \delta_{x,y} \f_z\\
&= -\delta_{x,y}\f_{z}
\end{align}
which gives by linearity
\begin{align}
  [\hat H(h),\ \f_x ] &=  -\sum_{y,z=1}^{L}h_{y,z}\delta_{x,y}\f_z\\
  &= -  \sum_{z=1}^{L}  h_{x,z}\f_{z} \ .
\end{align}
This allows us to write the above Heisenberg equation of motion explicitly as
\begin{align}
   \partial_t \f_x(t) = - i\sum_{y=1}^{L}  h_{x,y}\f_{y}\ .
\end{align}
This is a system of $L$ linearly coupled ordinary differential equations and is solved by
\begin{align}
  \f_x(t) = \sum_{y=1}^{L}G^*_{x,y}(t) \f_y\ ,
\end{align}
where $G^*(t)=e^{-i th}\in U(L)$.
Indeed, this becomes apparent if one considers a vector $\f = (\f_1,\ldots,\f_{L})^\top$ then we get in vector notation
\begin{align}
  \partial_t\ \f(t) = -i h\ \f(t)\quad \Leftrightarrow\quad  \f(t) = e^{-i th}\ \f=G^*(t)\ \f\ .
\end{align}
\end{proof}

\paragraph{Bosons.}
Bosonic operators $\bc$ obey the canonical commutation relations $[\bc_x, \bd_y]=\bc_x\bd_y-\bd_y \bc_x=\delta_{x,y}$ and $[\bc_x,\bc_y]=[\bc_x^\dagger,\bc_y^\dagger]=0$.
Quasifree bosonic Hamiltonians conserving the particle number are  of the form
\begin{align}
 \hat  H(h)=\sum_{x,y}^L h_{x,y} \bd_x \bc_y
\end{align}
where $h=h^\dagger\in \mathbb C^{\L\times \L}$ is again the coupling matrix for a finite system size $\L$.
\begin{lemma}[Bosonic propagator]
We have
\begin{equation}
\bc_x(t)=e^{i t\hat H(h)}\bc_x e^{-i t\hat H(h)}=\sum_{y=1}^L G^*_{x,y}(t) \bc_y
\end{equation}
where the propagator is given by $G^*(t)=e^{-i t h}$.
\end{lemma}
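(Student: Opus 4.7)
The plan is to mirror the fermionic propagator proof, replacing the canonical anticommutation relations by the canonical commutation relations and checking that the same linear equation of motion emerges. Since $\bc_x(t)$ is smooth in $t$, I would first differentiate to obtain the Heisenberg equation
\begin{equation}
\partial_t \bc_x(t) = i e^{it\hat H(h)} [\hat H(h),\bc_x] e^{-it\hat H(h)},
\end{equation}
so that everything reduces to computing the single-operator commutator $[\hat H(h),\bc_x]$ at $t=0$.

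The next step is to evaluate $[\bd_y \bc_z, \bc_x]$ using the CCR $[\bc_x,\bd_y]=\delta_{x,y}$ and $[\bc_x,\bc_y]=0$. Writing
\begin{equation}
[\bd_y \bc_z,\bc_x] = \bd_y[\bc_z,\bc_x] + [\bd_y,\bc_x]\bc_z = -\delta_{x,y}\bc_z,
\end{equation}
I get exactly the same pointwise result as in the fermionic case (the sign works out identically despite the different algebra, because only the $[\bd_y,\bc_x]$ piece survives). Summing over $y,z$ weighted by $h_{y,z}$ then yields
\begin{equation}
[\hat H(h),\bc_x] = -\sum_{z=1}^{L} h_{x,z}\,\bc_z.
\end{equation}

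Plugging this back into the Heisenberg equation and using the fact that $\bc_z(t) = e^{it\hat H(h)}\bc_z e^{-it\hat H(h)}$, I obtain the closed linear system
\begin{equation}
\partial_t \bc_x(t) = -i \sum_{y=1}^{L} h_{x,y}\, \bc_y(t),
\end{equation}
or in vector form $\partial_t \bc(t) = -i h\, \bc(t)$ with initial condition $\bc(0)=\bc$. This is solved uniquely by $\bc(t) = e^{-ith}\bc = G^*(t)\bc$, which is the claimed identity.

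I do not expect any serious obstacle here; the proof is formally identical to the fermionic one. The only point that requires a sanity check is that the bracket $[\bd_y\bc_z,\bc_x]$ produces the same sign and coefficient as the corresponding fermionic anticommutator computation, so that particle-number conservation of $\hat H(h)$ guarantees closure of the Heisenberg equation on the linear span of $\{\bc_1,\dots,\bc_L\}$ for both statistics. Once that is verified, exponentiating $-ih$ gives the propagator without further work.
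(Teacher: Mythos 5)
Your proposal is correct and follows essentially the same route as the paper: differentiate to obtain the Heisenberg equation, evaluate the single commutator $[\bd_y\bc_z,\bc_x]=-\delta_{x,y}\bc_z$ via the CCR, sum against $h_{y,z}$ to close the system, and exponentiate $-ih$. The paper's proof is word-for-word the bosonic analogue of its fermionic lemma, just as you anticipate.
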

\begin{proof}
Again, the Heisenberg equation of motion is
\begin{align}
  \partial_t \bc_x(t)  &= i [\hat H(h),\ \bc_x(t) ]
\end{align}
and it suffices to evaluate the commutator at $t=0$.
We have
\begin{align}
  [ \bc^\dagger_{y}\bc_z,\ \bc_x]&=[ \bc^\dagger_{y},\ \bc_x ]\bc_z\\
  &= -\delta_{x,y}\bc_{z}
\end{align}
which gives by linearity
\begin{align}
   \partial_t \bc_x(t) = - i\sum_{y=1}^{L}  h_{x,y}\bc_{y}\ .
\end{align}
This is again a system of $L$ linearly coupled ordinary differential equations with the solution
\begin{align}
  \bc_x(t) = \sum_{y=1}^{L}G^*_{x,y}(t) \bc_y\ ,
\end{align}
where $G=e^{-i th}\in U(L)$.
Here,  we have used the general correspondence
\begin{align}
  \partial_t\ \bc(t) = -i h\ \bc(t)\quad \Leftrightarrow\quad  \bc(t) = e^{-i th}\ \bc\ 
\end{align}
for the vector $\bc = (\bc_1,\ldots,\bc_{L})^\top$.

\end{proof}

\paragraph{Translation invariance.}
Let us consider 
\begin{align}
\hat  H(h)=\sum_{x,y}^L h_{x,y} \a^\dagger_x \a_y
\end{align}
where $\a$ stands either for $\f$ in the case of fermions or  $\bc$ for bosons and $h=h^\dagger\in \mathbb C^{\L\times \L}$ is the coupling matrix for a finite system size $\L$.
The above two paragraphs have shown that
\begin{equation}
\label{eq:lin_opt_app}
\a_x(t)=e^{i t\hat H(h)}\a_x e^{-i t\hat H(h)}=\sum_{y=1}^L G^*_{x,y}(t) \a_y\ .
\end{equation}
In this paragraph we will be interested in translation invariant Hamiltonians.

\begin{lemma}[Translation invariant propagator]
Let $h$ be real translation invariant couplings with hopping amplitudes $J_k$.
The propagator is given by 
\begin{equation}
 G^*_{x,y}(t)=\frac{1}{L}\sum_{k=1}^{L}e^{-i \omega_kt+2\pi i k(x-y)/L }
\end{equation}
where $\omega_k = J_0+2\sum_{z=1}^{\lfloor L/2\rfloor} J_z \cos(2\pi k z / L)$.
\end{lemma}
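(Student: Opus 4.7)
The plan is to recognize that on the ring the real translation-invariant matrix $h$ is circulant, and that every circulant is simultaneously diagonalized by the discrete Fourier transform regardless of its entries. Since the preceding lemma already gives $G^{*}(t)=e^{-\irm th}$, the task reduces to computing matrix elements of this exponential in the position basis.

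Concretely, I would introduce the Fourier basis vectors $v_k=L^{-1/2}(e^{2\pi \irm k/L},e^{4\pi \irm k/L},\ldots,e^{2\pi \irm kL/L})^{\top}$ for $k=1,\ldots,L$ and verify directly that $hv_k=\omega_k v_k$ with
\begin{equation}
\omega_k=\sum_{z=0}^{L-1}h_{1,1+z}\,e^{2\pi \irm kz/L},
\end{equation}
where translation invariance is used to reduce every row of $h$ to a shift of the first row. Orthonormality of $\{v_k\}_{k=1}^L$ is the standard DFT identity, so this family is a complete eigenbasis. Substituting the spectral decomposition into $G^{*}(t)=e^{-\irm th}=\sum_k e^{-\irm\omega_kt}v_kv_k^{\dagger}$ and reading off position-basis matrix elements gives
\begin{equation}
G^{*}_{x,y}(t)=\frac{1}{L}\sum_{k=1}^{L}e^{-\irm\omega_kt+2\pi \irm k(x-y)/L},
\end{equation}
which is the claimed formula.

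It remains to recast $\omega_k$ in cosine form. Writing $J_z=h_{1,1+z}$, reality together with Hermiticity of $h$ forces $J_{-z}=J_z$, which on the ring becomes $J_{L-z}=J_z$. Splitting $\sum_{z=0}^{L-1}$ into the $z=0$ term, the paired terms $(z,L-z)$ for $1\le z<L/2$, and, for even $L$, the self-paired term $z=L/2$, and then applying $e^{2\pi \irm kz/L}+e^{2\pi \irm k(L-z)/L}=2\cos(2\pi kz/L)$, yields exactly $\omega_k=J_0+2\sum_{z=1}^{\lfloor L/2\rfloor}J_z\cos(2\pi kz/L)$.

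No quantitative machinery is needed here; this is a direct DFT diagonalization of a circulant, which is why the Kusmin--Landau techniques from the main result play no role. The only genuine piece of bookkeeping, and therefore the only point where I expect to have to pause, is the even-$L$ edge term at $z=L/2$: in that case the factor of $2$ in front of the sum must be reconciled with the fact that the self-paired summand is counted only once, which is handled by the usual convention of taking a half-weight at the boundary (or equivalently restricting the upper limit to $\lfloor(L-1)/2\rfloor$ and adding the Nyquist term $(-1)^kJ_{L/2}$ separately for even $L$).
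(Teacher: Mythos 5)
Your proposal is correct and follows essentially the same route as the paper: recognize that a real translation-invariant $h$ on the ring is circulant, diagonalize it by the discrete Fourier transform (this is precisely the content of the paper's Lemma on circulant matrices in Appendix~A.2, with eigenvectors $\psi_k$ and eigenvalues $\lambda_k(h)=J_0+\sum_{z=1}^{L-1}J_z e^{2\pi \irm zk/L}$), and then read off $G^{*}_{x,y}(t)$ from the spectral decomposition of $e^{-\irm th}$. The even-$L$ Nyquist subtlety you flag at $z=L/2$ is legitimate as a matter of pure linear algebra but is moot in the paper's setting, since the couplings are assumed to have finite range $R$ with $L>2R$, so $J_{L/2}=0$ and the factor-of-$2$ ambiguity never appears.
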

\begin{proof}
A translation invariant model has couplings which satisfy $h_{x,y}=h_{x+z,y+z}$ with periodic boundary conditions.
Below we recall that such matrices are called circulant and are diagonalized by a discrete Fourier transform.
Hence we can write
\begin{align}
  h_{x,y}= \frac{1}{L}\sum_{k=1}^{L}\omega_ke^{2\pi i k(x-y)/L }
\end{align}
with $\omega_k$ as above which is obtained by an explicit calculation using Fourier modes.
Using the formula $G(t)=e^{-it h}$, we hence get
\begin{equation}
 G^*_{x,y}(t)=\frac{1}{L}\sum_{k=1}^{L}e^{-i \omega_kt+2\pi i k(x-y)/L }\ .
\end{equation}
for the propagator.
\end{proof}
\subsection{Circulant matrices}
\label{app:circulant}
In this section we gather some basic facts about circulant matrices, leading up to the characterization that these are exactly the matrices diagonalizable by a discrete Fourier transformation.
Additionally we describe simple formulas for the spectrum in the general case and for periodic boundary conditions.
We begin by giving a precise definition of a circulant matrix.
\begin{definition}[Circulant matrix]
A matrix $h\in \CC^{L\times L}$ is called circulant if 
\begin{align}
h_{x,y}=h_{x+z,y+z}
\end{align}
for any $x,y,z=1,\ldots, L$ \footnote{$z$ could have smaller range but it doesn't harm} and we use modulo-$L$ indices i.e. $h_{x+L,\cdot}=h_{x,\cdot}$ and $h_{\cdot,y+L}=h_{\cdot,y}$.
\end{definition}
The name comes from the fact that in a circulant matrix the $k$-th row is a circulant shift of the first row  by $k-1$ steps to the right.
That, is if $( J_0,J_1,\ldots,J_{L-1} )$ is the first row then the second is $( J_{L-1}, J_0,\ldots,J_{L-2} )$, the third $( J_{L-2},J_{L-1},J_0,\ldots,J_{L-3} )$ and altogether
\begin{align}
  \begin{pmatrix}
  J_0 & J_1 & J_2 & \ldots & J_{L-1}\\
  J_{L-1} & J_0 & J_1 & \ldots & J_{L-2}\\
  &&\ddots \\
  J_2&\ldots&J_{L-1}& J_0 & J_1\\
  J_1&\ldots&J_{L-2}& J_{L-1} & J_0
  \end{pmatrix} \ .
\end{align}
We see that it is enough to know the vector of (hopping) amplitudes $J_z = h_{1,1+z}$ for $z=1,\ldots,L-1$ to describe the whole matrix $h$.
A translation invariant Hamiltonian $H(h)$ has a couplings matrix which is circulant but also Hermitian.
This means that $J_{L-1}=J_1^*$, $J_{L-2}=J_2^*$ and in general $J_{L-z}=J_z^*$.
In that case $J_0, J_1,\ldots, J_{\lfloor L/2\rfloor}$ are necessary to parametrize the matrix.
\begin{lemma}[Circulant matrices and discrete Fourier transforms]
A matrix $h$ is circulant if and only if it is diagonalized by a discrete Fourier transform.
For $k=1,\ldots,L$ the eigenvectors are 
\begin{align}
\psi_k = \frac 1{\sqrt L} \bigl(  \phi_k, \phi_k^2,\ldots, \phi_k^{L-1}, 1\bigr)^\top
\end{align}
where $\phi_k=e^{{ 2\pi i k } /L  } $
and the corresponding eigenvalue read
\begin{align}
  \lambda_k(h) = J_0 + \sum_{z=1}^{L-1} J_z e^{2\pi i z k / L}\ .
\end{align}
\end{lemma}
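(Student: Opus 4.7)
The plan is to establish the biconditional by direct calculation in each direction, with the one-line discrete-Fourier change-of-variable doing essentially all the work. First I would fix bookkeeping: read indices modulo $L$ throughout, so that $\phi_k^L = e^{2\pi i k} = 1$ lets me write $(\psi_k)_x = \phi_k^x/\sqrt{L}$ uniformly for $x=1,\dots,L$ (this matches the last-entry $1/\sqrt{L}$ in the stated form of $\psi_k$). I would also restate circulancy in the compact form $h_{x,y} = J_{(y-x) \bmod L}$ with $J_z := h_{1,1+z}$, so that a matrix is circulant iff its entries depend only on $(y-x) \bmod L$.

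For the forward direction, assume $h$ is circulant and compute
\begin{align*}
(h\psi_k)_x \;=\; \sum_{y=1}^L J_{(y-x)\bmod L}\,\frac{\phi_k^{y}}{\sqrt{L}} \;=\; \frac{\phi_k^x}{\sqrt{L}}\sum_{z=0}^{L-1} J_z \phi_k^z,
\end{align*}
where the crucial step is the substitution $z=(y-x)\bmod L$: translation invariance of the index is exactly what allows $\phi_k^x$ to pull out of the sum. This identifies $\psi_k$ as an eigenvector with eigenvalue $\lambda_k(h) = J_0 + \sum_{z=1}^{L-1} J_z e^{2\pi i z k/L}$, matching the claim. For the reverse direction I would first note that the $\{\psi_k\}$ form an orthonormal basis, via the standard geometric-sum identity $\sum_{x=1}^L e^{2\pi i(j-k)x/L} = L\,\delta_{j,k}$. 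If then $h$ is diagonalized in this basis with eigenvalues $\{\lambda_k\}$, its spectral decomposition gives
\begin{align*}
h_{x,y} \;=\; \sum_{k=1}^L \lambda_k (\psi_k)_x \overline{(\psi_k)_y} \;=\; \frac{1}{L}\sum_{k=1}^L \lambda_k\, e^{2\pi i k(x-y)/L},
\end{align*}
which manifestly depends only on $(x-y)\bmod L$, hence $h_{x+z,y+z}=h_{x,y}$ for every shift $z$, i.e., $h$ is circulant.

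The main obstacle here is not conceptual but notational: one has to keep the $1$-based indexing of the paper consistent with the $\bmod L$ conventions, and verify that $\phi_k^L = 1$ lets the formula $(\psi_k)_x = \phi_k^x/\sqrt{L}$ reproduce the last entry $1/\sqrt{L}$ written in the statement. Once those conventions are fixed, both directions reduce to a single application of the geometric-series identity, and the explicit eigenvalue formula $\lambda_k(h) = J_0 + \sum_{z\ge 1} J_z e^{2\pi i z k/L}$ emerges directly from the forward calculation — no further estimates or subtle analysis are needed.
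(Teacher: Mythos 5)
Your proof is correct and is essentially the same as the paper's: both directions hinge on the same substitution $z=y-x$ (mod $L$) and the geometric-sum orthogonality relation $\sum_x e^{2\pi i(j-k)x/L}=L\,\delta_{j,k}$. The only cosmetic difference is that the paper directly computes the matrix element $\psi_{k'}^\dagger h\psi_k$ (getting eigenvalue and off-diagonal vanishing simultaneously), whereas you compute $h\psi_k$ componentwise and invoke orthonormality of $\{\psi_k\}$ separately; the reverse direction via the spectral decomposition $h_{x,y}=L^{-1}\sum_k\lambda_k e^{2\pi i k(x-y)/L}$ is identical.
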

An important case is when the Hamiltonian couplings are real in addition to being circulant matrices and then we have
\begin{align}
  \lambda_k(h = h^\top = h^* ) = J_0 + \sum_{z=1}^{\lfloor L/2\rfloor} 2 J_z \cos(2\pi z k/L)\ .
  \label{eq:omega}
\end{align}
In the most general translation invariant case, which is relevant for the case of conserved quantities if the initial covariance matrix was not purely real, we have
\begin{align}
  \lambda_k(h = h^\dagger ) = J_0 + 2\sum_{z=1}^{\lfloor L/2\rfloor} \Re[ J_z e^{2\pi i  z k /L}]\ .
\end{align}
Here we have used translation invariance which in general reads $J_{z}=J_{L-z}^*$.
As an example consider 
\begin{align}
  \hat I_{z=1}(\eta=\pi/2) = \frac{1}{L} \sum_{x=1}^L (i \fd_x\f_{x+1} -i\fd_{x+1}\f_x)
\end{align}
for which we have $\lambda_k=2\Re[ J_{z=1} e^{2\pi i  z k /L}] = 
(2/L)\Re[ i e^{2\pi i   z k /L}] = -(2/L)
\sin(2\pi   z k /L)$.

\begin{proof}
To show the first direction, we will show that $ \psi_{k'}^\dagger (h\psi_k)= \lambda_k(h) \delta_{k',k}$.
We have  
\begin{align}
\psi_{k'}^\dagger (h\psi_k) 
&=L^{-1} \sum_{x,y=1}^L h_{x,y} e^{\tfrac{2\pi i } L (ky-k'x)}\\
&=L^{-1} \sum_{x,z=1}^L h_{x,x+z} e^{\tfrac{2\pi i } L (k-k')x}e^{\tfrac{2\pi i } L kz}\\
&=L^{-1} \sum_{z=1}^L h_{1,1+z} e^{\tfrac{2\pi i } L kz} \sum_{x=1}^L e^{\tfrac{2\pi i } L (k-k')x}\\
&= \sum_{z=1}^L h_{1,1+z} e^{\tfrac{2\pi i } L kz} \delta_{k,k'}
\end{align}
which is by definition of $\lambda_k(h)$ what we were looking for.
For the converse direction, we must show that a rotation by the discrete Fourier transform matrix of a spectrum $\lambda$ yields a circulant matrix.
We do this by checking the defining property
\begin{align}
  \tilde h_{x,y} &= \biggl( \sum_{k=1}^L \lambda_k \psi_k \psi_{k}^\dagger\biggr)_{x,y}\\
  &= L^{-1} \sum_{k=1}^L \lambda_k e^{2\pi i k(x-y)/L}\\
  &= L^{-1} \sum_{k=1}^L \lambda_k e^{2\pi i k(x+z-y-z)/L}\\
 &= \tilde h_{x+z,y+z} .
\end{align}
Thus, the matrix $\tilde h$ is circulant.
\end{proof}

\subsection{Bessel function asymptotics}\label{app:Bessel}

A particularly insightful situation is the special case of a 
nearest-neighbour fermionic hopping Hamiltonian, 
setting $J_0=0$ and $J_1=1$. In this situation,  we simply  obtain
\begin{equation}
	\omega_k = 2 \cos(2\pi k / L)
\end{equation}
and hence
\begin{equation}
 G^*_{x,y}(t)=\frac{1}{L}\sum_{k=1}^{L}e^{-  2 i \cos(2\pi k / L) t+2\pi i k(x-y)/L }
\end{equation}
for the propagator. In the limit of large $L$,  this can be seen as a Riemann sum approximation \cite{flesch2008probing} to the integral
\begin{equation}
 \tilde G^*_{x,y}(t)=\frac{1}{2\pi} \int_0^{2\pi}   d\phi \, e^{-  2 i \cos(\phi) t} e^{i\phi (x-y)} = i^{x-y} {\cal J}_{x-y}(-2 t),
\end{equation}
where ${\cal J}_l:\RR\rightarrow \RR$ is the Bessel function of the first kind.
The error in this approximation can be bounded from above as
\begin{equation}
	|G^*_{x,y}(t) - i^{x-y} {\cal J}_{x-y}(-2 t)| \leq \frac{\pi|x-y-2t|}{L}.
\end{equation}
These Bessel functions satisfy
\begin{equation}
	 |{\cal J}_{x-y} (-2t)|  = O(t^{-1/2})
\end{equation}
for all $x,y$. That is to say, in this situation, one gets an equilibration following a $O( t^{-1/2})$ behaviour. This feature of the propagator
is actually inherited by the actual correlation decay. In fact,
a stronger statement can be made: $O(t^{-1/2})$ is not only
an upper bound for $|{\cal J}_{x-y} (-2t)|$, but there cannot
be a tighter uniform bound in the form of a power law. The asymptotics 
of Bessel functions \cite{abramowitz1965handbook} can be captured as
\begin{equation}
    J_{x-y}(\tau) = \left(\frac{2}{\pi\tau}\right)^{1/2}
    \cos\left(
    \tau -\frac{(x-y)\pi}{2}
    -\frac{\pi}{4}
    \right) + O(|\tau|^{-1}),
\end{equation}
for $\tau>0$, showing that no tighter uniform power law bound can exist.

\section{Steady states and local conservation laws via circulant matrices}
\label{app:steady}

Now we prove that dephasing under Hamiltonians that are only minimally degenerate leads to steady states with 
 $\Gamma_{x,y}^{(\rm eq)} \approx \langle \hat I_{|x-y|}\rangle$ where as in the main text we take the index arithmetic to be modulo $L$.
\begin{lemma}[Steady state covariance matrices as approximately circulant matrices]\label{app:steady_states}
Consider a state with covariance matrix $\Gamma$ and exponentially decaying correlations.
For any $\hat H(h)$ with dispersion relation satisfying $\omega_k=\omega_{k'}$ only for $k=k'$, or $k=L-k'$ where $k'>k$ the steady state is approximately a circulant matrix with entries 
\begin{align}
|\Gamma_{x,y}^{(\rm eq)} - \langle \hat I_{|x-y|}\rangle| = O(L^{-1})\ .
\end{align}
\end{lemma}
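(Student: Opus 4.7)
The plan is to diagonalize the quadratic Hamiltonian using the discrete Fourier transform, read off which Fourier pairs survive the infinite-time average, and show that the only non-translation-invariant piece comes from the $k\leftrightarrow L-k$ antidiagonal, which is suppressed by $1/L$ thanks to exponential clustering. Concretely, write $h=F\Omega F^{\dagger}$ with the DFT matrix $F_{x,k}=e^{2\pi ikx/L}/\sqrt{L}$ and $\Omega=\mathrm{diag}(\omega_{1},\ldots,\omega_{L})$, so that the evolved covariance matrix reads $\Gamma(t)=F e^{it\Omega}\tilde\Gamma\, e^{-it\Omega}F^{\dagger}$ with $\tilde\Gamma=F^{\dagger}\Gamma F$, and its Fourier-basis entries oscillate as $e^{it(\omega_{k}-\omega_{k'})}$.

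The first step is to take the infinite-time average: under the minimal-degeneracy hypothesis only pairs with $\omega_{k}=\omega_{k'}$ survive, and these are exactly the diagonal $k=k'$ together with the antidiagonal $k'=L-k$ (excluding the self-paired $k=L/2$ for even $L$). Next I would evaluate the diagonal contribution. Expanding $\tilde\Gamma_{k,k}$ and using the character orthogonality $\sum_{k}e^{2\pi ikm/L}=L\,\delta_{m\equiv 0\,(\mathrm{mod}\,L)}$ collapses the sum to
\[
\Gamma^{(\infty,\mathrm{diag})}_{x,y}=\frac{1}{L}\sum_{x'=1}^{L}\Gamma_{x',\,x'-(x-y)},
\]
which after reindexing $z=x'-x$ is precisely $\Gamma^{(\mathrm{eq})}_{x,y}=(1/L)\sum_{z}\Gamma_{x+z,y+z}$. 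Combining real and imaginary parts exactly as in the main text identifies this with $\langle \hat I_{|x-y|}\rangle$, the right-hand side of the lemma.

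The remaining task is to bound the antidiagonal. An analogous geometric-series computation gives
\[
\Gamma^{(\infty,\mathrm{anti})}_{x,y}=\frac{1}{L^{2}}\sum_{x',y'}\Gamma_{x',y'}\Bigl(L\,\delta_{(x+y)\equiv(x'+y')}-1-(-1)^{(x+y)-(x'+y')}\Bigr),
\]
with the last term absent when $L$ is odd. For the first piece the constraint $y'\equiv(x+y)-x'\,(\mathrm{mod}\,L)$ freezes $y'$ in terms of $x'$, so after the $1/L$ prefactor one sums $\Gamma_{x',\,(x+y-x')\,\mathrm{mod}\,L}/L$ over $x'$; the exponential clustering $|\Gamma_{x',y'}|\le C_{\rm Clust}\,e^{-d(x',y')/\xi}$ for $x'\ne y'$ (with $d$ the ring distance), together with at most two exceptional diagonal terms at $x'=(x+y)/2$ and $x'=(x+y)/2+L/2$, bounds the sum by an $L$-independent constant, yielding $O(L^{-1})$. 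For the two remaining pieces, clustering gives $\sum_{x',y'}|\Gamma_{x',y'}|=O(L)$, so the $1/L^{2}$ prefactor again produces $O(L^{-1})$.

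The main obstacle is the careful bookkeeping in the antidiagonal: one must separately treat the piece generated by the $\omega_{k}=\omega_{L-k}$ degeneracy that is intrinsic to a real circulant $h$, verify that the surviving constraint $x'+y'\equiv x+y$ couples indices symmetrically about $(x+y)/2$ so that clustering in ring distance applies uniformly, and check that none of the resulting constants scale with $L$. Combining the diagonal identification with the antidiagonal bound then delivers $|\Gamma^{(\infty)}_{x,y}-\langle \hat I_{|x-y|}\rangle|=O(L^{-1})$, as claimed.
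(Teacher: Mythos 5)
Your proposal follows essentially the same route as the paper's proof: infinite-time average/dephasing keeps only the $k=k'$ and $k=L-k'$ Fourier pairs, the diagonal part collapses via character orthogonality to the real-space averaged current $\Gamma^{(\rm eq)}_{x,y}$, and the antidiagonal remainder (including the $(-1)^{(x+y)-(x'+y')}$ correction for even $L$ arising from the self-paired $k=L/2$ mode) is bounded via exponential clustering, giving $O(L^{-1})$. The only cosmetic difference is that you diagonalize $h$ explicitly as $F\Omega F^{\dagger}$ and track $\tilde\Gamma=F^{\dagger}\Gamma F$, whereas the paper keeps the computation in terms of the propagator $G_{x,y}(t)$, but the underlying manipulations and estimates are the same.
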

In particular, this holds true for the nearest-neighbour hopping model with dispersion relation $\omega_k=\cos(2\pi k/L)$.
\begin{proof}
Let us consider the action of the dephasing map on the initial covariance matrix ${\Gamma^{(\rm eq)}} = \lim_{T\rightarrow \infty} \frac 1 T  \int_0^T \Gamma(t)$ for two sites $x,y$ which reads
\begin{align}
 \Gamma^{(\rm eq)}_{x,y} &= \lim_{T\rightarrow \infty} \frac 1 T  \int_0^T \Gamma_{x,y}(t)\\
  &= \lim_{T\rightarrow \infty} \frac 1 T  \int_0^T \sum_{x',y'=1}^L G_{x,x'}(t)\Gamma_{x',y'} G_{y',y}^*(t)\\
  &= L^{-2}\sum_{x',y'=1}^L \Gamma_{x',y'} \sum_{k,k'=1}^L \left(\lim_{T\rightarrow \infty} \frac 1 T  \int_0^T e^{(-i\omega_k t+i\omega_{k'})t}\right) e^{\tfrac{2\pi i}L (k(x-x')-k'(y-y'))}\\
  &= L^{-2}\sum_{x',y'=1}^L \Gamma_{x',y'} \sum_{k,k'=1}^L \delta_{\omega_k,\omega_{k'}} e^{\tfrac{2\pi i}L (k(x-x')-k'(y-y'))} \ .
\end{align}
Next, we will use the assumption concerning the minimal degeneracy of the dispersion relation which gives
\begin{align}
 \Gamma^{(\rm eq)}_{x,y}
 =& L^{-2}\sum_{x',y'=1}^L \Gamma_{x',y'} \sum_{k,k'=1}^L \delta_{k,{k'}} e^{\tfrac{2\pi i}L (k(x-x')-k'(y-y'))}\\
 &+ L^{-2}\sum_{x',y'=1}^L \Gamma_{x',y'} \sum_{k,k'=1}^L \delta_{k,L-k'}(1-\delta_{k,k'}) e^{\tfrac{2\pi i}L (k(x-x')-k'(y-y'))} \ .
\end{align}
We notice that the condition $k=k'$ gives 
\begin{align}
\delta_{x-y,x'-y'} =  L^{-1}\sum_{k=1}^L  e^{\tfrac{2\pi i}L k(x-x'-y+y')}.
\end{align}
The other condition $k=L-k'$ also leads to simplification but one needs to be careful to observe that for $L$ even we may obtain $k=L-k'$ and $k=k'=L/2$ and such terms are already included in the previous sum.
Additionally, $k'=L$ gives no solution to $k=L-k'$ and so together we have
\begin{align}
 \Gamma^{(\rm eq)}_{x,y}
 =& L^{-1}\sum_{x',y'=1}^L \Gamma_{x',y'} \delta_{x-y,x'-y'}\\
 &+ L^{-2}\sum_{x',y'=1}^L \Gamma_{x',y'} \sum_{k,k'=1}^{L-1} \delta_{k,L-k'} e^{\tfrac{2\pi i}L (k(x-x')-k'(y-y'))}\\
 &- L^{-2}\sum_{x',y'=1}^L \Gamma_{x',y'} \sum_{k,k'=1}^{L-1} \delta_{k,L-k'}\delta_{k,k'} e^{\tfrac{2\pi i}L (k(x-x')-k'(y-y'))} \ .
\end{align}
Here we identify the first line to give a current expectation value $\langle \hat I_{|x-y|}\rangle$.
The inner sum in second line gives $L\delta_{x+y - x'-y'}-1$ and the third is either $0$ or can be bounded from above
\begin{align}
 \biggl |\Gamma^{(\rm eq)}_{x,y}  - \langle \hat I_{|x-y|}\rangle\biggr| \le   L^{-1}\sum_{x',y'=1}^L |\Gamma_{x',y'}|  \delta_{x+y-x'-y'} + 2 L^{-2}\sum_{x',y'=1}^L |\Gamma_{x',y'}|.
 \end{align}
Finally, we make use of the exponential decay of correlations $|\Gamma_{x,x+z}|\le C_{\rm Clust} e^{-z/\xi}$, 
obtaining
\begin{align}
  L^{-2}\sum_{x',y'=1}^L |\Gamma_{x',y'}| &\le  L^{-2}\sum_{z=0}^L \sum_{x=1}^L |\Gamma_{x,x+z}| \le L^{-1} C_{\rm Clust} \sum_{z=0}^\infty e^{-z/\xi}\\
  &\le \frac{C_{\rm Clust}}{1-e^{1/\xi}} L^{-1}.
\end{align}
Employing a similar bound for the first term we obtain
\begin{align}
 \biggl |\Gamma^{(\rm eq)}_{x,y}  - \langle \hat I_{|x-y|}\rangle\biggr| \le C_I L^{-1}\ .
 \end{align}
\end{proof}

\begin{lemma}[Relevant currents]\label{app:relevant_currents}
Consider a state with covariance matrix $\Gamma$ and exponentially decaying correlations parametrized by the correlation length $\xi>0$.
Then for any time $t$ we have 
\begin{align}
|\Gk_{x,y}(t)| \le C_{\rm Clust} e^{-d/\xi}\ .
\end{align}
\end{lemma}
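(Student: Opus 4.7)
The plan is to reduce the claim to the initial clustering bound via the translation invariance of $G(t)$ and the unitarity identity for its entries. First I would apply the clustering assumption \eqref{eq:clust} to the operators $\hat A = \fd_{x'}$ and $\hat B = \f_{y'}$: since single-fermion one-point functions vanish by parity superselection, one obtains $|\Gamma_{x',y'}| \leq C_{\rm Clust}\,e^{-|x'-y'|/\xi}$ for the initial covariance matrix, which is the form of the clustering that will be used throughout.

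Next I would use $\Gamma(t) = G(t)\Gamma G(t)^\dagger$ together with the translation-invariant propagator $G_{x,x'}(t) = \tilde g_{x-x'}(t)$ from Appendix~\ref{app:propagators}, and change summation variables to $u = y-y'$, $e = x'-y'$, so that with $d = x-y$ the entry of interest reads
\begin{equation*}
\Gk_{x,y}(t) = \sum_{e}\sum_{u}\tilde g_{d-e+u}(t)\,\tilde g^*_u(t)\,\Gamma_{y+e-u,\,y-u}\ .
\end{equation*}
The key identity, which follows directly from unitarity of $G(t)$ and Parseval for the circulant DFT from Appendix~\ref{app:circulant}, is $\sum_u \tilde g_{u+s}(t)\,\tilde g^*_u(t) = \delta_{s,0}$ for any integer $s$. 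If the initial state were translation invariant, so that $\Gamma_{y+e-u,y-u}$ depended only on $e$, this identity would immediately collapse the double sum to the single term $e = d$ and the claim $|\Gk_{x,y}(t)| \leq C_{\rm Clust}e^{-d/\xi}$ would follow from the initial clustering.

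For a general initial state I would decompose $\Gamma_{y+e-u,y-u}$ as the translation-invariant average $I_e = L^{-1}\sum_{y'}\Gamma_{y'+e,y'}$, still obeying $|I_e|\le C_{\rm Clust}e^{-|e|/\xi}$, plus a fluctuating residual. The average piece contributes only at $e=d$ through the orthogonality identity and directly yields $|I_d|\le C_{\rm Clust}e^{-d/\xi}$. The residual inherits the uniform band-wise bound $C_{\rm Clust}e^{-|e|/\xi}$, and its contribution is to be controlled by pairing the orthogonality identity with a summation-by-parts against the oscillatory overlap $\sum_u \tilde g_{u+(d-e)}(t)\tilde g_u^*(t)$, which already vanishes for $e\neq d$ when the weight is constant in $u$. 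The geometric summation $\sum_e e^{-|e|/\xi} = O(1)$ over the bands then produces only a $d$-independent multiplicative constant that can be absorbed into $C_{\rm Clust}$.

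The hard part will be this last absorption. Naive Cauchy-Schwarz combined with unitarity yields only an estimate independent of $d$, because the summable tail in $|e|$ gives at most a constant; hence the $e^{-d/\xi}$ factor must be extracted from cancellations encoded in the delta-function identity rather than from the clustering bound alone. I expect the cleanest route to go through a Fourier decomposition of the $u$-dependence of the residual, combined with the exponential-sum estimates developed in Section~3 via the Kusmin-Landau machinery, to show that the fluctuating contribution is uniformly bounded by a multiple of the leading $I_d$ term, completing the bound uniformly in $t\in[t_0,t_R]$.
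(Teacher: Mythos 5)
The key gap is that you have misidentified the quantity to be bounded. The lemma concerns $\Gamma^{(d)}(t)=G(t)\Gamma^{(d)}G(t)^\dagger$, the unitary evolution of the \emph{single band} $\Gamma^{(d)}$ of the initial covariance matrix, where $\Gamma^{(d)}_{w,z}=\Gamma_{w,z}\delta_{w,z+d}$, and $d$ is the fixed band index. Your reformulation
\begin{equation*}
\Gamma^{(d)}_{x,y}(t) = \sum_{e}\sum_{u}\tilde g_{d-e+u}(t)\,\tilde g^*_u(t)\,\Gamma_{y+e-u,\,y-u}
\end{equation*}
carries an extraneous sum over the band index $e$ of the initial covariance matrix, which means you are actually manipulating the full evolved matrix $\Gamma(t)_{x,y}$, with your $d$ playing the role of $x-y$. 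That is a different object, and the statement you then try to prove, namely that $|\Gamma(t)_{x,y}|$ decays like $e^{-|x-y|/\xi}$ uniformly in $t$, is false: the propagator spreads correlations ballistically, so the evolved covariance matrix cannot retain the initial correlation length once $t\gtrsim|x-y|$. Your honest remark that naive Cauchy-Schwarz plus unitarity yields only an estimate independent of $d$, and your subsequent appeal to Kusmin-Landau cancellations, are symptoms of chasing this unprovable variant rather than the lemma itself.

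Once the correct object is identified, the proof is elementary and needs neither translation invariance of $G(t)$ nor the orthogonality identity $\sum_u\tilde g_{u+s}(t)\tilde g_u^*(t)=\delta_{s,0}$, both of which are correct but beside the point. Writing $\Gamma^{(d)}_{x,y}(t)=\sum_{z,w}G_{x,w}(t)\,\Gamma_{z+d,z}\delta_{w,z+d}\,G_{y,z}^*(t)$, one applies the estimate $|\langle v|A|w\rangle|\le\|A\|\,\|v\|\,\|w\|$ with $v,w$ the $x$-th and $y$-th rows of $G(t)$, which are unit vectors by unitarity, and with $A$ the matrix whose $(w,z)$ entry is $\Gamma_{z+d,z}\delta_{w,z+d}$. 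Since $A$ has one nonzero entry per row and column, it is a shifted diagonal matrix with $\|A\|=\max_z|\Gamma_{z,z+d}|$, and the clustering bound, which you invoked correctly at the outset, gives $\max_z|\Gamma_{z,z+d}|\le C_{\rm Clust}e^{-d/\xi}$. This proves the claim for all $t$, with no time-window restriction.
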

\begin{proof}
After a technical calculation using the definition of $\Gk$ we find
\begin{align}
   \abs{\Gk_{x,y}(t)}&=\abs{\sum_{z,w=1}^\N G_{x,w}(t) \Gamma_{z+d,z}\,\delta_{w,z+d} G^*_{y,z}(t)}\\
  &\leq \max_{z=1,\ldots,L}\abs{\Gamma_{z,z+d}} 
  \left({\sum_{w=1}^\N \abs{G(t)_{x,w}}^2} \right)^{1/2}
  \left({\sum_{z=1}^\N\abs{G_{y,z}(t)}^2} \right)^{1/2}\\
 &= \max_{z=1,\ldots,L}\abs{\Gamma_{z,z+d}}\\
 &\leq C_\text{\rm Clust} e^{-d/\xi}.
  \end{align}
The second line follows from the inequality 
\begin{align}
|\!\bra{v}A\ket{w}|\leq \|A\|\sqrt{\langle v|v\rangle}\sqrt{\langle w|w\rangle}, 
  \end{align}
  where we are thinking of $\Gamma_{z,z+d}\,\delta_{w,z+d}$ as a matrix, with operator norm given by $\max_{z}\abs{\Gamma_{z,z+d}}$.  The third line follows because $G(t)$ is a unitary matrix, so its rows and columns are orthonormal vectors.
In the last line, we have used the definition of exponentially decaying correlations.
\end{proof}

\section{Bound on oscillatory sums of sequences with compact Fourier representation}
\label{app:Phi}
We would like to prove a general bound on oscillatory sums of the type appearing in the main text where the phase sequence can be decomposed in a Fourier series with bounded number of harmonics.
Specifically, we define a smooth phase function $\Phi_t: [0,2\pi]\rightarrow \RR$ of the form
\begin{align}
\Phi_t(p)= d p + t\sum_{z=1}^R J_z \cos(zp +\alpha_z)
\end{align}
where $t, d, J_1,\ldots, J_R,\alpha_1,\ldots\alpha_R\in \RR$ with $J\neq 0$.
It will be convenient to define 
\begin{align}
 \Phi(p)=\sum_{z=1}^R J_z \cos(zp +\alpha_z)
\end{align}
which plays, e.g., the role of a dispersion relation and we have $\Phi_t^{(\kappa)} = t\Phi^{(\kappa)}$ for all higher-order derivatives $\kappa>1$.
If we additionally define  $p_k = 2\pi k/L$, then the sequence of interest will be
\begin{align}
\varphi_k = \Phi_t(p_k)\ 
\end{align}
for $k=1,\ldots,L$, where $L$ as in the main text stands for the system size.
Note that our results become non-trivial for $L\ge t_0$ where $t_0$ is the relaxation time which dependents only on $J_z$.
Physically, it is always given that $L$ is asymptotically large giving a uniform small parameter, so for system sizes of interest our requirements should be fulfilled.
Mathematically all our statements remain correct by defining $[t_0,t_R]=\emptyset$ if $t_0\ge t_R$ but it should be stressed that when $L$ is large enough we obtain a very non-trivial bound with $t_0< t_R$.

Before we state our main theorem of this section, let us make the following definitions. 
We will use the Kusmin-Landau bound and the role of stationary points will be taken by the roots of $\Phi_t'$ denoted by
\begin{align}
  \mathcal S^{(1)} = \{ p\in [0,2\pi]\ \mathrm{s.t.}\ \Phi_t'(p) = 0 \}\ 
\end{align}
and the extremal points of the group velocity $\Phi_t'$ 
\begin{align}
  \mathcal S^{(2)} = \{ p\in [0,2\pi]\ \mathrm{s.t.}\ \Phi_t''(p) = 0 \}\ .
\end{align}
The former set of points are exactly the points of vanishing group velocity while for the latter the band curvature vanishes.
Let us make additionally the following definition useful for Taylor expansions around roots $r\in \mathcal S=\mathcal S^{(1)}\cup\mathcal S^{(2)}$.
In general for $r\in \mathcal S$ we define 
\begin{align}
\kappa_r\geq 1
\end{align}
to be the minimal integer such that for $r\in\mathcal S^{(a)}$ where $a=1,2$ the $(\kappa_r+a)$'th derivative does not vanish $|\Phi_t^{(\kappa_r+a)}(r)|\neq 0$.
Additionally, 
\begin{align}
\kappa_0 = \max_{r\in\mathcal S} \kappa_r
\end{align}
will set the scaling of the final bound in the following theorem.

\begin{theorem}[Dephasing bound]\label{thm:phase_general}
There exist a constant relaxation time $t_0$ and a recurrence time $t_R = \Theta(L)$ such that, for all $t\in[t_0,t_R]$ we obtain the bound
\begin{align}
  \frac 1L \abs{\sum_{k=1}^L e^{i\varphi_k}}\le C_\# t^{-\gamma}
\end{align}
where 
\begin{align}
C_\#  
&= 6(2R+1)\max\left\{ 1,
\min_{r\in\mathcal S^{(2)}} \frac { \abs{ \Phi^{(\kappa_r+2)}(r)}}{ 2C^{(3)}_\text{max}\,\kappa_r!},
\max\limits_{r\in\mathcal S^{(2)}}\frac { 8\, (\kappa_r!)^2\, C^{(3)}_\text{max} } { \abs{\Phi^{(\kappa_r+2)}(r)}^2}, 
\max\limits_{r\in\mathcal S^{(1)}}\frac { 4\, \kappa_r!} { \abs{\Phi^{(\kappa_r+1)}(r)}} 
\right\}
\end{align}
and $\gamma=1/(3\kappa_0)>{1}/({6R})$ are constants.
We can take $\gamma=1/3$, provided there are no repeated roots $\Phi_t''(p)=\Phi_t'''(p)=0$  which holds true in the generic case.
\end{theorem}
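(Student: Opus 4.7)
The plan is to bound the exponential sum via the Kusmin-Landau inequality, which controls $|\sum_k e^{2\pi i g(k)}|$ by a universal constant divided by the minimum distance of $g'$ to an integer, provided $g'$ is monotone on the range of summation and stays strictly between two consecutive integers. Setting $g(k) = \Phi_t(2\pi k/L)/(2\pi)$ recasts the claim as a derivative estimate for $\Phi_t$ on the lattice $p_k = 2\pi k/L$, with $|g'(k)| = |\Phi_t'(p_k)|/(2\pi L)$. Since $\Phi$ is a trigonometric polynomial of degree $R$, both $\mathcal{S}^{(1)}$ and $\mathcal{S}^{(2)}$ have at most $2R$ points, so the troublesome behaviour is confined to a sparse set.

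Concretely, I would partition $[0,2\pi]$ by surrounding each $r \in \mathcal{S}^{(1)} \cup \mathcal{S}^{(2)}$ by a buffer interval of radius $\delta_r$, to be optimized, and treat buffers and good pieces separately. On each buffer the trivial bound $|e^{i\varphi_k}|=1$ contributes at most $O(L\delta_r)$ terms. On each good piece I Taylor-expand around the nearest root: for $r \in \mathcal{S}^{(1)}$ this gives $|\Phi_t'(p)| \geq t |\Phi^{(\kappa_r+1)}(r)| |p-r|^{\kappa_r}/\kappa_r!$, and by construction $\Phi_t''$ does not vanish so $\Phi_t'$ is monotone. After a further finite splitting into at most $O(R)$ sub-intervals on each of which $g'$ does not sweep across an integer (admissible since $\Phi_t'$ is a degree-$R$ trigonometric polynomial plus a constant), Kusmin-Landau yields a contribution of order $L/(t|\Phi^{(\kappa_r+1)}(r)|\delta_r^{\kappa_r}/\kappa_r!)$ per good piece. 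Balancing this against the buffer contribution $L\delta_r$ fixes $\delta_r \sim (\kappa_r!/(t|\Phi^{(\kappa_r+1)}(r)|))^{1/(\kappa_r+1)}$ and yields a per-term bound of order $t^{-1/(\kappa_r+1)}$ there, which produces the summand $\max_{r \in \mathcal{S}^{(1)}} 4\kappa_r!/|\Phi^{(\kappa_r+1)}(r)|$ in $C_\#$.

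The more delicate contribution comes from $r \in \mathcal{S}^{(2)}$, where $\Phi_t''(r) = 0$ and monotonicity of $\Phi_t'$ fails. Here the buffer must simultaneously (i) cut $\Phi_t'$ into monotone pieces by splitting at $r$, (ii) keep $|\Phi_t'|$ large enough for Kusmin-Landau to be informative, and (iii) ensure $g'$ does not sweep over too many integers on a good piece, which is controlled by $|g''| = |\Phi_t''|/L$ and hence by $|\Phi_t''(p)| \geq t |\Phi^{(\kappa_r+2)}(r)| |p-r|^{\kappa_r}/\kappa_r!$. These three constraints on $\delta_r$ combine multiplicatively, producing a three-way balance whose optimum scales as $t^{-1/(3\kappa_r)}$ and generates the squared factor $|\Phi^{(\kappa_r+2)}(r)|^{-2}$ in $C_\#$. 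Taking the worst per-term bound over all $r \in \mathcal{S}^{(1)} \cup \mathcal{S}^{(2)}$ gives the exponent $\gamma = 1/(3\kappa_0)$, and under the generic assumption that no root of $\Phi_t''$ is also a root of $\Phi_t'''$ one has $\kappa_0 = 1$, recovering $\gamma = 1/3$.

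The main obstacle I anticipate is the constant bookkeeping: tracking the Taylor remainders, the $O(R)$ extra pieces created by the monotonicity and integer-avoidance splits, and the interplay between the three scales near points that are close to both $\mathcal{S}^{(1)}$ and $\mathcal{S}^{(2)}$. The relaxation time $t_0$ will arise from requiring all buffer radii $\delta_r$ to be smaller than the separations between members of $\mathcal{S}^{(1)} \cup \mathcal{S}^{(2)}$, a condition that depends only on the hopping amplitudes $J_z$ and not on $L$, while the recurrence time $t_R = \Theta(L)$ enters through the opposite requirement that each good sub-interval still contain enough lattice points for Kusmin-Landau to beat the trivial bound.
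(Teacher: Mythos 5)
Your strategy is essentially the paper's: apply the Kusmin-Landau inequality to the exponential sum, isolate the stationary points $\mathcal{S}^{(1)}\cup\mathcal{S}^{(2)}$ (whose number is bounded by the trigonometric degree $R$), excise buffer neighbourhoods around them, Taylor-expand $\Phi_t'$ on the remaining monotone pieces, and balance the discarded contribution against the Kusmin-Landau contribution. The paper executes this with two explicit buffer radii $q_t=t^{-1/(3\kappa_r)}$ and $p_t=C_0 t^{-1/3}$, the latter serving exactly the extra degree of freedom you invoke for $\mathcal{S}^{(2)}$ points (first shrink $q_t$ so $\Phi_t''$ is lower bounded, then shrink $p_t$ so $\Phi_t'$ is lower bounded via $\Phi_t'\gtrsim\Phi_t'' p_t$).

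Two small corrections to your bookkeeping. First, your ``integer-sweeping'' constraint (iii) near $\mathcal{S}^{(2)}$ is not part of the buffer-radius optimization; in the regime $t\le t_R=\Theta(L)$ one has $|g'|=|\Phi_t'|/L=O(1)$, so $g'$ does not traverse integers at all. The need to keep the gaps below $2\pi-\lambda$ is what generates the upper cutoff $t_R\sim L/\max\{C^{(1)}_{\max},C_1\}$, not an additional multiplicative constraint on $\delta_r$. Second, your claimed optimum $t^{-1/(3\kappa_r)}$ for $\mathcal{S}^{(2)}$ is not the true two-variable optimum of the buffer-vs-Kusmin trade-off (that optimum scales as $t^{-1/(\kappa_r+2)}$); $t^{-1/(3\kappa_r)}$ is simply the paper's (valid but non-optimal) choice, consistent with their reporting the lower bound $\gamma>1/(6R)$ rather than a tight scaling. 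Similarly your sharper $\mathcal{S}^{(1)}$ balance $\delta_r\sim t^{-1/(\kappa_r+1)}$ is correct but not what the paper uses, and after taking the worst over $\mathcal{S}$ both routes produce the stated $\gamma=1/(3\kappa_0)$. With these calibrations in place, your outline would reproduce the paper's proof.
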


This theorem will for example allow us to bound
\begin{equation}\label{eq:W}
|G_{x,y}(t)|=\frac{1}{L} \abs{\sum_{k=1}^{L}e^{i \omega_kt+2\pi i kd/L }}\le C_\# t^{-\alpha}
\end{equation}
or
\begin{equation}
 |f_n(t)|= \frac{1}{L}\abs{\sum_{s=1}^\N e^{\irm( \omega_{(s+n)} - \omega_s) t +2\pi\irm s(x-y-d)/\N}}\le C_\#(\tfrac{n\pi}{L}) t^{-\alpha}\ ,
\end{equation}
as a function of time and with constants expressed in the analytic properties of $\Phi_t$.
The following lemma attributed to Kusmin and Landau \cite{Mordell58} will be our key tool.
\begin{lemma}[Kusmin-Landau bound]
\label{Kusmin}
Suppose $(\varphi_n)_{n\in\{1,\dots,N\}}$ are real numbers and suppose the gaps $\delta_{n}=(\varphi_{n+1}-\varphi_{n})$ for $n\in\{1,\dots,N-1\}$  are (i) increasing $\delta_{n}\geq \delta_{n-1}$ and (ii) each gap satisfies $\delta_n \in [\lambda, 2\pi-\lambda]$ with $\lambda>0$.  Then we have
 \begin{equation}\label{eq:Kuz}
  \abs{\sum_{n=1}^{N}e^{i \varphi_n}}\leq \cot(\lambda/4)\leq \frac{2\pi}{\lambda},
 \end{equation}
where the second inequality follows from $\cos(x)\leq 1$ and $\sin(x)\geq 2x/\pi$ for $x\in[0,\pi/2]$.
\end{lemma}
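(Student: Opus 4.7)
The plan is to prove the Kusmin-Landau bound via Abel summation, exploiting the two hypotheses directly: condition (ii) keeps $e^{i\delta_n}-1$ bounded away from zero so that division is safe, while condition (i) converts the resulting error terms into a telescoping sum through the monotonicity of $\cot$ on $(0,\pi)$.

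First I would use the identity
\begin{equation}
e^{i\varphi_n} = \frac{e^{i\varphi_{n+1}}-e^{i\varphi_n}}{e^{i\delta_n}-1}, \qquad n=1,\dots,N-1,
\end{equation}
which is well defined thanks to (ii). Writing $w_n := 1/(e^{i\delta_n}-1)$ and applying summation by parts,
\begin{equation}
\sum_{n=1}^{N}e^{i\varphi_n} \;=\; (1+w_{N-1})\,e^{i\varphi_N} \;-\; w_1\,e^{i\varphi_1} \;+\; \sum_{n=2}^{N-1}(w_{n-1}-w_n)\,e^{i\varphi_n}.
\end{equation}
The next step is a brief algebraic simplification. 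A short calculation using $e^{i\delta_n}-1 = 2i\sin(\delta_n/2)\,e^{i\delta_n/2}$ yields
\begin{equation}
w_n = -\frac{1}{2} - \frac{i}{2}\cot(\delta_n/2),
\end{equation}
so that $|1+w_{N-1}|=1/(2\sin(\delta_{N-1}/2))$, $|w_1|=1/(2\sin(\delta_1/2))$, and, crucially, the differences are purely imaginary:
\begin{equation}
w_{n-1}-w_n \;=\; \frac{i}{2}\bigl(\cot(\delta_n/2)-\cot(\delta_{n-1}/2)\bigr).
\end{equation}

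Here condition (i) enters. Since $\delta_n/2 \in [\lambda/2,\pi-\lambda/2]\subset(0,\pi)$ where $\cot$ is strictly decreasing, increasing $\delta_n$ forces $\cot(\delta_n/2)-\cot(\delta_{n-1}/2)\le 0$, so $|w_{n-1}-w_n|=\frac12(\cot(\delta_{n-1}/2)-\cot(\delta_n/2))$. Summing over $n=2,\dots,N-1$ telescopes to $\frac12(\cot(\delta_1/2)-\cot(\delta_{N-1}/2))$. Bounding term-by-term, condition (ii) gives $\sin(\delta_j/2)\ge\sin(\lambda/2)$ and $\cot(\delta_1/2)\le\cot(\lambda/2)$, $\cot(\delta_{N-1}/2)\ge\cot(\pi-\lambda/2)=-\cot(\lambda/2)$. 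Collecting:
\begin{equation}
\left|\sum_{n=1}^{N}e^{i\varphi_n}\right| \;\le\; \frac{1}{2\sin(\lambda/2)}+\frac{1}{2\sin(\lambda/2)}+\cot(\lambda/2) \;=\; \frac{1+\cos(\lambda/2)}{\sin(\lambda/2)}.
\end{equation}
Applying the half-angle identities $1+\cos(\lambda/2)=2\cos^2(\lambda/4)$ and $\sin(\lambda/2)=2\sin(\lambda/4)\cos(\lambda/4)$ collapses this to exactly $\cot(\lambda/4)$, giving the first inequality in \eqref{eq:Kuz}.

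For the second inequality, I would use the elementary bounds $\cos x\le 1$ and $\sin x\ge 2x/\pi$ for $x\in[0,\pi/2]$ at $x=\lambda/4$, giving $\cot(\lambda/4)\le \pi/(2\cdot\lambda/4)=2\pi/\lambda$. The main obstacle is really a bookkeeping one: one must identify the correct way to perform the summation by parts so that the ``boundary'' contributions $|1+w_{N-1}|$ and $|w_1|$ combine with the telescoping sum via the half-angle identity to produce the sharp constant $\cot(\lambda/4)$ rather than the looser $\cot(\lambda/2)$ one gets from a naive grouping; the choice above---absorbing $e^{i\varphi_N}$ into the $w_{N-1}$ term---is what makes this work.
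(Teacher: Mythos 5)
Your proof is correct. The paper itself does not prove this lemma; it states it and cites Mordell's article, so there is no in-paper argument to compare against --- but your Abel-summation derivation is precisely the classical proof of the Kusmin--Landau inequality, and every step checks out. In particular: the identity $e^{i\delta}-1=2i\sin(\delta/2)e^{i\delta/2}$ does give $w_n=-\tfrac12-\tfrac{i}{2}\cot(\delta_n/2)$, whence $|1+w_{N-1}|=1/(2\sin(\delta_{N-1}/2))$ and $|w_1|=1/(2\sin(\delta_1/2))$; the differences $w_{n-1}-w_n$ are purely imaginary so their absolute values telescope exactly (this is the step where monotonicity of the gaps is indispensable); and the three boundary/telescoped contributions combine as
\begin{equation}
\frac{1}{\sin(\lambda/2)}+\cot(\lambda/2)=\frac{1+\cos(\lambda/2)}{\sin(\lambda/2)}=\cot(\lambda/4),
\end{equation}
which is the sharp constant. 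The final elementary estimate $\cot(\lambda/4)\le 2\pi/\lambda$ is legitimate because condition (ii) forces $\lambda\le\pi$, so $\lambda/4\in(0,\pi/2]$. The only cosmetic remark is that your summation-by-parts identity implicitly assumes $N\ge 2$; for $N=1$ the claim is trivial since $\cot(\lambda/4)\ge\cot(\pi/4)=1$. You have supplied a complete and correct proof of a statement the paper only imports by citation.
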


To apply Lemma \ref{Kusmin}, we need to understand the discrete Kusmin gaps defined by
\begin{equation}
 \delta_k = \varphi_{k+1}-\varphi_k \ .
\end{equation}
and show that they are separated from $0$ and $2\pi$ by some $\lambda>0$ on a constant number of intervals where they are also monotonous.
Because $\varphi_k=\Phi_t(p_k)$ we can use the mean value theorem obtaining
\begin{equation}
\delta_k  = \frac{2\pi}{L}\Phi_t^{\prime}(\tilde p_k)
\label{eq:mvthm}
\end{equation}
for some $\tilde p_k\in [p_k,p_{k+1}]$ and ${2\pi}/L$ is the size of the interval to which we apply the theorem. 
We denote the summation domain by 
\begin{align}
\mathcal D = \{1,2,\ldots, L\}
\end{align}
and collect the corresponding $\tilde p_k$ points in 
\begin{align}\mathcal I = \{ \tilde p_k : k=1,\ldots,  L-1 \}\ .
\end{align}
Observe, that by the mean-value relation \eqref{eq:mvthm} the Kusmin gaps are monotonous on some $\mathcal I_r \subset \mathcal I$ if $\Phi'_t$ is monotonous on $\mathcal {\overline I}_r=\text{conv}(\mathcal I_r)$.
By using the mapping $k\mapsto \tilde p _k$, we can define intervals in $\mathcal D$ associated to any subset $\mathcal I_r \subseteq \mathcal I$ defining $\mathcal D_r = \{ k\in \mathcal D\ \mathrm{s.t.}\ \tilde p_k \in \mathcal I_r\}$.
We want to divide $\mathcal D$ into intervals where $\delta$ are monotonous.
After an elementary application of the triangle inequality, to each such region we want to apply the Kusmin-Landau bound.
It is important to notice that their number is bounded above by the maximal coupling range $R$ according to the following lemma.
\begin{lemma}[Roots]
The function $\Omega: [0,2\pi]\rightarrow \mathbb C$ defined by
\begin{align}
  \Omega(p) = a_0+ \sum_{z=1}^R (a_z e^{ipz}+b_ze^{-ipz})
\end{align}
where $a_j,b_j\in \mathbb C$ has at most $2R$ roots as long as $a\neq 0$ or $b\neq0$.
\end{lemma}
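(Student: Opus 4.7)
The plan is to reduce the statement to the Fundamental Theorem of Algebra via the substitution $w = e^{ip}$. Writing $\Omega$ in terms of $w$ gives
\begin{equation}
\Omega(p) = a_0 + \sum_{z=1}^R (a_z w^z + b_z w^{-z}),
\end{equation}
which is a Laurent polynomial in $w$. Multiplying by $w^R$ clears the negative powers and produces
\begin{equation}
P(w) := w^R \Omega(p) = \sum_{z=1}^R b_z w^{R-z} + a_0 w^R + \sum_{z=1}^R a_z w^{R+z},
\end{equation}
which is an ordinary polynomial in $w$ of degree at most $2R$. Since the map $p \mapsto e^{ip}$ sends $[0,2\pi)$ bijectively onto the unit circle $S^1 \subset \mathbb{C}$, and since $w^R \neq 0$ on $S^1$, the roots of $\Omega$ in $[0,2\pi)$ are in bijection with the roots of $P$ lying on $S^1$.

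Next I would verify that $P$ is not the zero polynomial whenever some coefficient of $\Omega$ is nonzero. The monomials $w^{R-z}$ for $z = 1,\ldots,R$, the monomial $w^R$, and the monomials $w^{R+z}$ for $z = 1,\ldots,R$ are all distinct, so the coefficients of $P$ are simply $(b_R, b_{R-1}, \ldots, b_1, a_0, a_1, \ldots, a_R)$. Hence $P \equiv 0$ if and only if every $a_j$ and every $b_j$ vanishes, i.e., $\Omega \equiv 0$, which is excluded by hypothesis. Thus $P$ is a nonzero polynomial of degree at most $2R$, and by the Fundamental Theorem of Algebra it has at most $2R$ roots in $\mathbb{C}$, a fortiori at most $2R$ roots on $S^1$. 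Transporting back via $w = e^{ip}$ yields at most $2R$ roots of $\Omega$ in $[0,2\pi)$.

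The only minor subtlety concerns the closed interval $[0,2\pi]$: if $\Omega(0) = 0$, then both $p=0$ and $p=2\pi$ appear as roots, which could give $2R+1$ counted naively. This is harmless for our purposes since one either works on the circle $\mathbb{R}/2\pi\mathbb{Z}$, or absorbs the single extra endpoint into the constant prefactor in Theorem~\ref{thm:phase_general}; the bound on the number of monotonicity intervals of $\Phi_t'$ is unaffected. I do not expect any serious obstacle in this proof — the whole point is that a trigonometric polynomial of degree $R$ is essentially an algebraic polynomial of degree $2R$ in the variable $w = e^{ip}$, and the standard FTA count applies. If anything, the minor care needed is simply to make the substitution rigorous and to note that the coefficient vector of $P$ is a permutation of the coefficient vector of $\Omega$, so nondegeneracy is preserved.
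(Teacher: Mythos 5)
Your proof is correct and follows the same strategy as the paper: substitute $w = e^{ip}$, clear negative powers by multiplying by $w^R$ to obtain a degree-$\le 2R$ polynomial (your $P$ is exactly the paper's $Y$), observe it is nonzero under the hypothesis, and invoke the fundamental theorem of algebra. The only addition is your explicit note about the $p=0$ versus $p=2\pi$ endpoint, which the paper handles implicitly by treating the phase on the circle; otherwise the arguments are identical.
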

\begin{proof}
Define a complex polynomial $Y:\mathbb C\rightarrow\mathbb C$ by $Y(u)=  a_0 u^R + \sum_{z=1}^R a_z u^{z+R}+ \sum_{z=1}^Rb_zu^{R-z}) $ 
and observe that it is not identically zero and  has degree at most $2R$ and hence at most $2R$ roots.
Further, note that when restricted to the unit circle $S_1$ in the complex plane we have $Y(e^{ip})=e^{iRp}\Omega(p)$ for any $p\in[0,2\pi]$.
From this we see that whenever $\Omega(p)=0$ for some $p\in[0,2\pi]$ then $u=e^{ip}$ is a root of $Y$ because the multiplicative prefactor $e^{ip}$ does not remove any roots.
Thus the number of roots of $\Omega$ cannot exceed the number of roots of $Y$ which is upper bounded by $2R$.
\end{proof}
\begin{corollary}[Number of roots of phase functions]
The phase function $\Phi_t$ and all its derivatives $\Phi_t^{\prime}$, $\Phi_t^{\prime\prime}$ etc. have at most $2R$ roots.
\end{corollary}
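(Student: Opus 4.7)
The corollary is an almost immediate consequence of the preceding Lemma on roots of $\Omega$. My plan is to verify that, for each derivative order $k\geq 1$, the function $\Phi_t^{(k)}$ can be rewritten in the exponential normal form $\Omega(p)=a_0+\sum_{z=1}^{R}(a_z e^{ipz}+b_z e^{-ipz})$, check that the coefficients do not all vanish, and then cite the Lemma to conclude at most $2R$ roots.

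Concretely, I expand each cosine via $\cos(zp+\alpha_z)=\tfrac12(e^{i\alpha_z}e^{izp}+e^{-i\alpha_z}e^{-izp})$ and differentiate term by term. For $k\geq 1$ this gives
\begin{equation}
\Phi_t^{(k)}(p)=d\,\delta_{k,1} + \frac{t}{2}\sum_{z=1}^{R} z^{k}\bigl( (iz/z)^{k} e^{i\alpha_z} e^{izp}+(-iz/z)^{k} e^{-i\alpha_z} e^{-izp}\bigr),
\end{equation}
which precisely has the form $\Omega(p)$ with $a_0=d\,\delta_{k,1}$ and $a_z, b_z$ proportional to $t z^{k} J_z$. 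Since by hypothesis $J_z\neq 0$ for at least one $z\in\{1,\dots,R\}$ and $z^k\neq 0$, at least one of the coefficients $a_z,b_z$ is nonzero, so the hypothesis ``$a\neq 0$ or $b\neq 0$'' of the preceding Lemma is satisfied. The Lemma then directly yields $|\{p\in[0,2\pi]:\Phi_t^{(k)}(p)=0\}|\leq 2R$ for every $k\geq 1$, handling in one stroke the derivative cases $\Phi_t',\Phi_t'',\ldots$ that furnish $\mathcal{S}^{(1)}$ and $\mathcal{S}^{(2)}$.

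For $\Phi_t$ itself the argument is subtler because of the non-periodic linear contribution $dp$, which prevents the expression from being captured by $\Omega$ on the fundamental domain. I would handle this via Rolle's theorem applied to the already-bounded root count of $\Phi_t'$: between any two distinct zeros of $\Phi_t$ there must be a zero of $\Phi_t'$, and since the latter admits at most $2R$ roots in $[0,2\pi]$, the former admits at most $2R+1$ roots on the same interval. This gives a $2R$ bound up to an additive constant, which is what is actually needed further on (the count governs the number of stationary-phase regions, and an additive constant is harmless inside the prefactor $C_\#$).

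There is essentially no serious obstacle here, since the hard work has already been done in the Roots Lemma; the only point requiring care is the nonvanishing of the expansion coefficients, which I would verify by noting that the coefficients of $e^{\pm izp}$ in $\Phi_t^{(k)}$ inherit the nonzero $J_z$ from the assumption on $\Phi$, modulated by a nonvanishing power $z^k$. In particular the polynomial $Y$ built in the proof of the previous Lemma remains nontrivial, so the root-counting bound of degree $2R$ carries over without modification.
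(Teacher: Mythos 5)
Your argument matches what the paper intends: apply the Roots Lemma to each derivative by expanding the cosines into exponentials. The only real care needed is, as you note, verifying that not all of the $a_z,b_z$ vanish; this is immediate from $J\neq 0$ together with $z^k\neq 0$. So for every $\Phi_t^{(k)}$ with $k\geq 1$ (and in fact for $\Phi_t$ itself when $d=0$, since then it is already of the form $\Omega$ with $a_0=0$), the $2R$ bound follows directly. This part of your proof is correct and is essentially the paper's implicit argument.

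Your flag on $\Phi_t$ itself when $d\neq 0$ is a genuine catch: the $dp$ contribution is not $2\pi$-periodic, so $\Phi_t$ does not lie in the class $\Omega$ on the fundamental domain, and the Lemma does not apply to it directly. Rolle's theorem applied to the $\leq 2R$ roots of $\Phi_t'$ gives at most $2R+1$ roots for $\Phi_t$, not $2R$, so the Corollary is slightly overstated for the undifferentiated $\Phi_t$. As you also correctly observe, this discrepancy is harmless: in the proof of Theorem~\ref{thm:phase_general} only the roots of $\Phi_t'$ and $\Phi_t''$ (defining $\mathcal S^{(1)}$ and $\mathcal S^{(2)}$) are used; the root count of $\Phi_t$ itself never enters the analysis. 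So your proposal is correct, is essentially the paper's route for the cases that matter, and is in fact more careful than the Corollary's phrasing on the one edge case.
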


As we can easily check without loss of generality we can assume that $\Phi_t'(0)=\Phi_t'(2\pi)=0$ and hence the interior between consecutive extremal points in $\mathcal S =\mathcal S^{(1)}\cup \mathcal S^{(2)}$ defines  at most $4R+2$ intervals where $\Phi_t'(p)$ and $\Phi_t^{\prime\prime}(p)$  have a fixed sign. 
Specifically, we define these intervals as the points in $\mathcal I$ that lie between two consecutive roots from  $\mathcal S$ and denote them by $\mathcal I_r\subset \mathcal I$ and note that $r$ ranges from $1$ to some $R_0\le4R+2$.
If, e.g., $\Phi_t(p)=\cos(p)$, then $R=1$ and we have $R_0=4$ regions. 
We now establish condition {\it i)} of the Kusmin-Landau Lemma which concerns monotonicity.
\begin{lemma}[Monotonicity]
Let $r<r_2$ be two consecutive points belonging to $\mathcal S$.
Then the Kusmin-Landau gaps $\delta_k$ are monotonous for all points $k$ corresponding to the interval $\mathcal I_r = \mathcal I \cap [r,r_2]$.
\end{lemma}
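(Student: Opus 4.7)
The plan is to reduce monotonicity of the discrete gap sequence $(\delta_k)$ to monotonicity of the continuous group velocity $\Phi_t'$ on the interval $[r,r_2]$, which in turn follows from the sign-definiteness of $\Phi_t''$ on that interval.

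First, I recall the mean value theorem representation $\delta_k = (2\pi/L)\Phi_t'(\tilde p_k)$ with $\tilde p_k \in [p_k,p_{k+1}]$, already established in Eq.~\eqref{eq:mvthm}. Observe that because $p_{k+1} < p_{k+2}$ and $\tilde p_k \le p_{k+1} \le \tilde p_{k+1}$, the sequence of evaluation points is monotonically increasing: $\tilde p_1 < \tilde p_2 < \cdots$. Hence the map $k \mapsto \tilde p_k$ preserves order, and the sign of $\delta_{k+1}-\delta_k$ is the same as the sign of $\Phi_t'(\tilde p_{k+1})-\Phi_t'(\tilde p_k)$ with $\tilde p_k < \tilde p_{k+1}$.

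Next, I use that $r$ and $r_2$ are consecutive elements of $\mathcal S = \mathcal S^{(1)}\cup\mathcal S^{(2)}$. In particular, the open interval $(r,r_2)$ contains no element of $\mathcal S^{(2)}$, i.e.\ $\Phi_t''$ has no root there. Since $\Phi_t''$ is continuous, it must therefore have a fixed sign throughout $(r,r_2)$. By the fundamental theorem of calculus this means that $\Phi_t'$ is strictly monotone on the closed interval $[r,r_2]$ (increasing if $\Phi_t''>0$ there, decreasing if $\Phi_t''<0$).

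Finally, all points $\tilde p_k$ with $k\in \mathcal D_r$ lie in $\overline{\mathcal I_r}\subseteq[r,r_2]$ by construction. Combining the order-preserving embedding $k\mapsto \tilde p_k$ with the monotonicity of $\Phi_t'$ on $[r,r_2]$ yields that $k\mapsto \delta_k = (2\pi/L)\Phi_t'(\tilde p_k)$ is monotone on $\mathcal D_r$, which is the desired conclusion. The only genuinely delicate point is the ordering of the $\tilde p_k$, but since consecutive MVT-intervals $[p_k,p_{k+1}]$ and $[p_{k+1},p_{k+2}]$ overlap only at the endpoint $p_{k+1}$, this orderedness is automatic and requires no further argument.
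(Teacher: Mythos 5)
Your proof is correct and follows essentially the same route as the paper: both arguments observe that $\Phi_t''$ cannot vanish on the interior $(r,r_2)$ since $r,r_2\in\mathcal S$ are consecutive, conclude that $\Phi_t'$ is monotone there, and then transfer this to the discrete gaps via Eq.~\eqref{eq:mvthm}. Your write-up is slightly more careful than the paper's in that it explicitly verifies the ordering $\tilde p_k \le \tilde p_{k+1}$ of the mean-value points (a step the paper leaves implicit), which is indeed needed to conclude; the only cosmetic imprecision is that strict inequality $\tilde p_k < \tilde p_{k+1}$ need not hold, but only weak monotonicity of $(\delta_k)$ is required for the Kusmin-Landau lemma, so this has no bearing on the validity of the argument.
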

\begin{proof}
Between $r$ and $r_2$ the first derivative of the phase function $\Phi'_t$ must be non-zero or otherwise there would be an intermediate root which is not possible as $r$ and $r_2$ are consecutive.
There is also no intermediate root of the second derivative $\Phi''_t$ so it must have a fixed sign on the interior of the interval hence the derivative $\Phi'_t$ is either weakly increasing or decreasing and so the Kusmin-Landau gaps $\delta_k$ must be monotonous.
\end{proof}

It will be useful to observe that any $\kappa_r$ can be bounded by the range $R$.
\begin{lemma}[Bounds from the range]
We have $\kappa_0\le 2R$.
\end{lemma}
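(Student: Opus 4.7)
The claim will follow from a multiplicity-refined version of the ``Roots'' lemma stated earlier in the appendix. As a first step, I would note that
$\Phi_t^{(k)}(p) = d\,\delta_{k,1} + t\sum_{z=1}^{R} z^k J_z \cos(zp+\alpha_z+k\pi/2)$
for every $k\ge 1$, so each derivative of the phase function is a trigonometric polynomial of degree at most $R$ (plus a constant when $k=1$). Since $J_R \ne 0$ by the definition of the range, the coefficient at frequency $R$ in $\Phi_t^{(k)}$ is nonzero, so in particular $\Phi_t'$ and $\Phi_t''$ are not identically zero, and they fit the form of $\Omega$ used in the Roots lemma.

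The core step is to upgrade the Roots lemma so that it bounds the \emph{multiplicity}-counted zero set of such an $\Omega$ by $2R$, rather than just the number of distinct zeros. I would use the same substitution $u = e^{ip}$ as in the proof of that lemma to write $\Omega(p) = e^{-iRp}\,Y(u)$, where $Y$ is a complex polynomial of degree at most $2R$. Because the prefactor $e^{-iRp}$ never vanishes, and because $p\mapsto e^{ip}$ is locally biholomorphic near every real $p_0$ (its derivative $iu_0$ is nonzero, as $u_0 := e^{ip_0}$ lies on the unit circle), a zero of order $m$ of $\Omega$ at $p_0$ corresponds exactly to a zero of order $m$ of $Y$ at $u_0$. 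Since the total multiplicity-counted zero set of the nonzero polynomial $Y$ is bounded by $\deg Y \le 2R$, any single zero of $\Omega$ has order at most $2R$. The only genuinely non-cosmetic point here is verifying that orders of zeros transfer cleanly under the change of variable, which reduces to comparing Taylor expansions and using $u_0 \ne 0$.

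To conclude, I would translate the refined bound into a statement about $\kappa_r$. By the very definition of $\kappa_r$, for $r\in\mathcal S^{(a)}$ with $a\in\{1,2\}$ the function $\Phi_t^{(a)}$ and its next $\kappa_r - 1$ derivatives vanish at $r$, while $\Phi_t^{(a+\kappa_r)}(r) \ne 0$. Equivalently, $r$ is a zero of the trigonometric polynomial $\Phi_t^{(a)}$ of order exactly $\kappa_r$. Applying the refined Roots lemma to $\Phi_t^{(a)}$ gives $\kappa_r \le 2R$ for every such $r$, and taking the maximum over $\mathcal S = \mathcal S^{(1)}\cup \mathcal S^{(2)}$ yields $\kappa_0 = \max_{r\in\mathcal S}\kappa_r \le 2R$, as claimed.
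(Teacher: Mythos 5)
Your proof is correct and follows essentially the same approach as the paper: both pass to the auxiliary polynomial $Y$ of degree at most $2R$ associated with $\Phi_t'$ or $\Phi_t''$ and argue that a nonzero polynomial of that degree cannot have a zero of order exceeding $2R$, which bounds $\kappa_r$. You are somewhat more explicit than the paper in verifying that the order of a zero transfers cleanly between $\Omega$ and $Y$ under the substitution $u=e^{ip}$ (the paper's Taylor-expansion argument implicitly uses this), but the underlying idea is the same.
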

\begin{proof}Consider again the non-zero polynomial $Y$ associated to $\Phi_t'$ or $\Phi_t''$ as described above.
Then $Y$  has degree at most $\text{deg}(Y)\le 2R$ and because $J\neq0$ we have that $\Phi_t\neq\texttt{const}$ and so $Y\neq0$.
Now, if we had that  $Y(z_0)=Y'(z_0)=\ldots=Y^{(2R)}(z_0)=0$ then, for any $z$ by Taylor expansion, we would find 
\begin{align}
Y(z)=\sum_{n=0}^{2R} \frac{Y^{(n)}(z_0)}{n!}(z-z_0)^n=0.
\end{align}
Thus, for $Y(e^{ip})\neq0$ to be true at some point $e^{ip}$ then $Y^{(n)}(z_0)\neq 0$ must be true for some $n\le 2R$.
\end{proof}
With this definition we can further set the constants
\begin{align}
  t_0 :=\max\left\{ 1, \max_{r\in\mathcal S^{(2)}} \abs{\frac{1}{\kappa_r+1} \frac{ \Phi^{(\kappa_r+3)}(r)}{ \Phi^{(\kappa_r+2)}(r)}}^{3\kappa_r},
  \max_{r\in\mathcal S^{(1)}} \left|\frac{1}{\kappa_r+1}\frac{C^{(\kappa_r+2)}_\text{max}}{\Phi^{(\kappa_r+1)}(r)}\right|^{3\kappa_r}, \left(\frac{C_0+1}{\min_{q,r\in\mathcal S}|q-r|}\right)^{2R+2}\right\}
\end{align}
and
\begin{align}
t_\text{R} = \frac L {4 \max  \{ C^{(1)}_\text{max}, C_1\}}.
\label{eq:t_rec_Phi}
\end{align} 
This quantity is finite and independent of $L$ by the above remark and definition of $\kappa_r$, and because the numerator can be upper bounded by
\begin{align}
C^{(\kappa)}_\text{max}= \sum_{z=1}^R z^{\kappa} |J_z|\le R^{\kappa+1} \max_{z} |J_z|\ .
\end{align}
Furthermore, we define the time-independent constant
\begin{align}
  C_0 = \min_{r\in\mathcal S^{(2)}} \frac { \abs{ \Phi^{(\kappa_r+2)}(r)}}{ 2C^{(3)}_\text{max} k_r!}\ .
  \label{eq:const_0}
\end{align}
We will now show that, after removing a small amount of points close to the border from each of the intervals $\mathcal I_r$, for the remaining points the Kusmin gaps will be lower and upper bounded.
More precisely we define the two scalings that we shall use
\begin{align}
p_t=C_0t^{-1/3} \text{ and } q_t=t^{-1/(3\kappa_r)}\ .
\end{align}

\begin{proof}[Proof of Theorem \ref{thm:phase_general}]
Let us use the elementary observation that
\begin{align}
  \frac 1L \abs{\sum_{k=1}^L e^{i\varphi_k}} =
  \frac 1L \abs{\sum_{k=1}^L e^{i\varphi_{k+a}}}
\end{align}
for any $a$ together with the fact that our phase function is always periodic up to a constant
\begin{align}
  \Phi_t(p-r)=\Phi_t(p)-dr\ .
\end{align}
Observing that in the absolute value of the total sum any constant term in $\Phi_t$ drops out, we may assume that $\Phi_t'(0)=0$ without loss of generality.
Then using that the cosine functions are $2\pi$ periodic we also find that $\Phi_t'(2\pi)=0$.
With this step we reduced the total sum to a sum over the intervals $\mathcal I_r$ where the boundary points are appropriate roots.

Consider $r\in\mathcal S$ and the corresponding interval $\mathcal I_r$.
Without loss of generality we may assume that $\Phi_t'(r)<\Phi_t'(r_2)$ and hence our task is to lower bound the Kusmin gaps around $r$.
(If on $\mathcal I_r$ the gaps are negative then we can simply lower bound $\Phi_t'^{(-)}=-\Phi_t'$, while in the case $\Phi_t'(r)>\Phi_t'(r_2)$ we would have to lower bound the Kusmin gaps around $r_2$ which can be done the same way).
By the monotonicity lemma, this assumption implies $\Phi_t''>0$ on $\mathcal I_r$.

{\bf Step 1:} Restrict $\mathcal I_r$ to $\mathcal I_r\cap \mathcal S_t^c$ where

\begin{align}
  \mathcal S_t^c =[0,2\pi) \backslash \{ q\in [0,2\pi)\text{ s.t. } |r-q|\le p_t+q_t \text{ for all } r\in \mathcal S\}
\end{align}
such that $\delta_k \ge \lambda$ for
\begin{align}
  \lambda = \frac{2\pi C_1}L t^{1/3}\ 
  \label{eq:lambda_low}
\end{align}
and
\begin{align}
  C_1 = \tfrac 14\min\left\{\min_{r\in\mathcal S^{(1)}} \frac{\abs{\Phi^{(\kappa_r+1)}(r)}}{\kappa_r!},\min_{r\in\mathcal S^{(2)}}\frac{ \abs{\Phi^{(\kappa_r+2)}(r)}}{\kappa_r!}C_0\right\}\ .
\end{align}

{\bf Step 1, case 1:} $r\in \mathcal S^{(1)}$.\\
In this step, we expand around $r$, to obtain
\begin{align}\label{eq:gap_bound1S1}
  \Phi_t'(r+q_t) & = \frac{ \Phi_t^{(\kappa_r+1)}(r)}{\kappa_r!}q_t^{\kappa_r}+\frac{\Phi_t^{(\kappa_r+2)}(\tilde{q})}{(\kappa_r+1)!}q_t^{\kappa_r+1}
\end{align}
where  the Lagrange error term in the first line is evaluated at some $\tilde{q}\in[r,r+q_t]$.
We will show that for $t\ge t_0$ we have
\begin{align}
\Phi_t^{(\kappa_r+1)}(r)\ge \frac{\Phi_t^{(\kappa_r+2)}(\tilde{q})}{\kappa_r+1}q_t
  \label{eq:cond11}
\end{align}
which implies
\begin{align}
 \Phi_t'(r+q_t) &\ge \frac 12 \frac{\Phi_t^{(\kappa_r+1)}(r)}{\kappa_r!}q_t^{\kappa_r}\ .
 \label{eq:curv_low}
\end{align}
We have that $\Phi_t'(p)>0$ so by Eq. \eqref{eq:gap_bound1S1} we infer using \eqref{eq:cond11} that $\Phi_t^{(\kappa_r+1)}(r)>0$ and hence we have a non-trivial lower bound of the form
\begin{align}
  \Phi_t'(r+q_t) &\ge \lambda_r = \frac 12 \frac{\Phi_t^{(\kappa_r+1)}(r)}{\kappa_r!} t^{-1/3}  = \frac 12 \frac{\Phi^{(\kappa_r+1)}(r)}{\kappa_r!}t^{2/3}
  \label{eq:lambda_1}
\end{align}
and observe that ${2\pi}\lambda_r /L\ge \lambda$.
It thus remains to show \eqref{eq:cond11} which follows easily noticing that we can make $q_t$ sufficiently small using $t\ge t_0$.
This condition is implied by finding that 
\begin{align}
  \Phi^{(\kappa_r+1)}(r)\ge \frac{C^{(\kappa_r+2)}_\text{max}}{\kappa_r+1} t^{-1/(3\kappa_r)}
\end{align}
which is equivalent to
\begin{align}
  t \ge \left(\frac1{\kappa_r+1}\frac{C^{(\kappa_r+2)}_\text{max}}{\Phi^{(\kappa_r+1)}(r)}\right)^{3\kappa_r}.
\end{align}
This can be shown to be true by invoking the definition of $t_0$.

{\bf Step 1, case 2:} $r\in \mathcal S^{(2)}$.\\
Expanding around $r+q_t$ we obtain
\begin{align}\label{eq:gap_bound}
  \Phi_t'(r+q_t+p_t) & = \Phi_t'(r+q_t) + \Phi_t'^{\prime}(r+q_t)p_t+\tfrac{1}{2}\Phi_t'^{\prime\prime}(\tilde{q})p_t^2
\end{align}
where  the Lagrange error term in the first line is evaluated at some $\tilde{q}\in[r+q_t,r+q_t+p_t]$.
Note that we choose $q_t$ and $p_t$ small enough such that there is no repeated roots at this step.
Because $\Phi_t'(r+q_t)\ge0$
\begin{align}
\Phi_t'(r+q_t+p_t) &\ge \Phi_t'(r+p_t+q_t) - \Phi_t'(r+q_t)\\
&\ge \Phi_t''(r+q_t)p_t+\tfrac{1}{2}\Phi_t'''(\tilde{q})p_t^2\ .
\end{align}
We will show below that
\begin{align}
  \Phi_t''(r+q_t)\ge\Phi_t'''(\tilde{q})p_t
  \label{eq:cond1}
\end{align}
which directly implies
\begin{align}
 \Phi_t'(r+q_t+p_t) &\ge \tfrac 12 \Phi_t^{\prime\prime}(r+q_t)p_t\ .
 \label{eq:curv_low}
\end{align}
We next continue to expand  $\Phi_t''(r+q_t)$ around  $r$ using the Taylor expansion
\begin{equation}
  \Phi_t''(r+q_t)  = \frac{\Phi_t^{(\kappa_r+2)}(r)}{\kappa_r!}q_t^{\kappa_r} +\frac 12\frac{\Phi_t^{(\kappa_r+3)}(\tilde{q})}{(\kappa_r+1)!}q_t^{\kappa_r+1}
  \label{eq:cond22}
  \end{equation}
where the last term is the Lagrange error term, so $\tilde{q}\in[r,r+q_t]$ and $\kappa_r\ge1$ was defined above. 
We check that $q_t$ is sufficiently small such that $\abs{\Phi_t^{(\kappa_r+2)}(r)} \ge \abs{\frac{\Phi_t^{(\kappa_r+3)}(r)}{\kappa_r+1}q_t}$.
Indeed, using $t\ge t_0$ leads to
\begin{align}
q_t^{-1}=t^{1/(3\kappa_r)} \ge t_0^{1/(3\kappa_r)} \ge \frac1{\kappa_r+1}\abs{ \frac{ \Phi_t^{(\kappa_r+3)}(r)}{ \Phi_t^{(\kappa_r+2)}(r)}}
\end{align}
which after a simple rearrangement leads to that observation.
This in turn implies that
  \begin{align}\label{eq:cr}
  \Phi_t''(r+q_t)\geq \frac12\frac{\Phi_t^{(\kappa_r+2)}(r)}{\kappa_r!}q_t^{\kappa_r}\ .
  \end{align}
Note that this is a non-trivial bound as due to the monotonicity on $\mathcal I_r$ we have $\Phi_t''>0$ and so $\Phi_t^{(\kappa_r+2)}(r)$ cannot be negative because the other term on the right hand side of \eqref{eq:cond22} would be too small to make the whole right hand side positive.
We are now in the position to check that condition \eqref{eq:cond1} is satisfied which is implied by showing
\begin{align}
  \frac12\frac{\Phi^{(\kappa_r+2)}(r)}{\kappa_r!}q_t^{\kappa_r} &\ge C^{(3)}_\text{max} p_t,\\
  \frac12\frac{\Phi^{(\kappa_r+2)}(r)}{\kappa_r!}t^{-1/3} &\ge C^{(3)}_\text{max} C_0t^{-1/3},
\end{align}
which is equivalent to
\begin{align}
  C_0 \le \frac {{ \Phi^{(\kappa_r+2)}(r)}}{ 2\kappa_r!C^{(3)}_\text{max}}
\end{align}
again using $\Phi_t^{(\kappa_r+2)}(r)\ge0$ we find that this is true by comparing to the definition \eqref{eq:const_0}.
With this result we obtain the lower bound \eqref{eq:curv_low} and explicitly inserting the time dependence arrive at
\begin{align}
 \Phi_t'(r+q_t+p_t) &\ge \lambda_r = \frac 1{4\kappa_r!} \Phi_t^{(\kappa_r+2)}(r) C_0 t^{-2/3} = \frac 1{4\kappa_r!} \Phi^{(\kappa_r+2)}(r) C_0 t^{1/3}\ 
 \label{eq:final_Phiprime}
\end{align}
where again we find $2\pi\lambda_r/L\ge \lambda$, as desired.

{\bf Step 1 summary:} Using \eqref{eq:mvthm} we obtain the following uniform bound lower bound $\delta_k\ge \lambda$ for $k\in \mathcal I_r \cap  \mathcal S_t^c$ and any $r\in \mathcal S$. 

{\bf Step 2:} Upper bound $|\delta_k| \le 2\pi -\lambda$.
We show this by the bound
\begin{align}
  |\delta_k| \le \frac{2\pi}L \max_{p\in[0,2\pi)} \abs{ \Phi_t'(p)}\le \frac{2\pi}L (t C^{(1)}_\text{max} +\abs{d})\ .
\end{align}
Note that we can always take $|d|\leq L/2$.  To see this, suppose that, e.g., $L>d=x-y>L/2$. 
Then we can replace $x$ by $x^{\prime}=x+L$, which does not affect the propagator, but now we have $|x^{\prime}-y|\leq L/2$. 
A similar trick works if $x-y<-L/2$. 
 So we can upper bound ${2\pi |d|}/L$ by $\pi$, and we 
 \begin{align}
  |\delta_k| \le \frac{2\pi t C^{(1)}_\text{max}}L +\pi.
\end{align}
Next, we make use of Eq.\ \eqref{eq:t_rec_Phi}
to see that
\begin{align}
\frac{2\pi t C^{(1)}_\text{max}}L+\lambda &\le \frac{2\pi t C^{(1)}_\text{max}}L+\frac{2\pi}LC_1 t \le \pi
\end{align}
 which implies
 \begin{align}
 \abs{\delta_k} \le 2\pi -\lambda\ .
 \end{align}
Hence for each $\mathcal I_r$ we can apply the Kusmin bound for times $t_0\leq t\leq t_R$.

{\bf Step 3:} Use the Kusmin-Landau lemma and obtain the final bound.\\
Summing up the discarded contribution and taking into account the bound on the number of the monotonous intervals we obtain the bound
\begin{align}
  \frac 1L \abs{\sum_{k=1}^L e^{i\varphi_k}}&\le (4R+2)\left[ \frac {p_t+q_t}{\pi} + C_1^{-1} t^{-1/3}\right]\le C_\# t^{-1/(3\kappa_0)}
\end{align}
where we have used that there are at most $4R+2$ Kusmin-Landau intervals that we restrict each at the edges by fewer than $ L 2(p_t+q_t)/\pi$ points and where the last term is the Kusmin-Landau bound.
Inspecting the definition of $C_1$ we find that
\begin{align}
  C_1^{-1} = 4\max\biggl\{\max_{r\in\mathcal S^{(1)}} \frac{\kappa_r!}{\abs{\Phi^{(\kappa_r+1)}(r)}},2C^{(3)}_\text{max}\max_{r\in\mathcal S^{(2)}}\frac{(\kappa_r!)^2}{ \abs{\Phi^{(\kappa_r+2)}(r)}}^2\biggr\}\ .
\end{align}
Here, we have put the absolute values such that the bound in this form remains valid for monotonously growing and decreasing intervals.
Hence, the constant $ C_\#  $ reads
\begin{align}
  C_\#  
&:= 6(2R+1)\max\left\{ 1,
\min_{r\in\mathcal S^{(2)}} \frac { \abs{ \Phi^{(\kappa_r+2)}(r)}}{ 2C^{(3)}_\text{max}\,\kappa_r!},
\max\limits_{r\in\mathcal S^{(2)}}\frac { 8\, (\kappa_r!)^2\, C^{(3)}_\text{max} } { \abs{\Phi^{(\kappa_r+2)}(r)}^2}, 
\max\limits_{r\in\mathcal S^{(1)}}\frac { 4\, \kappa_r!} { \abs{\Phi^{(\kappa_r+1)}(r)}} 
\right\}\ .
\label{eq:fin_C}
\end{align}

{\bf Generic case.} Let us finally remark on the generic case assuming there are no points for which $\Phi''(p)=\Phi'''(p)=0$.
For $r\in\mathcal S^{(1)}$  we can set $\kappa_r=1$ whenver $\Phi''(r)\neq0$.
If $\Phi'(r)=\Phi''(r)=0$ then in the generic case we will have $\Phi'''(r)\neq 0$ which would yield $\kappa_r=2$ but then our bound would be dominated by $q_t=t^{-1/6}$ which we can improve.
Instead expanding in $q_t$ we expand in $w_t=t^{-1/3}$ obtaining the equation 
\begin{align}\label{eq:generic_expansion}
  \Phi_t'(r+w_t) & = \frac{\Phi_t'''(r)}{2}w_t^2+\frac{\Phi_t^{(3)}(\tilde{q})}{6}w_t^{3} \ .
\end{align}
As only the expansion length has changed we would find along the same arguments the lower bound
\begin{align}\label{eq:gap_bound1}
  \Phi_t'(r+w_t) & \ge \lambda_r= \frac{ \Phi_t'''(r)}{4}w_t^2 = \frac{ \Phi_t'''(r)}{4} t^{1/3} \ .
\end{align}
Therefore we are removing $\sim t^{-1/3}$ points and $\lambda^{-1}\sim t^{-1/3}$ also so the terms contributed from this case will have the scaling $\sim t^{-1/3}$.
For $r\in\mathcal S^{(2)}$ we set $\kappa_r=1$ and directly get the lower bound \eqref{eq:lambda_low} also with the scaling $\sim t^{-1/3}$.

\end{proof}

In the main text, we have stated that $C_\#$ can in fact to be taken in a simpler form in the generic case where we have no points such that $\Phi_t''(r)=\Phi_t'''(r)=0$.
This means that $C_0\le 1$.
If $\kappa_r=1$ then nothing changes in our expansions.
For $r\in\mathcal S^{(1)}$ also $\kappa_r=2$ is possible.
In this case, inspecting Eq.~\eqref{eq:generic_expansion} we find that find that $\Phi_t'''(r)/4$ is the contribution to the $C_1$ constant instead of $\Phi''_t(r)/2$.
This means that altogether we can define 
\begin{align}
  M = \frac 14\min\left\{ \min\limits_{r\in\mathcal S^{(1)}} \abs{\Phi^{(\kappa_r+1)}(r)}, \min\limits_{r\in\mathcal S^{(2)}} \abs{\Phi^{(3)}(r)}^2 \right \}
\end{align}
which leads us to the simplified constant 
\begin{align}
  C_\# &= 6(2R+1)\max\left\{ 1,
\frac { 8C^{(3)}_\text{max}} {M^2} 
\right\}\ .
\end{align}.

Finally, it is worth mentioning that one can go beyond this setting by breaking up the gaps into those in the window $[\lambda,2\pi-\lambda]$ and those in the window $[2\pi+\lambda,4\pi-\lambda]$.  Then we can apply the Kusmin bound to terms in each window separately.  One can shift the gaps in $[2\pi+\lambda,4\pi-\lambda]$ by 
making the substitution $a_n\mapsto a_n-2\pi n$, which leads to $\delta_n\mapsto\delta_n - 2\pi$.  Because we have only shifted $a_n$ by multiples of $2\pi$, this does not affect the exponential sum.  After this shift, $\delta_n$ are in the interval $[\lambda,2\pi-\lambda]$, and we can apply the Kusmin bound.  This way, we would get bounds on equilibration valid for times after $t_R$.  One could continue this process further with windows $[2n\pi + \lambda,2(n+1)\pi - \lambda]$ for $n\in\mathbb{N}$, as long as the number of windows is small compared to $L$.
It would be interesting to see if this patch-working of the Kusmin-Landau method for long times breaks down at the Poincare recurrence time which is much longer than the finite size revival time.

\section{Quasi-free ergodicity}
\label{app:ergodicity}
For clarity we restate the theorem from the main text.
\begin{theorem}[Free fermionic ergodicity]\label{th:ergodicity:app}
Let $t\mapsto G(t)$ be the propagator for a non-interacting translation invariant fermionic Hamiltonian $\hat H(h)$ which is off-diagonal on the one-dimensional real-space lattice. 
Then for all times $t$ between a relaxation time $t_0 = O(1)$ up until a recurrence time $t_R = \Theta(L)$ the propagator obeys
\begin{align}
  |G_{x,y}(t)|\le C t^{-\gamma},
  \label{eq:ergodicity:app}
\end{align}
where $C,\gamma>0$ are constants.  We can take $\gamma=1/3$, provided there are no points $p$ such that $E^{\prime\prime}(p)= E^{\prime\prime\prime}(p)=0$  which is true for generic models.
\end{theorem}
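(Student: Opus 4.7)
The plan is to recognize the propagator $G_{x,y}(t)$ as precisely the kind of exponential sum already controlled by Theorem \ref{thm:phase_general}, and then apply that result with the appropriate dictionary.

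First, I would rewrite
\begin{equation}
G_{x,y}(t) = \frac{e^{iJ_0 t}}{L}\sum_{k=1}^L e^{i\Phi_t(p_k)},\quad
\Phi_t(p) = (x-y)p + 2t\sum_{z=1}^R J_z\cos(zp),
\end{equation}
using $\omega_k = E(p_k)$ and factoring out the constant $J_0$ piece of the dispersion. Since $|e^{iJ_0 t}|=1$, what remains is exactly the class of phase functions analyzed in Appendix \ref{app:Phi}, with $d = x-y$, hopping amplitudes $2J_z$, and vanishing phase shifts $\alpha_z = 0$.

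Next, I would translate the genericity hypothesis. The reduced phase $\Phi(p) = 2\sum_z J_z\cos(zp) = E(p) - J_0$ satisfies $\Phi^{(k)} = E^{(k)}$ for all $k \ge 1$, so the assumption that no $p$ satisfies $E''(p) = E'''(p) = 0$ is exactly the generic condition in Theorem \ref{thm:phase_general}: every $\kappa_r \le 2$ and the generic branch yields $\gamma = 1/3$. Invoking Theorem \ref{thm:phase_general} then gives $|G_{x,y}(t)| \le C_\# t^{-1/3}$ for $t \in [t_0, t_R]$, with $t_R = \Theta(L)$ supplied by Eq.~\eqref{eq:t_rec_Phi} using $C^{(1)}_\text{max} = 2\sum_z z|J_z|$.

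The main subtlety, and the step where I expect care is needed, is uniformity of the constants in the spatial separation $d = x-y$. The stationary-point locus $\mathcal S^{(1)}$ is determined by $E'(r) = -d/t$, so it moves continuously with $d/t$, whereas $C_\#$ depends on $|\Phi^{(\kappa_r+1)}(r)|$ and $|\Phi^{(\kappa_r+2)}(r)|$ evaluated at those shifting roots. To obtain a constant $C$ independent of $L$ and $(x,y)$, I would reduce $d$ modulo $L$ so that $|d|\le L/2$ and restrict to $t\le t_R = \Theta(L)$, which confines $|d/t|$ to a compact set determined by $\max_p |E'(p)|$. Combined with analyticity and periodicity of $E$, the genericity assumption then forces $|E''(r)|^2 + |E'''(r)|^2$ to be bounded below uniformly in $r\in[0,2\pi)$, so the worst-case $C_\#$ is an $L$- and $d$-independent constant $C$, as required by the theorem statement.
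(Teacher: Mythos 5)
Your proposal follows essentially the same route as the paper: identify the phase function so that $\varphi_k = \Phi_t(p_k)$ reproduces $\omega_k t + 2\pi k(x-y)/L$, then invoke Theorem \ref{thm:phase_general}. You are in fact a bit more careful than the paper's statement of the phase function: you correctly factor out the pure phase $e^{iJ_0 t}$ and write the coupling amplitudes as $2J_z$ (the paper's displayed $\Phi_t(p)= dp + t\sum_z J_z\cos(zp)+J_0$ drops a factor of two and misplaces $J_0$, though this has no effect on the argument). Your translation of genericity, $\Phi^{(k)}=E^{(k)}$ for $k\geq 1$, and the identification $\gamma=1/3$ are correct.

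The uniformity-in-$d$ concern you raise is real, and it is a good instinct to flag it since the paper handles it only implicitly ("$C_\#$\ldots being system size independent as the couplings are fixed"). However, the specific justification you offer does not quite work: taking $|d|\leq L/2$ and $t\leq t_R=\Theta(L)$ does not confine $|d/t|$ to a compact set, because $t$ may be as small as $t_0 = O(1)$, so $|d/t|$ can grow like $L$. The correct mechanism is slightly different. The roots $r\in\mathcal S^{(1)}$ are exactly the solutions of $E'(r) = -d/t$, so $\mathcal S^{(1)}$ is nonempty only when $|d/t| \leq \max_p|E'(p)|$; outside this window there are no stationary points of the first kind, and the constant $C_\#$ is governed solely by $\mathcal S^{(2)}$, which is $d$- and $t$-independent. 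Inside the window, the roots range over all of $[0,2\pi)$, and you must use that genericity plus compactness of the circle gives a uniform lower bound on $\max\{|E''(p)|,|E'''(p)|\}$, as you note. One further step is then needed (and is only gestured at in the paper's "generic case" refinement): when $E''(r)$ is nonzero but tiny, the Taylor expansion in Appendix~\ref{app:Phi} that sets $\kappa_r=1$ gives a large constant, so one must instead expand to third order and use the lower bound on $|E'''(r)|$. With that adjustment the worst-case $C_\#$ is indeed $L$- and $d$-independent. Overall your approach is correct and matches the paper's; the gap is only in the intermediate compactness claim, and the final conclusion is recoverable via the argument above.
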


\begin{proof}[Proof of theorem: Quasi-free ergodicity]
As was explained in the main text we need to formulate a phase function such that it evaluates to the phases of the propagator.
This is achieved by
\begin{align}
\Phi_t(p)= d p + t\sum_{z=1}^R J_z \cos(zp) +J_0
\end{align}
which evaluates to
\begin{align}
  \varphi_k =\Phi_t(p_k)= t\omega_k+2\pi d k / L
\end{align}
for $p={2\pi}/L$ and $d=x-y$.
By Theorem \ref{thm:phase_general}, we hence find the bound with $C_\#$ given in Eq.~\eqref{eq:fin_C} being system size independent as the couplings are fixed.
The relaxation and recurrence times $t_0$ and $t_R$ are given implicitly by the constraints in the proof of Theorem \ref{thm:phase_general}.
The generic behaviour of the exponent $\gamma = 1/({3\kappa_0})$ is obtained for $\kappa_0=1$ which is attained at the wavefront of the nearest-neighbour hopping model \cite{GKFGE16}.
\end{proof}

\section{Equilibration of the covariance matrix}
In this section we will bound the deviations of the time evolved second moments $\Gamma(t)$ from the equlibrium covariance matrix $\Gamma^\text{(eq)}$ defined in Eq.~\eqref{eq:eq} by a uniform real-space average of the local current densities.
\begin{proposition}[Equilibration of second moments]\label{th:3:app}
Consider a fermionic system with initially exponentially decaying correlations and non-resilient second moments $\Gamma$. Then there exist a constant relaxation time $t_0$ and a recurrence time $t_R = \Theta(L)$ such that, for all $t\in[t_0,t_R]$,
\begin{align}
  \abs{\Gamma_{x,y}(t)-{\Gamma}_{x,y}^{(\rm eq)}}\le C_\Gamma t^{-\gamma}
\end{align}
where $C_\Gamma,\gamma>0$ are constants.
\end{proposition}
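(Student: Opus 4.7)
The starting point is the band decomposition
\begin{align}
\Gamma_{x,y}(t)-\Gamma^{(\rm eq)}_{x,y} = \sum_{d=-\lfloor (L+1)/2\rfloor+1}^{\lfloor L/2\rfloor} \bigl(\Gamma^{(d)}_{x,y}(t)-I_d \,\delta_{x,y+d}\bigr).
\end{align}
I would split this sum into a short-range part $|d|\le d_\xi(t):=\xi\,\gamma\ln t$ and a long-range tail $|d|> d_\xi(t)$, and treat the two regimes separately. For the tail, Lemma~\ref{app:relevant_currents} gives $|\Gamma^{(d)}_{x,y}(t)|\le C_{\rm Clust}\,e^{-|d|/\xi}$, while the equilibrium values $I_d$ inherit the same exponential decay from $\Gamma(0)$ via \eqref{eq:clust}. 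A geometric summation from $|d|= d_\xi(t)$ to $\lfloor L/2\rfloor$ then yields an $L$-independent contribution bounded by $\tilde C\, t^{-\gamma}$, as claimed in the main text.

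For the short-range bands I would invoke
\begin{align}
\Gamma^{(d)}_{x,y}(t)-I_d\,\delta_{x,y+d}
= \sum_{n=1}^{L-1}\mathcal X^{(d)}_n\, e^{2\pi i n(x-d)/L}\, f_n(t),
\end{align}
which follows from \eqref{eq:equi_main}--\eqref{eq:f_n} upon isolating the $n=L$ Fourier mode, for which $\mathcal X^{(d)}_L=I_d$ and $f_L(t)=1$. I would then partition the $n$-sum according to whether $\pi n/L\in\mathcal R$ (resilient) or not. The non-resilience hypothesis (Definition~\ref{def:non-resilient}) gives $\sum_{\pi n/L\in\mathcal R}|\mathcal X^{(d)}_n|\le C_{\rm RS}\,L^{-1}$; combining with the trivial bound $|f_n(t)|\le 1$ and with $t\le t_R=\Theta(L)$ turns this into a contribution of order $L^{-1}=O(t^{-1})\le O(t^{-\gamma})$. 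For the complementary range $\pi n/L\notin\mathcal R$, the threshold $C_\#(\pi n/L)\le C_{\rm th}$ is $L$-independent, so Theorem~\ref{thm:phase_general} (applied to the phase function controlling $f_n$) delivers $|f_n(t)|\le C_{\rm th}\,t^{-\gamma}$ uniformly, while the second clause of Definition~\ref{def:non-resilient} bounds $\sum_{\pi n/L\notin\mathcal R}|\mathcal X^{(d)}_n|\le C_{\rm NRS}$. Hence each short-range band is bounded by some $C_\Gamma^{(d)}\,t^{-\gamma}$ with an $L$-independent constant.

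Finally, summing the short-range contributions over $|d|\le d_\xi(t)$ introduces an overall factor $O(\ln t)$, yielding a preliminary bound of order $\ln t\cdot t^{-\gamma}$. Choosing any $\tilde\gamma<\gamma$ and using that $\ln t\cdot t^{-\gamma}\le t^{-\tilde\gamma}$ throughout $[t_0,t_R]$ absorbs the logarithm; after renaming $\tilde\gamma\mapsto\gamma$, combining with the tail estimate and collecting constants gives $|\Gamma_{x,y}(t)-\Gamma^{(\rm eq)}_{x,y}|\le C_\Gamma\, t^{-\gamma}$ with $C_\Gamma$ independent of $L$. The main obstacle is bookkeeping: one must verify that \emph{every} constant entering the bound — from Kusmin--Landau via Theorem~\ref{thm:phase_general}, from exponential clustering, and from the Fourier decomposition of $\Gamma^{(d)}$ — remains system-size independent. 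This is precisely what the non-resilience condition and the clustering hypothesis are tailored to ensure, and the careful interplay between the truncation radius $d_\xi(t)$ and the resilient/non-resilient splitting of the momentum sum is the backbone of the argument.
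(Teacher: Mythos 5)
Your proposal follows essentially the same route as the paper's proof in Appendix~\ref{app:equi_cov}: band decomposition of $\Gamma(t)$ into currents $\Gamma^{(d)}(t)$, truncation at $d_\xi(t)\sim\xi\ln(t^{\gamma})$ using the unitarity of $G(t)$ plus exponential clustering (Lemma~\ref{app:relevant_currents}), a discrete Fourier expansion of each band with the $n=L$ mode identified as the equilibrium value $I_d$, the resilient/non-resilient split of the momentum sum invoking Definition~\ref{def:non-resilient} together with the Kusmin--Landau dephasing bound of Theorem~\ref{thm:phase_general} for $f_n(t)$, and a final absorption of the logarithmic factor from $d_\xi(t)$ into a slightly smaller exponent. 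This is correct and matches the paper's argument in both structure and detail.
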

\label{app:equi_cov}
\label{cov_eq}
\def\Gk{\Gamma^{(d)}}
\begin{proof}
Our goal is to bound how quickly $\Gamma_{x,y}(t)$, where $\Gamma(t) = G(t) \Gamma G(t)^\dagger$, relaxes  towards the equilibrium values.
First notice that these equal a real-space average where the value depends only on the separation $d=\min\{|x-y|,|x-y+L|,|x-y-L|\}$.
Let us define the decomposition of $\Gamma$ into its currents, that is  $\Gamma=\sum_{d=-\lfloor (L+1)/2\rfloor+1}^{\lfloor L/2\rfloor} \Gk$ with entries
\begin{align}
  \Gk_{x,y}= 
  \Gamma_{x,y} \delta_{x,y+d},
\end{align}
where we use the convention {$\delta_{a,b+\L}=\delta_{a,b}$}.
The evolution is linear, so 
\begin{align}
\Gamma(t)=\sum_{d=-\lfloor ( L-1)/2\rfloor+1}^{\lfloor L/2\rfloor} \Gk(t)\ ,
\end{align}
where we define
\begin{align}
  \Gk_{x,y}(t):=\left(G(t) \Gk G(t)^\dagger\right)_{x,y}&=\sum_{z,w}^\N G_{x,w}(t) \Gk_{w,z} G^*_{y,z}(t)\\
  &=\sum_{z,w=1}^\N G_{x,w}(t) \Gamma_{w,z}\delta_{w,z+d} G^*_{y,z}(t)\\
  &=\sum_{z=1}^{\N} G_{x,z+d}(t)  \Gamma_{z+d,z} G^*_{y,z}(t)\ .
  \label{eq:Gk_general}
  \end{align}
\def\cur{\mathcal{Y}}
\def\curk{\cur^{(d)}}
\def\curP{\mathcal{X}^{(d)}}
Our target equilibrium ensemble has matrix elements given by
\begin{align}
   \Gamma^{(\rm eq)}_{x,y}=\sum_{d=-\lfloor (L+1)/2\rfloor+1}^{\lfloor L/2\rfloor} I_d\delta_{x,y+d}\ .
\end{align}
where specifically the value equilibrium values read
\begin{align}
I_d = \frac {1}{L} \sum_x \Gamma_{x,x+d}\ .
\end{align}
If the initial covariance matrix is real then this is exactly the $d$-th current in the initial state.
Otherwise, one has to consider also the `complex' currents as discussed above. 
With these definitions, we obtain the bound 
\begin{align}
   \abs{\Gamma_{x,y}(t)-\Gamma^{(\rm eq)}_{x,y}}
  \le \sum_{d=-\lfloor (L+1)/2\rfloor+1}^{\lfloor L/2\rfloor}\abs{\Gk_{x,y}(t)- I_d\delta_{x,y+d}}\ 
\end{align}
by using the triangle inequality.
After these steps organizing the notation, we will present a first non-trivial bound showing that in the above sum only the currents with $d\le d_\xi(t)$ contribute significantly.
This is natural because of the exponentially decaying correlations so denoting the correlation length as $\xi$ it suffices to use Lemma~\ref{app:relevant_currents} with
\begin{align}
d_\xi(t) = \xi \ln( t^{1/(3\kappa_0)})\ 
\end{align}
where $\kappa_0$ is a positive constant which is indpendent of the system size and will be defined below.
Then the currents $d>d_\xi(t)$ will negligibly contribute to $\|\Gamma(t)-\Gamma^{(\rm eq)}\|_\text{max}$ for sufficiently large $t>t_0$.
So we consider $d\le d_\xi(t)$.
Now we expand $\Gk$ via the discrete Fourier transform
\begin{align}
  \Gamma_{z+d,z}=\sum_{n=1}^{\N} \curP_n e^{ 2\pi \irm n z/ \N}\ .
\end{align}
Then we have that
\begin{align}
  \Gk_{x,y}(t)&=\sum_{z=1}^\N G_{x,z+d}(t) G_{y,z}^*(t) \Gamma_{z+d,z}
  =\sum_{ n=1}^{\N} \curP_n \sum_{ z=1}^\N G_{x,z+d}(t) G_{y,z}^*(t) e^{ 2\pi\irm n z/ \N}\ .
\end{align}
Recall the definition of the propagator
\begin{equation}
 G_{x,y}(t)=\frac{1}{L}\sum_{k=1}^{L}\exp(\irm \omega_kt+2\pi \irm k(x-y)/L)\ 
\end{equation}
by which we get
\begin{align}
 \Gk_{x,y}(t)
  &= \frac1{\N^{2}}\sum_{ n=1}^{\N} \curP_n \sum_{r,s=1}^\N\sum_{ z=1}^\N e^{\irm \omega_r t+2\pi \irm r(x-z-d)/L }e^{-\irm \omega_s t-2\pi \irm s(y-z)/\L } e^{ 2\pi\irm n z/ \N}\\
  &= \frac 1{\N^{2}}\sum_{ n=1}^{\N} \curP_n \sum_{r,s=1}^\N e^{\irm( \omega_r- \omega_s) t +2\pi\irm(rx-sy-rd)/\N}\sum_{ z=1}^\N e^{2\pi \irm z(-r+s+n)/\L }\ .
\end{align}
Next, we use
\begin{equation}
 \sum_{ z=1}^\N e^{2\pi \irm z(-r+s+n)/\L }=\N \sum_{\mu\in \mathbb Z}\delta_{-r+s+n,\mu L}
\end{equation}
applying it to the sum over $r$ while summing over $s,n$.
Then we find that $-r+s+n=\mu L$ has solutions with either $\mu=0$ or $\mu=1$ but not at the same time because of the variable range $r,s,n\in[L]$.
Indeed, we always have  $2\le s+n \le 2L$ and so we have the unique solutions $r=s+n$ for $s+n\le L$ and $r=s+n-L$ for $s+n\ge L$.
Thus,  using $\omega_{k+\mu L}=\omega_k$ which follows by inspecting the definition in Eq.~\eqref{eq:omega} we get
\begin{align}
  \Gk_{x,y}(t)
  &= \frac1\N \sum_{ n=1}^{\N} \curP_n e^{2\pi\irm n(x-d)/\N}\ \sum_{s=1}^\N e^{\irm( \omega_{(s+n)} - \omega_s) t +2\pi\irm s(x-y-d)/\N}\\
  &= \sum_{ n=1}^{\N} \curP_n e^{2\pi\irm n(x-d)/\N} f_n(t)\ .
  \end{align}
In the last line, we have defined
\begin{equation}
 f_n(t):=\frac 1\N \sum_{s=1}^\N e^{\irm( \omega_{(s+n)} - \omega_s) t +2\pi\irm s(x-y-d)/\N}\ .
\end{equation}
The equilibrium  currents will be uniform so we need to bound
\begin{align}
  \abs{\Gk_{x,y}(t)-I_d\delta_{x,y+ d}}=\abs{\sum_{n=1}^{\N-1} \curP_n e^{ 2\pi \irm n z/ \N}f_n(t)}\ 
\end{align}
because
\begin{align}
  I_d=\curP_L\ .
\end{align}
As we have observed in the main text, we have
\begin{align}
 \omega_{(k+n)} -\omega_k  = \sum_{z=1}^R K_z\sin\left(\frac{2\pi zk}{L} + \frac{\pi n z }{L}\right)
\end{align}
with $K_z = -4 J_z \sin\left({\pi z n}/{L}\right)$.
In order to use the dephasing bound from theorem \ref{thm:phase_general}, we define
\begin{align}
  \Phi_t(p) = -4 t \sum_{z=1}^R J_z\sin(\alpha z)\sin\left(zp + z\alpha\right) +p(x-y-d)
  \label{eq:f_n_Phi_alpha}
\end{align}
and by evaluating with $\alpha = {\pi n}/L $ and $p_k = {2\pi k}/L $, we have
\begin{align}
 \varphi_k=\Phi_t(p_k) =   (\omega_{(k+n)} - \omega_k) t +2\pi\irm k(x-y-d)/\N\ .
\end{align}
We hence obtain the bound
\begin{align}
  \abs{f_n(t)}\le C_\#(\alpha) t^{-1/(3\kappa_0)}
\end{align}
where now $C_\#$ depends on the derivatives of \eqref{eq:f_n_Phi_alpha} evaluated at roots and we indicate the dependance on $\alpha$ as for $\alpha\approx 0$ the constant would not be system size independent.
As long as $K_z$ have no stray dependence on $L$ these values are constant numbers in the system size so we can scale up the system size and get a non-trivial bound. 
All this is secured by using  the assumption of non-resilient correlations which leads to
\begin{align}
  \abs{\Gk_{x,y}(t)-I_d\delta_{x,y+ d}}&=
  \sum_{\substack{ n=1\\ n\pi/L\in \mathcal R}}^{L-1}\abs{ \mathcal X^{(d)}_n}+ 
  \sum_{\substack{ n=1\\ n\pi/L\notin \mathcal R}}^{L-1}\abs{ \mathcal X^{(d)}_n} \abs{f_n(t)}\\
  &\le C_\text{RS} L^{-1}+C_\text{NRS} C_\text{th} t^{-1/(3\kappa_0)}\\
  &\le (C_\text{RS}+C_\text{NRS} C_\text{th}) t^{-1/(3\kappa_0)}
  \end{align}
where we  have used $L^{-1}\le t^{-1/(3\kappa_0)}$, which holds true for $t\le t_R =  \Theta(L)$.
We now finalize the total bound by
\begin{align}
  \abs{\Gamma_{x,y}(t)-\Gamma_{x,y}^{(\rm eq)}} &\le 2 d_\xi(t)  \max_{|d|\le d_\xi(t)}\abs{\Gamma_{x,y}^{(d)}(t)-\Gamma_{x,y}^{(\rm eq)}} + \frac{C_{\rm Clust}}{1+e^{-1/\xi}}t^{-1/{3\kappa_0}}\\
  &\le \tfrac 12 C_\Gamma \ln(t^{1/(3\kappa_0)})t^{-1/(3\kappa_0)} + \frac{C_{\rm Clust}}{1+e^{-1/\xi}}t^{-1/{3\kappa_0}}
  \end{align}
  where we have defined 
  \begin{align}
  C_\Gamma := \max\left \{ 4 \xi (C_\text{RS}+C_\text{NRS}C_\text{th}),\frac{2C_{\rm Clust}}{1+e^{-1/\xi}}\right \}.
    \end{align}
  Observe that $\kappa_0\le 2R$.
  Thus, for sufficiently large $t$ we have for some $\varepsilon>0$ the final bound
\begin{align}
\|\Gamma(t)-\Gamma^{(\rm eq)}\|_\text{max} =\max_{x,y}  \abs{\Gamma_{x,y}(t)-\Gamma_{x,y}^{(\rm eq)}} \le C_\Gamma t^{-1/(3\kappa_0)+\varepsilon}\ .
  \end{align}
  \end{proof}

\section{Examples of non-resilient second moments}
\label{app:non-resilient}

\subsection{$\mathbf{m}$-step periodic states}
Suppose $\Gamma_{z+d,z}$ is $m$-step periodic, with $\ell = L/m\in\mathbb{N}$, so that $\Gamma_{z+d+m,z+m}=\Gamma_{z+d,z}$.  We get
\begin{align}
 \curP_n & =\N^{-1}\sum_{z=0}^{\N-1}  \Gamma_{z+d,z} e^{ -2\pi \irm n z/ \N}\\
 & = \N^{-1}\sum_{u=0}^{m-1} \sum_{v=0}^{L/m-1} \Gamma_{u+vm+d,u+vm} e^{ -2\pi \irm n (u+vm)/ \N}\\
  & = \left(\frac{1}{m}\sum_{u=0}^{m-1} \Gamma_{u+d,u} e^{ -2\pi \irm n u/ \N}\right) \left(\frac{m}{L}\sum_{v=0}^{L/m-1} e^{ -2\pi \irm n vm/ \N}\right) \\
 & =\left(\frac{1}{m}\sum_{u=0}^{m-1} \Gamma_{u+d,u}\, e^{ -2\pi \irm n u \N}\right)\left(\sum_{\alpha=0}^{m-1}\delta_{n,  \alpha\ell}\right).
\end{align}
So all $\curP_n$ are vanishing, except those with with $n=\alpha\ell$, where $\alpha\in\{0,\dots,m-1\}$.

\subsection{Random dislocations}
Suppose $\Gk$ can be decomposed as $\Gk = \Gamma^{(d, NR)}+\Gamma^{(d, SR)} $ where $\Gamma^{(d, NR)}$ is non-resilient and $\Gamma^{(d, NR)}$ has sparse support.
Then $\Gk$ is again non-resilient.
This follows trivially as for the sparse part the Fourier transform is bounded by the inverse system size
\begin{align}
 |\curP_n| & \le\N^{-1}\sum_{z=0}^{\N-1}  |\Gamma^{(d, SR)}_{z+d,z}| \le \frac S \N
 \end{align}
 where $S = O(1)$ is the number of the sparse entries in $\Gamma^{(d, SR)}$.
\subsection{Uniformly random currents}
Take $\Gamma_{z+d,z}\in [a,b]$ to be uniformly and independently distributed. 
Then $\Gk$ is non-resilient.
Indeed, we find that on average, we have
\begin{align}
\mathbb E[ \curP_n] & =\frac 1\N\sum_{z=1}^{\N} \mathbb E[ \Gamma_{z+d,z}] e^{ -2\pi \irm n z/ \N}\\
 & =\frac {a+b}{2} \N^{-1}\sum_{z=0}^{\N-1}  e^{ -2\pi \irm n z/ \N}\\
 & = \frac {a+b}{2} \delta_{n,L}.
\end{align}
We furthermore calculate the second moment using 
\begin{align}
\mathbb E[\Gamma_{x,y}^2]=\frac{a^2+ab+b^2}3
\end{align}
to get
\begin{align}
\mathbb E[|\curP_n|^2] & =\frac 1{\N^2}\sum_{z,s=1}^{\N} \mathbb E[ \Gamma_{z+d,z}\Gamma_{s+d,s}] e^{ -2\pi \irm n (s-z)/ \N}
\\ & =\frac 1{\N^2}\sum_{\substack{z,s=1\\z\neq s}}^{\N} \mathbb E[ \Gamma_{z+d,z}\Gamma_{s+d,s}] e^{ -2\pi \irm n (s-z)/ \N}+
\frac 1{\N^2}\sum_{z=1}^{\N} \mathbb E[ \Gamma_{z+d,z}^2] 
\\
& =\frac{a^2+2ab+b^2}{4\N^2}\left(\sum_{\substack{z,s=1}}^{\N} e^{ -2\pi \irm n (s+z)/ \N} - L \right)+
\frac {a^2+ab+b^2}{3\N}
\\
&= \mathbb E[\curP_n]^2+\frac{(a-b)^2}{12 L} 
\end{align}
and hence the variance reads
\begin{align}
\text{Var}[\curP_n]=\frac{(a-b)^2}{12 L} .
\end{align}
By Chebyshev's inequality 
\begin{align}
  \mathbb P\left(  \abs{ \curP_n - \mathbb E[ \curP_n]} \ge K \right) \le\frac{\text{Var}[\curP_n]}{K^2}\end{align}
we obtain that 
\begin{align}
\abs{ \curP_n - \mathbb E[ \curP_n]} \le  K L^{-1/2}
\end{align}
with probability greater than $1- (CK)^{-2}$.
\subsection{Resilient example: $P=L/2$ - periodic block calculation}
Say $L$ is even and we have a state with 
\begin{align}
  \langle \fd_x\f_x\rangle = 
  \begin{cases}
  1,&x<L/2\\
  0,&x\ge L/2\ .
  \end{cases}
\end{align}
Then all currents with $d\neq0$ vanish and we should calculate the Fourier transform of the diagonal of the covariance matrix namely
\begin{align}
  \curP_n = \sum_{k=1}^{L/2-1} e^{2\pi i n k / L}
          =
            \begin{cases}
  \frac{ 1-e^{i\pi  n}}{1-e^{i\pi}} = 1,&n \text{ is odd } \\
  0,&n \text{ is even}\ .
  \end{cases}
\end{align}
Therefore there is no chance that we can get a non-trivial bound of the form
\begin{align}
\abs{ \Gamma_{x,y}(t) - \Gamma^{(\rm eq)}_{x,y}} \le \sum_{n=1}^L \abs{\curP_n} \abs{f_n(t)}
\end{align}
because it will scale with the system size as $\sim L  C_{\rm th} t^{-\gamma}/2$ due to the number of non-trivial harmonics.

\subsection{Details of quenches from disordered to translation invariant models}\label{app:thermal}

In this section, we provide more details on the discussion of the simularity of averaged generalized Gibbs ensembles with thermal ensembles.
In this context, it is useful to 
discuss the exact quench state $\Gamma^{(\text{Quench})}(t)$ and the infinite-time average $\Gamma^{(\infty)}$  on the level of quasiparticle occupation numbers in momentum space.
For the quenched state these stay constant for all times due to unitarity and we have checked that typically there are initially fluctuations around the idealized Fermi-Dirac distribution but a Fermi edge can be observed. 
However, in order to obtain the infinite time average we apply a dephasing channel which as it is not unitary does change the occupation numbers despite conserving the relevant local currents.
In that case we have noticed a much smoother quasiparticle number distribution, resembling much closer the thermal Fermi-Dirac distribution.
Additionally we noticed that for the same model with the noise uniformly distributed in $[0,w]$ the resulting equilibrium state is not thermal, but additionally considering a chemical potential leads to agreement.
These observations suggest that the equilibration process that we have discussed analytically leads to a thermal steady state for the particular selection of initial states discussed here which are thermal states of a Hamiltonian with the same tunnelling range as the quench Hamiltonian.
It appears to be an interesting question whether we can in general expect GGEs which are simply thermal steady-states in the natural case occurring in  many physical instances where the kinetic energy is an inherent property of the system over which we have little control and we prepare thermal initial states by controlling only the on-site potential by external forces.

A peculiarity stemming from quasi-free integrability is that it is enough to have access to only one translation invariant quench Hamiltonian to prepare thermal states of any other translation invariant model just by assigning the initial correlation content.
In particular being able to tune the correlation length is a crucial ingredient, but the initial correlations need not be translation invariant or even Gaussian, as we have discussed above.
Thanks to the Gaussification result, one can also use many-body interactions to tune close to a phase transition in order to increase the correlation length even if the state obtained will be non-Gaussian.
Indeed we have a proof of equilibration to a Gaussian state but now the steady state may acquire an unusually large correlation length for a thermal state of the quench Hamiltonian.
This ``one to rule them all'' result shows that the properties of the equilibrated state may be unrelated to the range of the dynamics, which is slightly at odds with the usual approach to inferring microscopic properties of various materials.
It would be interesting to see whether experiments measuring only conductivity or other linear response properties could be adversarily tricked  to indicate always different dynamical models by getting  different states as input while the true Hamiltonian is always merely the nearest neighbour model.
Such interactive experiments may be possible with existing quantum simulation technologies \cite{bloch2008many}.
On the other hand if precise microscopic measurements are limited, then observing just the fundamental qualitative properties such as the formation of the Fermi edge which determines solid state properties should be a generic feature independent of the memory effect due to integrability.
As there is only few trailblazing works concerning what happens to a GGE in the presence of weak interactions \cite{bertini2015prethermalization,moeckel2008interaction,moeckel2009real,moeckel2010crossover}, it would be exciting to study this systematically in optical lattices experimentally. 
\end{document}